\newcolumntype{C}[1]{>{\centering\arraybackslash}p{#1}}
\newcolumntype{L}[1]{>{\arraybackslash}p{#1}}
\def\aol{\rule[0.5865ex]{1.38ex}{0.1ex}}
\newcommand{\FH}{\hat{f}}
\newcommand{\GC}{\check{g}}
\newcommand{\FHS}{\hat{f}^{\,\sharp}}
\newcommand{\FCS}{\check{f}^{\,\sharp}}
\newcommand{\GHF}{\hat{g}^{\,\flat}}
\newcommand{\GCF}{\check{g}^{\,\flat}}
\newcommand{\mand}{\otimes}
\newcommand{\mor}{\oplus}
\newcommand{\MAND}{\,\hat{\otimes}\,}
\newcommand{\MRARR}{\,\,\check{\backslash}\,\,}
\newcommand{\MOR}{\,\check{\oplus}\,}
\newcommand{\MDLARR}{\,\hat{\varobslash}\,}
\newcommand{\aatop}{\ensuremath{\top}\xspace}
\newcommand{\abot}{\ensuremath{\bot}\xspace}
\newcommand{\aand}{\ensuremath{\wedge}\xspace}
\newcommand{\aor}{\ensuremath{\vee}\xspace}
\newcommand{\ararr}{\ensuremath{\rightarrow}\xspace}
\newcommand{\alarr}{\ensuremath{\leftarrow}\xspace}
\newcommand{\adrarr}{\ensuremath{\,{>\mkern-7mu\raisebox{-0.065ex}{\rule[0.5865ex]{1.38ex}{0.1ex}}}\,}\xspace}
\newcommand{\adlarr}{\ensuremath{\rotatebox[origin=c]{180}{$\,>\mkern-8mu\raisebox{-0.065ex}{\aol}\,$}}\xspace}
\newcommand{\ATOP}{\hat{\top}}
\newcommand{\ABOT}{\ensuremath{\check{\bot}}\xspace}
\newcommand{\AAND}{\ensuremath{\:\hat{\wedge}\:}\xspace}
\newcommand{\AOR}{\ensuremath{\:\check{\vee}\:}\xspace}
\newcommand{\ARARR}{\ensuremath{\:\check{\rightarrow}\:}\xspace}
\newcommand{\ALARR}{\ensuremath{\:\check{\leftarrow}\:}\xspace}
\newcommand{\ADRARR}{\ensuremath{\hat{{\:{>\mkern-7mu\raisebox{-0.065ex}{\rule[0.5865ex]{1.38ex}{0.1ex}}}\:}}}\xspace}
\newcommand{\ADLARR}{\ensuremath{\:\hat{\rotatebox[origin=c]{180}{$\,>\mkern-8mu\raisebox{-0.065ex}{\aol}\,$}}\:}\xspace}
\newcommand{\wbox}{\ensuremath{\Box}\xspace}
\newcommand{\wdia}{\ensuremath{\Diamond}\xspace}
\newcommand{\bbox}{\ensuremath{\blacksquare}\xspace}
\newcommand{\bdia}{\ensuremath{\Diamondblack}\xspace}
\newcommand{\WBOX}{\ensuremath{\check{\Box}}\xspace}
\newcommand{\WDIA}{\ensuremath{\hat{\Diamond}}\xspace}
\newcommand{\BBOX}{\ensuremath{\check{\blacksquare}}\xspace}
\newcommand{\BDIA}{\ensuremath{\hat{\Diamondblack}}\xspace}
\renewcommand{\epsilon}{\varepsilon}
\newcommand{\ox}{\overline{x}}
\newcommand{\oy}{\overline{y}}
\newcommand{\oz}{\overline{z}}
\newcommand{\blhd}{\blacktriangleleft}
\newcommand{\brhd}{\blacktriangleright}
\newcommand{\bba}{\mathbb{A}}
\newcommand{\bbA}{\mathbb{A}}
\newcommand{\bbL}{\mathbb{L}}
\newcommand{\AATOP}{\hat{\top}}
\newcommand{\marginnote}[1]{\marginpar{\raggedright\tiny{#1}}}
\tikzset{
	treenode/.style = {align=center, inner sep=0pt, text centered},
	Ske/.style = {treenode, ellipse, double, draw=black,
		minimum width=6pt, thick},
	PIA/.style = {treenode, ellipse, black, draw=black,
		minimum width=6pt},
	Crit/.style = {treenode, rectangle, draw=black,
		minimum width=0.5em, minimum height=0.5em}
}
\theoremstyle{plain}
\newtheorem{thm}{Theorem}[section]
\newtheorem{cor}[thm]{Corollary}
\newtheorem{prop}[thm]{Proposition}
\newtheorem{lemma}[thm]{Lemma}
\newtheorem{notation}[thm]{Notation}
\theoremstyle{definition}
\newtheorem{definition}[thm]{Definition}
\newtheorem{example}[thm]{Example}
\newtheorem{rem}[thm]{Remark}
\newtheorem{remark}[thm]{Remark}
\def\fCenter{{\mbox{$\ \vdash\ $}}}
\newcommand{\fns}{\footnotesize}
\newcommand{\mc}{\multicolumn}
\newcommand{\commment}[1]{}
\numberwithin{equation}{section}
\def\pdra{\mbox{$\,>\mkern-8mu\raisebox{-0.065ex}{\aol}\,$}}
\def\pdla{\mbox{\rotatebox[origin=c]{180}{$\,>\mkern-8mu\raisebox{-0.065ex}{\aol}\,$}}}
\newcommand{\pand}{\wedge}
\newcommand{\por}{\vee}
\newcommand{\pra}{\rightarrow}
\newcommand{\cfDLE}{\underline{D.LE}}
\newcommand{\cfDDLE}{\underline{D.DLE}}
\newcommand{\bp}{\textcolor{blue}{p}}
\newcommand{\bx}{\textcolor{blue}{x}}
\newcommand{\bu}{\textcolor{blue}{u}}
\newcommand{\bz}{\textcolor{blue}{z}}
\newcommand{\bi}{\textcolor{blue}{i}}
\newcommand{\bk}{\textcolor{blue}{k}}
\newcommand{\obp}{\overline{\textcolor{blue}{p}}}
\newcommand{\obx}{\overline{\textcolor{blue}{x}}}
\newcommand{\obz}{\overline{\textcolor{blue}{z}}}
\newcommand{\rrq}{\textcolor{red}{q}}
\newcommand{\ru}{\textcolor{red}{u}}
\newcommand{\ry}{\textcolor{red}{y}}
\newcommand{\rw}{\textcolor{red}{w}}
\newcommand{\rj}{\textcolor{red}{j}}
\newcommand{\rh}{\textcolor{red}{h}}
\newcommand{\orq}{\overline{\textcolor{red}{q}}}
\newcommand{\ory}{\overline{\textcolor{red}{y}}}
\newcommand{\orw}{\overline{\textcolor{red}{w}}}
\newcommand{\ba}{\textcolor{blue}{\alpha}}
\newcommand{\oba}{\overline{\textcolor{blue}{\alpha}}}
\newcommand{\bpsi}{\textcolor{blue}{\psi}}
\newcommand{\bgamma}{\textcolor{blue}{\gamma}}
\newcommand{\bS}{\textcolor{blue}{S}}
\newcommand{\bsigma}{\textcolor{blue}{\sigma}}
\newcommand{\orb}{\overline{\textcolor{red}{\beta}}}
\newcommand{\rxi}{\textcolor{red}{\xi}}
\newcommand{\rdelta}{\textcolor{red}{\delta}}
\newcommand{\rU}{\textcolor{red}{U}}
\newcommand{\rtau}{\textcolor{red}{\tau}}
\title{Syntactic completeness of proper display calculi}
\author[1,2]{Jinsheng Chen}
\author[2]{Giuseppe Greco}
\author[2,3]{Alessandra Palmigiano\thanks{The research of the second and third author has been funded in part by the NWO grant KIVI.2019.001.}}
\author[2]{Apostolos Tzimoulis}
\affil[1]{Department of Philosophy, Zhejiang University}
\affil[2]{SBE, Vrije Universiteit Amsterdam}
\affil[3]{Department of Mathematics and Applied Mathematics, University of Johannesburg, South Africa}
\date{}
\begin{document}

\maketitle

\begin{abstract}
A recent strand of research in structural proof theory aims at exploring the notion of {\em analytic calculi} (i.e.~those calculi that support general and modular proof-strategies for cut elimination), and at identifying classes of logics that can be captured in terms of these calculi.  In this context, Wansing introduced the notion of {\em proper display calculi} as one possible design framework for proof calculi in which the analiticity desiderata are realized in a particularly transparent way. Recently, the theory of {\em properly displayable} logics (i.e.~those logics that can be equivalently presented with some proper display calculus) has been developed in connection with generalized Sahlqvist theory (aka unified correspondence). Specifically, properly displayable logics  have been syntactically characterized as those axiomatized by {\em analytic inductive axioms}, which can be equivalently and algorithmically transformed into analytic structural rules so that the resulting proper display calculi enjoy a set of basic properties: soundness, completeness, conservativity, cut elimination and subformula property. In this context, the proof that the given calculus is {\em complete} w.r.t.~the original logic is usually carried out {\em syntactically}, i.e.~by showing that a (cut free) derivation exists of each given axiom of the logic in the basic system  to which the analytic structural rules algorithmically generated from the given axiom have been added. However, so far this proof strategy for {\em syntactic completeness} has been implemented on a case-by-case base, and not in general. In this paper, we address this gap by proving syntactic completeness for properly displayable logics in any normal (distributive) lattice expansion signature. Specifically, we show that for every analytic inductive axiom a cut free derivation can be effectively generated which has a specific shape, referred to as {\em pre-normal form}.\\
%
%

\noindent {\em Keywords:} Proper display calculi, properly displayable logics, unified correspondence, analytic inductive inequalities, lattice expansions. \\
{\em Math. Subject Class.} 03B35, 03B45, 03B47, 06D10, 06D50, 06E15, 03F03, 03F05, 03F07, 03G10, 03G10.
\end{abstract}

\tableofcontents
	
	\section{Introduction}
In recent years, research in structural proof theory has focused on {\em analytic calculi} \cite{negri2005proof,ciabattoni2008axioms,GMPTZ,Belnap,Wa98,Wan02}, understood as those calculi supporting a {\em robust} form of cut elimination, i.e.~one which is preserved by adding rules of a specific shape (the analytic rules). 
Important results on analytic calculi have been obtained 
in the context of various proof-theoretic formalisms:  (classes of) axioms have been  identified for which equivalent correspondences with analytic rules have been established algorithmically or semi-algorithmically. Without claiming to be exhaustive, we briefly review this strand of research as it has been developed in the context of \emph{sequent and labelled calculi} \cite{NegVonPla98,Neg03,negri2005proof}, \emph{sequent and hypersequent calculi} \cite{ciabattoni2008axioms,lahav2013frame,lellmann2014axioms}, and \emph{(proper) display calculi} \cite{Kracht,CiRa14,GMPTZ}).

In \cite{NegVonPla98},  a methodology is established, sometimes referred to as \emph{axioms-as-rules}, for transforming \emph{universal axioms} in the language of first order classical (or intuitionistic) logics into analytic sequent rules. As remarked in the same paper, this methodology has a precursor in \cite{Neg99} for the intuitionistic theories of apartness and order. The rules so generated are then used to expand the sequent calculus \textbf{G3c} for first order classical logic. In \cite{Neg03}, the axioms-as-rules methodology is generalized so as to capture the so-called \emph{geometric implications} in the language of first order classical logic, i.e.~formulas of the form $\forall \overline{z} (A \rightarrow B)$ where $A$ and $B$ are \emph{geometric formulas} (i.e.~first-order formulas not containing $\rightarrow$ or $\forall$). In \cite{negri2005proof},  the axioms-as-rules methodology is applied to capture various normal modal logic axioms via equivalent analytic labelled-calculi rules over the basic labelled calculus \textbf{G3K} for the modal logic K; moreover, following the standard methods as for the \textbf{G3}-style sequent calculi, the admissibility of cut, substitution and contraction is established. Although these calculi do not satisfy the full subformula property, decidability is established thanks to their enjoying the so-called \emph{subterm property} (requiring all the terms in minimal derivations to occur in the endsequent) and height-preserving admissibility of contraction. 

In \cite{ciabattoni2008axioms}, a \emph{hierarchy} (sometimes referred to as \emph{substructural hierarchy}) is defined of classes of substructural formulas,  
and it is shown how to translate substructural axioms 
up to level $\mathcal{N}_2$ of the hierarchy into equivalent rules of a \emph{Gentzen-style sequent calculus},  and axioms up to a subclass of  level $\mathcal{P}_3$ into equivalent rules of a \emph{hypersequent calculus}; the rules so generated are then transformed into equivalent analytic rules whenever they satisfy an additional condition or the base calculus admits weakening; cut elimination is proved via a semantic argument extending the semantic proof of \cite{Oka02} to hypersequent calculi (and in \cite{ciabattoni2012algebraic}, this approach is generalized  to multi-conclusions hypersequents, and  a heuristic is proposed to go beyond $\mathcal{P}_3$ axioms). In \cite{lahav2013frame}, $n$-simple formulas, a particularly well-behaved proper subset of geometric formulas  \cite{negri2005proof}, are identified, and  a method is introduced which transforms $n$-simple formulas into equivalent hypersequent rules for a variety of normal modal logics extending the modal logics \textbf{K}, \textbf{K4}, or \textbf{KB}; cut admissibility is proved for $n$-simple extensions of \textbf{K} and \textbf{K4}, and decidability (via standard sub-formula property) is established for $n$-simple extensions of \textbf{KB}. In \cite{lellmann2014axioms},  the format of hypersequent rules with \emph{context restrictions} is introduced, and transformations are studied between rules and modal axioms on a classical or intuitionistic base; decidability and complexity results are proved for a variety of modal logics, as well as uniform cut elimination extending the proof in \cite{ciabattoni2008axioms}. In \cite{CiaLanRam19},  hypersequent calculi are studied capturing analytic extensions of the full Lambek calculus \textbf{FLe}, and a procedure is introduced  for translating structural rules into equivalent formulas in  \emph{disjunction form}. 
This approach is also applied to some normal modal logics on a classical base. The main goal of \cite{CiaLanRam19} is to show that cut-free derivations in hypersequent calculi can be transformed into derivations in sequent calculi satisfying various weaker versions of the subformula property which still guarantee decidability (although not necessarily cut elimination). Specifically, \cite[Theorem 12(i)]{CiaLanRam19}  shows how to construct a derivation in  hypersequent calculi of formulas in disjunction form which are equivalent to structural rules. 


In \cite{Kracht}, the syntactic shape of \emph{primitive axioms} in the language of tense modal logic on a classical base is characterized as the one which can be equivalently captured as analytic structural rules extending the minimal display calculus for tense logic. In \cite{CiRa14}, an analogous characterization is provided in a more general setting for a given but not fixed display calculus, by introducing a procedure for transforming axioms into analytic structural rules and showing the converse direction whenever the calculus satisfies additional conditions.

In \cite{GMPTZ}, which is the contribution in the line of research described above to which the results of the present paper most directly connect, a characterization, analogous to the one of \cite{CiRa14},\footnote{For a comparison between the characterizations in \cite{CiRa14} and in  \cite{GMPTZ}, see \cite[Section 9]{GMPTZ}. 
} of the property of being properly displayable\footnote{The adjective `proper' singles out a subclass of Belnap's display calculi  \cite{Belnap} identified  by Wansing in \cite[Section 4.1]{Wa98}. A display calculus is  {\em proper} if every structural rule is closed under uniform substitution. This requirement strengthens Belnap's conditions C$_6$ and C$_7$. In \cite{Multitrends}, this requirement is extended  to  \emph{multi-type} display calculi. A logic is {\em (properly) displayable} if it can be captured by some (proper) display calculus (see \cite[Section 2.2]{GMPTZ}).} is obtained for arbitrary normal (D)LE-logics\footnote{Normal (D)LE-logics are those logics algebraically captured by varieties of normal (distributive) lattice expansions, i.e.~(distributive) lattices endowed with additional operations that are  finitely join-preserving or meet-reversing in each coordinate, or are finitely meet-preserving or join-reversing in each coordinate.} via a systematic connection between proper display calculi and generalized Sahlqvist correspondence theory (aka unified correspondence \cite{CoGhPa13,ConPal12,CoPa11,DeRPal20}). Thanks to this connection, general meta-theoretic results are established for properly displayable (D)LE-logics. In particular, in \cite{GMPTZ}, the properly displayable (D)LE-logics are syntactically characterized as the logics axiomatised by {\em analytic inductive axioms} (cf.~Definition \ref{def:type5}); moreover, the same algorithm ALBA which computes the first-order correspondent of (analytic) inductive (D)LE-axioms can be used to effectively compute their corresponding analytic structural rule(s). 

\medskip 

The {\em semantic} equivalence between each analytic inductive axiom $\varphi\vdash \psi$ and its corresponding analytic structural rule(s) $R_1, \ldots, R_n$ is an immediate consequence of the soundness of the rules of ALBA on perfect normal (distributive) lattice expansions (cf.~Footnote \ref{def:can:ext}). However, on the {\em syntactic} side, an effective procedure was still missing for building {\em cut-free} derivations of $\varphi\vdash \psi$ in the proper display calculus obtained by adding $R_1,\ldots,R_n$ to the  basic proper display calculus $\mathrm{D.LE}$ (resp.~$\mathrm{D.DLE}$) of the basic normal (D)LE-logic. Such an effective procedure would establish, via syntactic means, that for any  properly displayable (D)LE-logic $\mathsf{L}$, the proper display calculus for $\mathsf{L}$---i.e.~the calculus obtained by adding the analytic structural rules corresponding to the axioms of $\mathsf{L}$ to the basic calculus $\mathrm{D.LE}$ (resp.~$\mathrm{D.DLE}$)---derives all the theorems (or derivable sequents) of $\mathsf{L}$. This is what we refer to as the {\em syntactic completeness} of the proper display calculus for $\mathsf{L}$ with respect to any analytic (D)LE-logic $\mathsf{L}$. This syntactic completeness result for all properly displayable logics in arbitrary (D)LE-signatures is the main contribution of the present paper. It is perhaps worth to emphasize that we do not just show that any analytic inductive axiom is derivable in its corresponding proper display calculus, but we also provide an algorithm to generate a {\em cut-free} derivation of a particular shape that we refer to as being  \emph{in pre-normal form} (see Section \ref{ssec: canonical form}).

		\paragraph{Structure of the paper} 
		In Section \ref{subset:language:algsemantics}, we collect the necessary preliminaries about (D)LE-logics, their language, their basic presentation and notational conventions, algebraic semantics, basic proper display calculi, and analytic inductive LE-inequalities.
		In Section \ref{sec:properties}, we prove a series of technical properties of the basic proper display calculi which will be needed for achieving our main result, which is then proven in  Section \ref{sec:syntactic completeness}. We conclude
		in Section \ref{sec:conclusions}.

\section{Preliminaries}
\label{subset:language:algsemantics}

The present section adapts material from \cite[Section 2]{conradie2019algorithmic}, \cite[Section 2]{GMPTZ}, \cite[Section 2]{greco2018algebraic}, and \cite[Section 2]{CPT-Goldblatt}.

\subsection{Basic normal $\mathrm{LE}$-logics and their algebras}
Our base language is an unspecified but fixed language $\mathcal{L}_\mathrm{LE}$, to be interpreted over  lattice expansions of compatible similarity type. This setting uniformly accounts for many well known logical systems, such as full Lambek calculus and its axiomatic extensions, full Lambek-Grishin calculus, and other lattice-based logics.

In our treatment, we make use of the following auxiliary definition: an {\em order-type} over $n\in \mathbb{N}$
%
is an $n$-tuple $\epsilon\in \{1, \partial\}^n$. 
For every order type $\epsilon$, we denote its {\em opposite} order type by $\epsilon^\partial$, that is, $\epsilon^\partial_i = \epsilon^\partial (i) = 1$ iff $\epsilon_i = \epsilon(i)=\partial$ for every $1 \leq i \leq n$, and $\epsilon^\partial_i = \epsilon^\partial (i) = \partial$ iff $\epsilon_i = \epsilon(i)=1$ for every $1 \leq i \leq n$. For any lattice $\bba$, we let $\bba^1: = \bba$ and $\bba^\partial$ be the dual lattice, that is, the lattice associated with the converse partial order of $\bba$. For any order type $\varepsilon$, we let $\bba^\varepsilon: = \Pi_{i = 1}^n \bba^{\varepsilon_i}$.

The language $\mathcal{L}_\mathrm{LE}(\mathcal{F}, \mathcal{G})$ (from now on abbreviated as $\mathcal{L}_\mathrm{LE}$) takes as parameters: a denumerable set of proposition letters $\mathsf{AtProp}$, elements of which are denoted $p,q,r$, possibly with indexes, and disjoint sets of connectives $\mathcal{F}$ and $\mathcal{G}$.\footnote{\label{footnote: f and g operators}
	The connectives in $\mathcal{F}$ (resp.\ $\mathcal{G}$) correspond to those referred to as {\em positive} (resp.\ {\em negative}) connectives in \cite{ciabattoni2008axioms}. This terminology is not adopted in the present paper to avoid confusion with positive and negative nodes in signed generation trees, defined later in this section.
	Our assumption that the sets $\mathcal{F}$ and $\mathcal{G}$ are disjoint is motivated by the desideratum of generality and modularity. Indeed, for instance, the order theoretic properties of Boolean negation $\neg$ guarantee that this connective belongs both to $\mathcal{F}$ and to $\mathcal{G}$. In such cases we prefer to define two copies $\neg_\mathcal{F}\in\mathcal{F}$ and $\neg_\mathcal{G}\in\mathcal{G}$, and introduce structural rules which encode the fact that these two copies coincide. Another possibility is to admit a non empty intersection of the sets $\mathcal{F}$ and $\mathcal{G}$. Notice that only unary connectives can be both left and right adjoints. Whenever a connective belongs both to $\mathcal{F}$ and to $\mathcal{G}$ a completely standard solution in the display calculi literature is also available (c.f.~Remark \ref{rem: overloading the notation} and \ref{rem: both f and g operators}).} Each $f\in \mathcal{F}$ and $g\in \mathcal{G}$ has arity $n_f\in \mathbb{N}$ (resp.\ $n_g\in \mathbb{N}$) and is associated with some order-type $\varepsilon_f$ over $n_f$ (resp.\ $\varepsilon_g$ over $n_g$). 
Unary $f \in \mathcal{F}$ (resp.\ $g \in \mathcal{G}$) are sometimes denoted  $\Diamond$ (resp.\ $\Box$) if their order-type is 1, and $\lhd$ (resp.\ $\rhd$) if their order-type is $\partial$.\footnote{The adjoints of the unary connectives $\Box$, $\Diamond$, $\lhd$ and $\rhd$ are denoted $\Diamondblack$, $\blacksquare$, $\blhd$ and $\brhd$, respectively.} The terms (formulas) of $\mathcal{L}_\mathrm{LE}$ are defined recursively as follows:
\[
\varphi ::= p \mid \bot \mid \top \mid \varphi \wedge \varphi \mid \varphi \vee \varphi \mid f(\varphi_1, \ldots, \varphi_{n_f}) \mid g(\varphi_1, \ldots, \varphi_{n_g})
\] 
where $p \in \mathsf{AtProp}$. Terms in $\mathcal{L}_\mathrm{LE}$ are denoted either by $s,t$, or by lowercase Greek letters such as $\varphi, \psi, \gamma$. 

\begin{definition}
	\label{def:LE}
	For any tuple $(\mathcal{F}, \mathcal{G})$ of disjoint sets of function symbols as above, a {\em  lattice expansion} (abbreviated as LE) is a tuple $\bba = (\bbL, \mathcal{F}^\bbA, \mathcal{G}^\bbA)$ such that $\bbL$ is a bounded  lattice, $\mathcal{F}^\bbA = \{f^\bbA\mid f\in \mathcal{F}\}$ and $\mathcal{G}^\bbA = \{g^\bbA\mid g\in \mathcal{G}\}$, such that every $f^\bbA\in\mathcal{F}^\bbA$ (resp.\ $g^\bbA\in\mathcal{G}^\bbA$) is an $n_f$-ary (resp.\ $n_g$-ary) operation on $\bbA$. An LE $\mathbb{A}$ is {\em normal} if every $f^\bbA\in\mathcal{F}^\bbA$ (resp.\ $g^\bbA\in\mathcal{G}^\bbA$) preserves finite -- hence also empty -- joins (resp.\ meets) in each coordinate with $\epsilon_{f}(i)=1$ (resp.\ $\epsilon_{g}(i)=1$) and reverses finite -- hence also empty -- meets (resp.\ joins) in each coordinate with $\epsilon_{f}(i)=\partial$ (resp.\ $\epsilon_{g}(i)=\partial$).\footnote{\label{footnote:LE vs DLO} Normal LEs are sometimes referred to as {\em  lattices with operators} (LOs). This terminology comes from the setting of Boolean algebras with operators, in which operators are operations which preserve finite joins in each coordinate. However, this terminology is somewhat ambiguous in the lattice setting, in which primitive operations are typically maps which are operators if seen as $\bbA^\epsilon\to \bbA^\eta$ for some order-type $\epsilon$ on $n$ and some order-type $\eta\in \{1, \partial\}$.
	} Let $\mathbb{LE}$ be the class of LEs. Sometimes we will refer to certain LEs as $\mathcal{L}_\mathrm{LE}$-algebras when we wish to emphasize that these algebras have a compatible signature with the logical language we have fixed.
\end{definition}
In the remainder of the paper, 
we will often simplify notation and write e.g.\ $f$ for $f^\bbA$, $n$ for $n_f$ and $\varepsilon_i$ for $\varepsilon_{f}(i)$. We also extend the $\{1,\partial\}$-notation to the symbols $\vee,\wedge,\bot,\top,\le,\vdash$ by stipulating that the superscript $^1$ denotes the identity map, defining 
$$
\vee^\partial=\wedge,\qquad \wedge^\partial=\vee,\qquad \bot^\partial=\top,\qquad \top^\partial=\bot,\qquad {\le^\partial}={\ge},
$$
and stipulating that $\varphi\vdash^\partial \psi$ stands for $\psi\vdash \varphi$.

Henceforth, 
the adjective `normal' will typically be dropped. The class of all LEs is equational, and can be axiomatized by the usual lattice identities and the following equations for any $f\in \mathcal{F}$, $g\in \mathcal{G}$ and $1\leq i\leq n$:
\[f(p_1\ldots, q\vee^{\epsilon_f(i)} r,\ldots p_{n_f}) = f(p_1\ldots, q,\ldots p_{n_f})\vee f(p_1 \ldots, r,\ldots p_{n_f}),\quad
f(p_1\ldots, \bot^{\epsilon_f(i)},\ldots p_{n_f}) = \bot,
\]
\[g(p_1\ldots, q\wedge^{\epsilon_g(i)} r,\ldots p_{n_g}) = g(p_1 \ldots, q,\ldots p_{n_g})\wedge g(p_1\ldots, r,\ldots p_{n_g}),\quad
g(p_1\ldots, \top^{\epsilon_g(i)},\ldots p_{n_g}) = \top.
\]
Each language $\mathcal{L}_\mathrm{LE}$ is interpreted in the appropriate class of LEs. In particular, for every LE $\bba$, each operation $f^\bba\in \mathcal{F}^\bbA$ (resp.\ $g^\bba\in \mathcal{G}^\bbA$) is finitely join-preserving (resp.\ meet-preserving) in each coordinate when regarded as a map $f^\bba: \bba^{\varepsilon_f}\to \bba$ (resp.\ $g^\bba: \bba^{\varepsilon_g}\to \bba$).

The generic LE-logic is not equivalent to a sentential logic. Hence the consequence relation of these logics cannot be uniformly captured in terms of theorems, but rather in terms of sequents, which motivates the following definition:
\begin{definition}
	\label{def:LE:logic:general}
	For any language $\mathcal{L}_\mathrm{LE} = \mathcal{L}_\mathrm{LE}(\mathcal{F}, \mathcal{G})$, the {\em basic}, or {\em minimal} $\mathcal{L}_\mathrm{LE}$-{\em logic} is a set of sequents $\varphi\vdash\psi$, with $\varphi,\psi\in\mathcal{L}_\mathrm{LE}$, which contains as axioms the following sequents for lattice operations and additional connectives: 
	\begin{align*}
	&p\vdash p, && \bot\vdash p, && p\vdash \top,&&\\
	&p\vdash p\vee q  && q\vdash p\vee q, && p\wedge q\vdash p, && p\wedge q\vdash q,
	\end{align*}
	$$
	f(p_1\ldots, q\vee^{\epsilon_f(i)} r,\ldots p_{n_f}) \vdash f(p_1\ldots, q,\ldots p_{n_f})\vee f(p_1 \ldots, r,\ldots p_{n_f}),\quad f(p_1,\ldots, \bot^{\epsilon_f(i)},\ldots,p_{n_f}) \vdash \bot,
	$$
	$$
	g(p_1 \ldots, q,\ldots p_{n_g})\wedge g(p_1 \ldots, r,\ldots p_{n_g})\vdash g(p_1 \ldots, q\wedge^{\epsilon_g(i)} r,\ldots p_{n_g}),\quad \top\vdash g(p_1,\ldots, \top^{\epsilon_g(i)},\ldots,p_{n_g}),
	$$
	and is closed under the following inference rules (note that $\varphi\vdash^\partial\psi$ means $\psi\vdash\varphi$):
	\begin{displaymath}
	\frac{\varphi\vdash \chi\quad \chi\vdash \psi}{\varphi\vdash \psi}
	\qquad
	\frac{\varphi\vdash \psi}{\varphi(\chi/p)\vdash\psi(\chi/p)}
	\qquad
	\frac{\chi\vdash\varphi\quad \chi\vdash\psi}{\chi\vdash \varphi\wedge\psi}
	\qquad
	\frac{\varphi\vdash\chi\quad \psi\vdash\chi}{\varphi\vee\psi\vdash\chi}
	\end{displaymath}
	\begin{displaymath}
	\frac{\ \ \ \ \varphi\vdash^{\epsilon_{f}(i)}\psi}{f(p_1,\ldots,\varphi,\ldots,p_n)\vdash f(p_1,\ldots,\psi,\ldots,p_n)} \qquad
	\frac{\ \ \ \ \varphi \vdash^{\epsilon_{g}(i)}\psi}{g(p_1,\ldots,\varphi,\ldots,p_n)\vdash g(p_1,\ldots,\psi,\ldots,p_n)}.
	\end{displaymath}
	
	\medskip
	
	We let $\mathbf{L}_\mathrm{LE}$ denote the minimal $\mathcal{L}_\mathrm{LE}$-logic. We typically drop reference to the parameters when they are clear from the context. By an {\em $\mathrm{LE}$-logic} we understand any axiomatic extension of $\mathbf{L}_\mathrm{LE}$ in the language $\mathcal{L}_{\mathrm{LE}}$. If all the axioms in the extension are analytic inductive (cf.~Definition \ref{def:type5}) we say that the given $\mathrm{LE}$-logic is {\em analytic}.
\end{definition}

A sequent $\varphi\vdash\psi$ is valid in  an LE $\bba$ if $h(\varphi)\leq h(\psi)$ for every homomorphism $h$ from the $\mathcal{L}_\mathrm{LE}$-algebra of formulas over $\mathsf{AtProp}$ to $\bba$. The notation $\mathbb{LE}\models\varphi\vdash\psi$ indicates that $\varphi\vdash\psi$ is valid in every LE of the appropriate signature. Then, by means of a routine Lindenbaum-Tarski construction, it can be shown that the minimal LE-logic $\mathbf{L}_\mathrm{LE}$ is sound and complete with respect to its corresponding class of algebras $\mathbb{LE}$, i.e.\ that any sequent $\varphi\vdash\psi$ is provable in $\mathbf{L}_\mathrm{LE}$ iff $\mathbb{LE}\models\varphi\vdash\psi$. 

\subsection{The fully residuated language $\mathcal{L}_\mathrm{LE}^*$}
\label{ssec:expanded language}
Any given language $\mathcal{L}_\mathrm{LE} = \mathcal{L}_\mathrm{LE}(\mathcal{F}, \mathcal{G})$ can be associated with the language $\mathcal{L}_\mathrm{LE}^* = \mathcal{L}_\mathrm{LE}(\mathcal{F}^*, \mathcal{G}^*)$, where $\mathcal{F}^*\supseteq \mathcal{F}$ and $\mathcal{G}^*\supseteq \mathcal{G}$ are obtained by expanding $\mathcal{L}_\mathrm{LE}$ with the following connectives: 
\begin{enumerate}
	\item the $n_f$-ary connective $f^\sharp_i$ for $0\leq i\leq n_f$, the intended interpretation of which is the right residual of $f\in\mathcal{F}$ in its $i$th coordinate if $\varepsilon_{f}(i) = 1$ (resp.\ its Galois-adjoint if $\varepsilon_{f}(i) = \partial$);
	\item the $n_g$-ary connective $g^\flat_i$ for $0\leq i\leq n_g$, the intended interpretation of which is the left residual of $g\in\mathcal{G}$ in its $i$th coordinate if $\varepsilon_{g}(i) = 1$ (resp.\ its Galois-adjoint if $\varepsilon_{g}(i) = \partial$).
\end{enumerate}
We stipulate that 
$f^\sharp_i\in\mathcal{G}^*$ if $\varepsilon_{f}(i) = 1$, and $f^\sharp_i\in\mathcal{F}^*$ if $\varepsilon_{f}(i) = \partial$. Dually, $g^\flat_i\in\mathcal{F}^*$ if $\varepsilon_{g}(i) = 1$, and $g^\flat_i\in\mathcal{G}^*$ if $\varepsilon_{g}(i) = \partial$. The order-type assigned to the additional connectives is predicated on the order-type of their intended interpretations. That is, for any $f\in \mathcal{F}$ and $g\in\mathcal{G}$, 
	\begin{enumerate}
		\item if $\epsilon_f(i) = 1$, then $\epsilon_{f_i^\sharp}(i) = 1$ and $\epsilon_{f_i^\sharp}(j) = \epsilon_f^\partial(j)$ for any $j\neq i$.
		\item if $\epsilon_f(i) = \partial$, then $\epsilon_{f_i^\sharp}(i) = \partial$ and $\epsilon_{f_i^\sharp}(j) = \epsilon_f(j)$ for any $j\neq i$.
		\item if $\epsilon_g(i) = 1$, then $\epsilon_{g_i^\flat}(i) = 1$ and $\epsilon_{g_i^\flat}(j) = \epsilon_g^\partial(j)$ for any $j\neq i$.
		\item if $\epsilon_g(i) = \partial$, then $\epsilon_{g_i^\flat}(i) = \partial$ and $\epsilon_{g_i^\flat}(j) = \epsilon_g(j)$ for any $j\neq i$.
	\end{enumerate}

	For instance, if $f$ and $g$ are binary connectives such that $\varepsilon_f = (1, \partial)$ and $\varepsilon_g = (\partial, 1)$, then $\varepsilon_{f^\sharp_1} = (1, 1)$, $\varepsilon_{f^\sharp_2} = (1, \partial)$, $\varepsilon_{g^\flat_1} = (\partial, 1)$ and $\varepsilon_{g^\flat_2} = (1, 1)$.\footnote{Note that this notation depends on the connective which is taken as primitive, and needs to be carefully adapted to well known cases. For instance, consider the  `fusion' connective $\circ$ (which, when denoted  as $f$, is such that $\varepsilon_f = (1, 1)$). Its residuals
	$f_1^\sharp$ and $f_2^\sharp$ are commonly denoted $/$ and
	$\backslash$ respectively. However, if $\backslash$ is taken as the primitive connective $g$, then $g_2^\flat$ is $\circ = f$, and
	$g_1^\flat(x_1, x_2): = x_2/x_1 = f_1^\sharp (x_2, x_1)$. This example shows
	that, when identifying $g_1^\flat$ and $f_1^\sharp$, the conventional order of the coordinates is not preserved, and depends on which connective
	is taken as primitive.}

\begin{definition}\label{def:tense lattice logic}
	For any language $\mathcal{L}_\mathrm{LE}(\mathcal{F}, \mathcal{G})$, its associated basic $\mathcal{L}_\mathrm{LE}^\ast$-{\em logic} is defined by specializing Definition \ref{def:LE:logic:general} to the language $\mathcal{L}_\mathrm{LE}^* = \mathcal{L}_\mathrm{LE}(\mathcal{F}^*, \mathcal{G}^*)$ 
	and closing under the following additional residuation rules for $f\in \mathcal{F}$ and $g\in \mathcal{G}$:
	$$
	\begin{array}{cc}
	\AxiomC{$f(\varphi_1,\ldots,\varphi,\ldots, \varphi_{n_f}) \vdash \psi$}
	\doubleLine
	\UnaryInfC{$\varphi\vdash^{\epsilon_{f}(i)} f^\sharp_i(\varphi_1,\ldots,\psi,\ldots,\varphi_{n_f})$}
	\DisplayProof
	&\qquad\qquad
	\AxiomC{$\varphi \vdash g(\varphi_1,\ldots,\psi,\ldots,\varphi_{n_g})$}
	\doubleLine
	\UnaryInfC{$g^\flat_i(\varphi_1,\ldots, \varphi,\ldots, \varphi_{n_g})\vdash^{\epsilon_{g}(i)} \psi$}
	\DisplayProof
	\end{array}
	$$
	The double line in each rule above indicates that the rule is invertible (i.e., bidirectional).
	Let $\mathbf{L}_\mathrm{LE}^*$ be the basic $\mathcal{L}^\ast_\mathrm{LE}$-logic.
\end{definition}

The algebraic semantics of $\mathbf{L}_\mathrm{LE}^*$ is given by the class of  fully residuated $\mathcal{L}_\mathrm{LE}$-algebras, defined as tuples $\bba = (\mathbb{L}, \mathcal{F}^*, \mathcal{G}^*)$ such that $\mathbb{L}$ is a lattice and moreover,
\begin{enumerate}
	
	\item for every $f\in \mathcal{F}$ with $n_f\geq 1$, all $a_1,\ldots,a_{n_f},b\in L$ and $1\leq i\leq n_f$,
	$$f(a_1,\ldots,a_i,\ldots, a_{n_f})\leq b \quad \iff \quad a_i\leq^{\epsilon_f(i)} f^\sharp_i(a_1,\ldots,b,\ldots,a_{n_f}),$$
	\item for every $g\in \mathcal{G}$ with $n_g\geq 1$, all $a_1,\ldots,a_{n_g},b\in L$ and $1\leq i\leq n_g$,
	$$b\leq g(a_1,\ldots,a_i,\ldots, a_{n_g}) \quad \iff \quad g^\flat_i(a_1,\ldots,b,\ldots,a_{n_g})\leq^{\epsilon_g(i)} a_i.$$
\end{enumerate}

It is also routine to prove using the Lindenbaum-Tarski construction that $\mathbb{L}_\mathrm{LE}^*$ (as well as any of its canonical axiomatic extensions) is sound and complete with respect to the class of  fully residuated $\mathcal{L}_\mathrm{LE}$-algebras (or a suitably defined equational subclass, respectively). 

\begin{thm}
	\label{th:conservative extension}
	The logic $\mathbf{L}_\mathrm{LE}^*$ is a conservative extension of $\mathbf{L}_\mathrm{LE}$, i.e.~every $\mathcal{L}_\mathrm{LE}$-sequent $\varphi\vdash\psi$ is derivable in $\mathbf{L}_\mathrm{LE}$ if and only if $\varphi\vdash\psi$ is derivable in $\mathbf{L}_\mathrm{LE}^*$. 
\end{thm}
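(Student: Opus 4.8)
The plan is to prove the two implications separately, the forward one being immediate and the converse one (conservativity proper) going through the algebraic completeness theorems recorded above together with the theory of canonical extensions. For the forward direction, observe that $\mathbf{L}_\mathrm{LE}^*$ is obtained (Definition \ref{def:tense lattice logic}) by specializing Definition \ref{def:LE:logic:general} to the larger language $\mathcal{L}_\mathrm{LE}^*$ and merely \emph{adding} the residuation rules. Hence every axiom and every inference rule of $\mathbf{L}_\mathrm{LE}$ is literally present in $\mathbf{L}_\mathrm{LE}^*$, so any $\mathbf{L}_\mathrm{LE}$-derivation of an $\mathcal{L}_\mathrm{LE}$-sequent $\varphi\vdash\psi$ is verbatim an $\mathbf{L}_\mathrm{LE}^*$-derivation. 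This settles the ``only if'' direction.

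For the converse I would argue contrapositively, using that $\mathbf{L}_\mathrm{LE}$ is sound and complete with respect to $\mathbb{LE}$ and that $\mathbf{L}_\mathrm{LE}^*$ is sound with respect to the class of fully residuated $\mathcal{L}_\mathrm{LE}$-algebras. Suppose $\varphi\vdash\psi$ is an $\mathcal{L}_\mathrm{LE}$-sequent not derivable in $\mathbf{L}_\mathrm{LE}$. By completeness there is a normal LE $\mathbb{A}\in\mathbb{LE}$ and a homomorphism $h$ from the formula algebra over $\mathsf{AtProp}$ into $\mathbb{A}$ with $h(\varphi)\not\leq h(\psi)$. The task is then to convert $\mathbb{A}$ into a \emph{fully residuated} algebra refuting the same sequent. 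For this I would pass to the canonical extension $\mathbb{A}^\delta$: the canonical embedding $e\colon\mathbb{A}\embeds\mathbb{A}^\delta$ is an order-embedding which is also an $\mathcal{L}_\mathrm{LE}$-homomorphism, since the $\sigma$- and $\pi$-extensions of the operations agree with the originals on $\mathbb{A}$. As $e$ is an order-embedding, $e\circ h$ witnesses the failure of $\varphi\vdash\psi$ in $\mathbb{A}^\delta$.

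The key algebraic step is that $\mathbb{A}^\delta$ can be expanded to a fully residuated algebra. Because $\mathbb{A}$ is normal, each $f\in\mathcal{F}$ (resp.\ $g\in\mathcal{G}$) is finitely join-preserving (resp.\ meet-preserving) in its type-$1$ coordinates and finitely meet-reversing (resp.\ join-reversing) in its type-$\partial$ coordinates; a standard property of canonical extensions of normal operators guarantees that the corresponding $\sigma$- (resp.\ $\pi$-) extensions on the perfect LE $\mathbb{A}^\delta$ preserve \emph{arbitrary} joins/meets of the relevant sort in each coordinate. By completeness of $\mathbb{A}^\delta$ together with these preservation properties, each operation admits the residuals and Galois adjoints required in clauses (1)-(2) of the definition of fully residuated algebras; setting $f^\sharp_i$ and $g^\flat_i$ to be these adjoints yields an expansion $(\mathbb{A}^\delta,\mathcal{F}^*,\mathcal{G}^*)$ that is a fully residuated $\mathcal{L}_\mathrm{LE}$-algebra. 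Since $\varphi$ and $\psi$ contain no starred connectives, their values under $e\circ h$ are unaffected by the expansion, so $\varphi\vdash\psi$ still fails there. Soundness of $\mathbf{L}_\mathrm{LE}^*$ then gives that $\varphi\vdash\psi$ is not derivable in $\mathbf{L}_\mathrm{LE}^*$, which is precisely the contrapositive of the ``if'' direction.

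The main obstacle is the third paragraph: verifying that the canonical extension of a normal LE supports the full residuated signature. This reduces to the fact that canonical extensions of normal operators are \emph{complete} operators, i.e.\ completely join/meet preserving in each coordinate of the appropriate type, so that the adjoint existence criterion for complete lattices supplies the residuals and Galois adjoints in every coordinate. Everything else---the transfer of the countermodel along the embedding $e$, and the observation that the starred connectives are irrelevant to the value of an $\mathcal{L}_\mathrm{LE}$-sequent---is routine once this preservation fact is in place.
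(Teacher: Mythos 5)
Your proposal is correct and follows essentially the same route as the paper's own (outlined) proof: the forward direction by inclusion of derivations, and the converse contrapositively via completeness of $\mathbf{L}_\mathrm{LE}$, passage to the canonical extension $\bba^\delta$, the observation that this perfect LE carries a fully residuated $\mathcal{L}^\ast_\mathrm{LE}$-structure, and soundness of $\mathbf{L}_\mathrm{LE}^*$. The only difference is that you spell out the step the paper leaves implicit (complete operators on a complete lattice admit the required adjoints/residuals), which is a welcome addition rather than a deviation.
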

\begin{proof}
	We only outline the proof.
	Clearly, every $\mathcal{L}_\mathrm{LE}$-sequent which is $\mathbf{L}_\mathrm{LE}$-derivable is also $\mathbf{L}_\mathrm{LE}^*$-derivable. Conversely, if an $\mathcal{L}_\mathrm{LE}$-sequent $\varphi\vdash\psi$ is not $\mathbf{L}_\mathrm{LE}$-derivable, then by the completeness of $\mathbf{L}_\mathrm{LE}$ with respect to the class of $\mathcal{L}_\mathrm{LE}$-algebras, there exists an $\mathcal{L}_\mathrm{LE}$-algebra $\bba$ and a variable assignment $v$ under which $\varphi^\bba\not\leq \psi^\bba$. Consider the canonical extension $\bba^\delta$ of $\bba$.\footnote{ \label{def:can:ext}
		The \emph{canonical extension} of a bounded lattice $L$ is a complete lattice $L^\delta$ with $L$ as a sublattice, satisfying
		\emph{denseness}: every element of $L^\delta$ can be expressed both as a join of meets and as a meet of joins of elements from $L$, and
		\emph{compactness}: for all $S,T \subseteq L$, if $\bigwedge S \leq \bigvee T$ in $L^\delta$, then $\bigwedge F \leq \bigvee G$ for some finite sets $F \subseteq S$ and $G\subseteq T$.
		It is well known that the canonical extension of $L$ is unique up to isomorphism fixing $L$ (cf.~e.g.~\cite[Section 2.2]{GehNagVen05}), and that the canonical extension is a \emph{perfect} bounded lattice, i.e.\ a complete  lattice which is completely join-generated by its completely join-irreducible elements and completely meet-generated by its completely meet-irreducible elements  (cf.~e.g.~\cite[Definition 2.14]{GehNagVen05})\label{canext bdl is perfect}. The canonical extension of an
		$\mathcal{L}_\mathrm{LE}$-algebra $\bbA = (L, \mathcal{F}^\bbA, \mathcal{G}^\bbA)$ is the perfect  $\mathcal{L}_\mathrm{LE}$-algebra
		$\bbA^\delta: = (L^\delta, \mathcal{F}^{\bbA^\delta}, \mathcal{G}^{\bbA^\delta})$ such that $f^{\bbA^\delta}$ and $g^{\bbA^\delta}$ are defined as the
		$\sigma$-extension of $f^{\bbA}$ and as the $\pi$-extension of $g^{\bbA}$ respectively, for all $f\in \mathcal{F}$ and $g\in \mathcal{G}$ (cf.\ \cite{sofronie2000duality1, sofronie2000duality2}).}
	Since $\bba$ is a subalgebra of $\bba^\delta$, the sequent $\varphi\vdash\psi$ is not satisfied in $\bba^\delta$ under the variable assignment $\iota \circ v$ ($\iota$ denoting the canonical embedding $\bba\hookrightarrow \bba^\delta$). Moreover, since $\bba^\delta$ is a perfect $\mathcal{L}_\mathrm{LE}$-algebra, it is naturally endowed with a structure of   $\mathcal{L}^\ast_\mathrm{LE}$-algebra. Thus, by the completeness of $\mathbf{L}_\mathrm{LE}^*$ with respect to the class of  $\mathcal{L}^\ast_\mathrm{LE}$-algebras, the sequent $\varphi\vdash\psi$ is not derivable in $\mathbf{L}_\mathrm{LE}^*$, as required.
\end{proof} 
Notice that the algebraic completeness of the logics $\mathbf{L}_\mathrm{LE}$ and $\mathbf{L}_\mathrm{LE}^*$ and the canonical embedding of LEs into their canonical extensions immediately give completeness of $\mathbf{L}_\mathrm{LE}$ and $\mathbf{L}_\mathrm{LE}^*$ with respect to the appropriate class of perfect LEs. 

\subsection{Analytic inductive LE-inequalities}\label{Inductive:Fmls:Section}
In this section we  recall the definitions of inductive LE-inequalities introduced in \cite{conradie2019algorithmic} and their corresponding `analytic' restrictions introduced in \cite{GMPTZ} in the distributive setting and then generalized to the setting of  LEs of arbitrary signatures in \cite{greco2018algebraic}.  Each inequality in any of these classes 
is canonical  and elementary (cf.~\cite[Theorems 7.1 and 6.1]{conradie2019algorithmic}).  



\begin{definition}[\textbf{Signed Generation Tree}]
	\label{def: signed gen tree}
	The \emph{positive} (resp.\ \emph{negative}) {\em generation tree} of any $\mathcal{L}_\mathrm{LE}$-term $s$ is defined by labelling the root node of the generation tree of $s$ with the sign $+$ (resp.\ $-$), and then propagating the labelling on each remaining node as follows:
	\begin{itemize}
		\item For any node labelled with $ \lor$ or $\land$, assign the same sign to its children nodes.
		\item For any node labelled with $h\in \mathcal{F}\cup \mathcal{G}$ of arity $n_h\geq 1$, and for any $1\leq i\leq n_h$, assign the same (resp.\ the opposite) sign to its $i$th child node if $\varepsilon_h(i) = 1$ (resp.\ if $\varepsilon_h(i) = \partial$).
	\end{itemize}
	Nodes in signed generation trees are \emph{positive} (resp.\ \emph{negative}) if are signed $+$ (resp.\ $-$).
\end{definition}

Signed generation trees will be mostly used in the context of term inequalities $s\leq t$. In this context we will typically consider the positive generation tree $+s$ for the left-hand side and the negative one $-t$ for the right-hand side.\footnote{\label{footnote: precedent succedent} In the context of sequents $s\vdash t$, signed generation trees $+s$ and $-t$ can be also used to specify when subformulas of $s$ (resp.~$t$) occur in precedent or succedent position. Specifically, a given occurrence of  formula $\gamma$ is in {\em precedent} (resp.~{\em succedent}) position in $s\vdash t$ iff $+\gamma\prec +s$ or $+\gamma \prec -t$ (resp.~$-\gamma\prec +s$ or $-\gamma \prec -t$).} We will also say that a term-inequality $s\leq t$ is \emph{uniform} in a given variable $p$ if all occurrences of $p$ in both $+s$ and $-t$ have the same sign, and that $s\leq t$ is $\epsilon$-\emph{uniform} in a (sub)array $\overline{r}$ of its variables if $s\leq t$ is uniform in $r$, occurring with the sign indicated by $\epsilon$, for every $r$ in $\overline{r}$. \footnote{\label{footnote:uniformterms}If a term inequality $(s\leq t)[\overline{p}/!\overline{x}, \overline{q}]$ is $\epsilon$-uniform in all variables in $\overline{p}$ (cf.\ discussion after Definition \ref{def: signed gen tree}), then the validity of $s\leq t$ is equivalent to the validity of $(s\leq t)[\overline{\top^{\epsilon(i)}}/!\overline{x},\overline{q}]$, where $\top^{\epsilon(i)}=\top$ if $\epsilon(i)=1$ and $\top^{\epsilon(i)}=\bot$ if $\epsilon(i)=\partial$.}

For any term $s(p_1,\ldots p_n)$, any order type $\epsilon$ over $n$, and any $1 \leq i \leq n$, an \emph{$\epsilon$-critical node} in a signed generation tree of $s$ is a leaf node $+p_i$ with $\epsilon(i) = 1$ or $-p_i$ with $\epsilon (i) = \partial$. An $\epsilon$-{\em critical branch} in the tree is a branch from an $\epsilon$-critical node. Variable occurrences corresponding to $\epsilon$-critical nodes are those used in the runs of the various versions of the algorithm ALBA (cf.~\cite{ConPal12, CoGhPa13, conradie2019algorithmic, lmcs:6694}) to compute the minimal valuations. 
For every term $s(p_1,\ldots p_n)$ and every order type $\epsilon$, we say that $+s$ (resp.\ $-s$) {\em agrees with} $\epsilon$, and write $\epsilon(+s)$ (resp.\ $\epsilon(-s)$), if every leaf in the signed generation tree of $+s$ (resp.\ $-s$) is $\epsilon$-critical.
 We will also write $+s'\prec \ast s$ (resp.\ $-s'\prec \ast s$) to indicate that the subterm $s'$ inherits the positive (resp.\ negative) sign from the signed generation tree $\ast s$. Finally, we will write $\epsilon(\gamma) \prec \ast s$ (resp.\ $\epsilon^\partial(\gamma) \prec \ast s$) to indicate that the signed subtree $\gamma$, with the sign inherited from $\ast s$, agrees with $\epsilon$ (resp.\ with $\epsilon^\partial$). 

\begin{notation}\label{notation: placeholder variables}
In what follows, we will often need to use {\em placeholder variables} to e.g.~specify the occurrence of a subformula within a given formula. In these cases, we will write e.g.~$\varphi(!z)$ (resp.~$\varphi(!\overline{z})$) to indicate that the variable $z$ (resp.~each variable $z$ in  vector $\overline{z}$) occurs exactly once in $\varphi$. Accordingly, we will write $\varphi[\gamma / !z]$  (resp.~$\varphi[\overline{\gamma}/!\overline{z}]$   to indicate the formula obtained from $\varphi$ by substituting $\gamma$ (resp.~each formula $\gamma$ in $\overline{\gamma}$) for the unique occurrence of (its corresponding variable) $z$ in $\varphi$. Also, in what follows, we will find sometimes useful to group placeholder variables together according to certain assumptions we make about them. So, for instance, we will sometimes write e.g.~$\varphi(!\overline{x}, !\overline{y})$ to indicate that $\epsilon(x) \prec \ast \varphi$ for all variables $x$ in  $\overline{x}$ and $\epsilon^\partial(y) \prec \ast \varphi$ for all variables $y$ in  $\overline{y}$, or we will write e.g.~$f(!\overline{x}, !\overline{y})$ to indicate that $f$ is monotone (resp.~antitone) in the coordinates corresponding to every variable $x$ in  $\overline{x}$ (resp.~$y$ in  $\overline{y}$).  We will provide further explanations as to the intended meaning of these groupings whenever required. Finally, we will also extend these conventions to inequalities or sequents, and thus write e.g.~$(\phi\leq \psi) [\overline{\gamma}/!\overline{z}, \overline{\delta}/!\overline{w}] $ to indicate  the inequality obtained from $\varphi\leq \psi$ by substituting each formula $\gamma$ in $\overline{\gamma}$ (resp.~$\delta$ in $\overline{\delta}$) for the unique occurrence of its corresponding variable $z$ (resp.~$w$) in $\varphi\leq \psi$.
\end{notation}

\begin{definition}
	\label{def:good:branch}
	Nodes in signed generation trees will be called \emph{$\Delta$-adjoints}, \emph{syntactically left residuals (SLR)}, \emph{syntactically right residuals (SRR)}, and \emph{syntactically right adjoints (SRA)}, according to the specification given in Table \ref{Join:and:Meet:Friendly:Table}.
	A branch in a signed generation tree $\ast s$, with $\ast \in \{+, - \}$, is called a \emph{good branch} if it is the concatenation of two paths $P_1$ and $P_2$, one of which may possibly be of length $0$, such that $P_1$ is a path from the leaf consisting (apart from variable nodes) only of PIA-nodes, and $P_2$ consists (apart from variable nodes) only of Skeleton-nodes. 
	A good branch is \emph{Skeleton} if the length of $P_1$ is $0$, 
	and  is {\em SLR}, or {\em definite}, if  $P_2$ only contains SLR nodes.
	\begin{table}[h]
		\begin{center}
			\bgroup
			\def\arraystretch{1.2}
			\begin{tabular}{| c | c |}
				\hline
				Skeleton &PIA\\
				\hline
				$\Delta$-adjoints & Syntactically Right Adjoint (SRA) \\
				\begin{tabular}{ c c c c c c}
					$+$ &$\vee$ &\\
					$-$ &$\wedge$ \\
				\end{tabular}
				&
				\begin{tabular}{c c c c }
					$+$ &$\wedge$ &$g$ & with $n_g = 1$ \\
					$-$ &$\vee$ &$f$ & with $n_f = 1$ \\

				\end{tabular}
				\\ \hline
				Syntactically Left Residual (SLR) &Syntactically Right Residual (SRR)\\
				\begin{tabular}{c c c c }
					$+$ &  &$f$ & with $n_f \geq 1$\\
					$-$ &  &$g$ & with $n_g \geq 1$ \\
				\end{tabular}
				&\begin{tabular}{c c c c}
					$+$ & &$g$ & with $n_g \geq 2$\\
					$-$ &  &$f$ & with $n_f \geq 2$\\
				\end{tabular}
				\\
				\hline
			\end{tabular}
			\egroup
		\end{center}
		\caption{Skeleton and PIA nodes for $\mathrm{LE}$-languages.}\label{Join:and:Meet:Friendly:Table}
		\vspace{-1em}
	\end{table}
\end{definition}
We refer to \cite[Remark 3.3]{conradie2019algorithmic} and \cite{vanbenthem2005} for a discussion of the notational conventions and terminology. We refer to \cite[Section 3.2]{conradie2019algorithmic} and \cite[Section 1.7.2]{CoGhPa13} for a comparison with \cite{ConPal12} and \cite{GehNagVen05} where the nodes of the signed generation tree were classified according to the \emph{choice} and \emph{universal} terminology.	

\begin{center}
	\begin{tikzpicture}
	\draw (-5,-1.5) -- (-3,1.5) node[above]{\Large$+$} ;
	\draw (-5,-1.5) -- (-1,-1.5) ;
	\draw (-3,1.5) -- (-1,-1.5);
	\draw (-5.5,0) node{Skeleton ($P2$)} ;
	\draw[dashed] (-3,1.5) -- (-4,-1.5);
	\draw[dashed] (-3,1.5) -- (-2,-1.5);
	\draw (-4,-1.5) --(-4.8,-3);
	\draw (-4.8,-3) --(-3.2,-3);
	\draw (-3.2,-3) --(-4,-1.5);
	\draw[dashed] (-4,-1.5) -- (-4,-3);
	\draw[fill] (-4,-3) circle[radius=.1] node[below]{$+p$};
	\draw
	(-2,-1.5) -- (-2.8,-3) -- (-1.2,-3) -- (-2,-1.5);
	\fill[pattern=north east lines]
	(-2,-1.5) -- (-2.8,-3) -- (-1.2,-3);
	\draw (-2,-3.25)node{$\gamma$};
	\draw (-5.5,-2.25) node{PIA ($P1$)} ;
	\draw (0,0) node{$\leq$};
	\draw (5,-1.5) -- (3,1.5) node[above]{\Large$-$} ;
	\draw (5,-1.5) -- (1,-1.5) ;
	\draw (3,1.5) -- (1,-1.5);
	\draw (5.5,0) node{Skeleton ($P2$)} ;
	\draw[dashed] (3,1.5) -- (4,-1.5);
	\draw[dashed] (3,1.5) -- (2,-1.5);
	\draw (2,-1.5) --(2.8,-3);
	\draw (2.8,-3) --(1.2,-3);
	\draw (1.2,-3) --(2,-1.5);
	\draw[dashed] (2,-1.5) -- (2,-3);
	\draw[fill] (2,-3) circle[radius=.1] node[below]{$+p$};
	\draw
	(4,-1.5) -- (4.8,-3) -- (3.2,-3) -- (4, -1.5);
	\fill[pattern=north east lines]
	(4,-1.5) -- (4.8,-3) -- (3.2,-3) -- (4, -1.5);
	\draw (4,-3.25)node{$\gamma'$};
	\draw (0.5,-2.25) node{PIA ($P1$)} ;
	\end{tikzpicture}
\end{center}

\begin{definition}[Inductive inequalities]\label{Inducive:Ineq:Def}
	For any order type $\epsilon$ and any irreflexive and transitive relation (i.e.\ strict partial order) $\Omega$ on $p_1,\ldots p_n$, the signed generation tree $*s$ $(* \in \{-, + \})$ of a term $s(p_1,\ldots p_n)$ is \emph{$(\Omega, \epsilon)$-inductive} if
	\begin{enumerate}
		\item for all $1 \leq i \leq n$, every $\epsilon$-critical branch with leaf $p_i$ is good (cf.\ Definition \ref{def:good:branch});
		\item every $m$-ary SRR-node occurring in the critical branch is of the form $ \circledast(\gamma_1,\dots,\gamma_{j-1},\beta,\gamma_{j+1}\ldots,\gamma_m)$, where for any $h\in\{1,\ldots,m\}\setminus j$: 
		\begin{enumerate}
			\item $\epsilon^\partial(\gamma_h) \prec \ast s$ (cf.\ discussion before Definition \ref{def:good:branch}), and
			%
			\item $p_k <_{\Omega} p_i$ for every $p_k$ occurring in $\gamma_h$ and for every $1\leq k\leq n$.
		\end{enumerate}
	\end{enumerate}
	
	We will refer to $<_{\Omega}$ as the \emph{dependency order} on the variables. An inequality $s \leq t$ is \emph{$(\Omega, \epsilon)$-inductive} if the signed generation trees $+s$ and $-t$ are $(\Omega, \epsilon)$-inductive. An inequality $s \leq t$ is \emph{inductive} if it is $(\Omega, \epsilon)$-inductive for some $\Omega$ and $\epsilon$.
\end{definition}

In what follows, we refer to formulas $\varphi$ such that only PIA nodes occur in $+\varphi$ (resp.\ $-\varphi$) as {\em positive} (resp.\ {\em negative}) {\em PIA-formulas}, and to formulas $\xi$ such that only Skeleton nodes occur in $+\xi$ (resp.\ $-\xi$) as {\em positive} (resp.\ {\em negative}) {\em Skeleton-formulas}\label{page: positive negative PIA}. PIA formulas $\ast \varphi$ in which no binary SRA-nodes (i.e.~$+\wedge$ and $-\vee$)
occur are referred to as {\em definite}. Skeleton formulas $\ast \xi$ in which no $\Delta$-adjoint nodes (i.e.~$-\wedge$ and $+\vee$) occur 
are referred to as {\em definite}. Hence, $\ast \xi$ (resp.~$\ast \varphi$) is definite Skeleton (resp.~definite PIA) iff  all nodes of $\ast \xi$ (resp.~$\ast \varphi$) are SLR (resp.~SRR or unary SRA).

\begin{lemma} 
\label{lemma: reduction to definite} For every LE-language $\mathcal{L}$,
\begin{enumerate}
\item if $\gamma$ is a positive PIA (resp.~negative Skeleton) $\mathcal{L}$-formula,  then $\gamma$ is equivalent to $\bigwedge_{i\in I}\gamma_i$ for some finite set of definite positive PIA (resp.~negative Skeleton) formulas $\gamma_i$;
\item if $\delta$ is a negative PIA (resp.~positive Skeleton) $\mathcal{L}$-formula,  then $\delta$ is equivalent to $\bigvee_{j\in j}\delta_j$ for some finite set of definite negative PIA (resp.~positive Skeleton) formulas $\delta_j$.
\end{enumerate}
\end{lemma}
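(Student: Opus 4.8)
The plan is to first collapse the four assertions to two, by exploiting the sign-flipping symmetry of the node classification in Table~\ref{Join:and:Meet:Friendly:Table}. Replacing every sign $+$ by $-$ and vice versa turns the signed generation tree $+\gamma$ into $-\gamma$, and a routine inspection of the table shows that this operation interchanges the Skeleton node-labels ($\Delta$-adjoint and SLR) with the PIA node-labels (binary SRA, unary SRA and SRR); in particular it interchanges the $\Delta$-adjoint labels ($+\vee,-\wedge$) with the binary SRA labels ($+\wedge,-\vee$). Consequently $\gamma$ is positive PIA iff it is negative Skeleton, and likewise $\gamma$ is definite positive PIA iff it is definite negative Skeleton; symmetrically for negative PIA versus positive Skeleton. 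Thus the two ``resp.'' clauses are mere relabellings, and it suffices to establish: (1$'$) every positive PIA formula is equivalent to a finite meet of definite positive PIA formulas, and (2$'$) every negative PIA formula is equivalent to a finite join of definite negative PIA formulas.

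I would prove (1$'$) and (2$'$) \emph{simultaneously} by induction on the complexity of the formula, so that descending into a subtree, which flips the inherited sign, is handled by the dual induction hypothesis. The key structural fact is that every subtree of a PIA signed generation tree is again PIA, so each immediate subformula of a PIA formula is again a positive or negative PIA formula according to the sign it inherits. The base case (variables and constants) is immediate, since a leaf is vacuously definite. For the inductive step of (1$'$), the root of $+\gamma$ is a positive PIA node, i.e.~$+\wedge$, a unary $+g$, or an $n$-ary $+g$ with $n\geq 2$.

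For $+\wedge$, I apply the induction hypothesis to both conjuncts and flatten the two meets into one, using only associativity and commutativity of $\wedge$. For a connective node $+g$ I use the normality of $g$ from Definition~\ref{def:LE}: in each coordinate $k$ with $\varepsilon_g(k)=1$ the subformula is positive PIA and, by the induction hypothesis, a finite meet, which $g$ preserves; in each coordinate with $\varepsilon_g(k)=\partial$ the subformula is negative PIA and, by the dual induction hypothesis, a finite join, which $g$ turns into a meet by join-reversal. Pushing these through all coordinates one at a time yields $g$ applied to all tuples of the chosen conjuncts and disjuncts, i.e.~a single meet indexed by the product of the coordinatewise index sets; each resulting term has the node $+g$ (a unary SRA or an SRR node, never a binary SRA node) at its root with definite PIA immediate subterms, hence is itself definite positive PIA. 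Statement (2$'$) is handled dually, using that $f\in\mathcal{F}$ preserves finite joins in positive coordinates and reverses finite meets to joins in negative coordinates. Every equivalence invoked is an instance of the defining (in)equalities of normal LEs, hence valid in every LE and, by the completeness of $\mathbf{L}_\mathrm{LE}$, provable; in particular no lattice distributivity is used, so the argument is uniform across all, possibly non-distributive, LE-signatures.

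The main thing to get right is the bookkeeping of signs and order types: one must consistently feed each immediate subformula to the correct (primal or dual) induction hypothesis according to the sign it inherits from $\varepsilon_g$, and then verify that the reassembled terms remain definite PIA of the correct polarity. The only genuinely combinatorial step is the SRR case, where the coordinatewise meets and joins must be collapsed into a single meet over the product index set; this is where care is needed, but it uses nothing beyond finitely many applications of the normality identities.
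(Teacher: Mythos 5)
Your proposal is correct and follows essentially the same route as the paper's proof: a simultaneous induction on the two polarities whose inductive step pushes the meets/joins supplied by the induction hypotheses through the outermost connective using its coordinatewise normality (preservation/reversal) properties, yielding a meet (resp.~join) over the product index set of definite formulas of the right polarity. The paper's proof is merely terser---it writes out only the $f$-case and omits the $\wedge$/$\vee$-flattening case and the sign-flip identification of positive PIA with negative Skeleton formulas, both of which you make explicit.
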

\begin{proof}
By simultaneous induction on $\gamma$ and $\delta$. The base cases are immediately true. If $\delta: = f(\overline{\delta'}, \overline{\gamma'})$, then by induction hypothesis on each $\delta'$ in $\overline{\delta'}$ and each $\gamma'$ in $\overline{\gamma'}$, the formula $\delta$ is equivalent to $f(\overline{\bigvee_{j\in J}\delta'_j}, \overline{\bigwedge_{i\in I}\gamma'_i})$ for some finite sets of definite positive PIA (resp.~negative Skeleton) formulas $\gamma'_i$ and of definite positive Skeleton (resp.~negative PIA)  formulas $\delta'_j$.
 By the coordinatewise distribution properties of every $f\in \mathcal{F}$, the term $f(\overline{\bigvee_{j\in J}\delta'_j}, \overline{\bigwedge_{i\in I}\gamma'_i})$ is equivalent to  $\bigvee_{j\in J} \bigvee_{i\in I} f(\overline{\delta'_j}, \overline{\gamma'_i})$ with each $f(\overline{\delta'_j}, \overline{\gamma'_i})$ being definite positive Skeleton (resp.~negative PIA), as required. The remaining cases are omitted.
\end{proof}

%

\begin{definition}[Analytic inductive inequalities]
	\label{def:type5}
	For every order type $\epsilon$ and every irreflexive and transitive relation $\Omega$ on the variables $p_1,\ldots p_n$,
	the signed generation tree $\ast s$ ($\ast\in \{+, -\}$) of a term $s(p_1,\ldots p_n)$ is \emph{analytic $(\Omega, \epsilon)$-inductive}  if
	
	\begin{enumerate}
		\item $\ast s$ is $(\Omega, \epsilon)$-inductive (cf.\ Definition \ref{Inducive:Ineq:Def});
		\item every branch of $\ast s$ is good (cf.\ Definition \ref{def:good:branch}).
	\end{enumerate}	
	
	An inequality $s \leq t$ is \emph{analytic $(\Omega, \epsilon)$-inductive}   if $+s$ and $-t$ are both analytic  $(\Omega, \epsilon)$-inductive. An inequality $s \leq t$ is \emph{analytic inductive} if is analytic $(\Omega, \epsilon)$-inductive  for some $\Omega$ and $\epsilon$. An analytic inductive inequality is {\em definite} if no $\Delta$-adjoint nodes (i.e.~$-\wedge$ and $+\vee$) occur in its Skeleton.
	\end{definition}	
\begin{center}
	\begin{tikzpicture}
	\draw (-5,-1.5) -- (-3,1.5) node[above]{\Large$+$} ;
	\draw (-5,-1.5) -- (-1,-1.5) ;
	\draw (-3,1.5) -- (-1,-1.5);
	\draw (-5.5,0) node{Skeleton ($P2$)} ;
	\draw[dashed] (-3,1.5) -- (-4,-1.5);
	\draw[dashed] (-3,1.5) -- (-2,-1.5);
	\draw (-4,-1.5) --(-4.8,-3);
	\draw (-4.8,-3) --(-3.2,-3);
	\draw (-3.2,-3) --(-4,-1.5);
	\draw[dashed] (-4,-1.5) -- (-4,-3);
	\draw[fill] (-4,-3) circle[radius=.1] node[below]{$+p$};
	\draw
	(-2,-1.5) -- (-2.8,-3) -- (-1.2,-3) -- (-2,-1.5);
	\draw (-2,-3.25)node{$\gamma$};
	\draw (-5.5,-2.25) node{PIA ($P1$)} ;
	\draw (0,0) node{$\leq$};
	\draw (5,-1.5) -- (3,1.5) node[above]{\Large$-$} ;
	\draw (5,-1.5) -- (1,-1.5) ;
	\draw (3,1.5) -- (1,-1.5);
	\draw (5.5,0) node{Skeleton ($P2$)} ;
	\draw[dashed] (3,1.5) -- (4,-1.5);
	\draw[dashed] (3,1.5) -- (2,-1.5);
	\draw (2,-1.5) --(2.8,-3);
	\draw (2.8,-3) --(1.2,-3);
	\draw (1.2,-3) --(2,-1.5);
	\draw[dashed] (2,-1.5) -- (2,-3);
	\draw[fill] (2,-3) circle[radius=.1] node[below]{$+p$};
	\draw
	(4,-1.5) -- (4.8,-3) -- (3.2,-3) -- (4, -1.5);
	\draw (4,-3.25)node{$\gamma'$};
	\draw (0.5,-2.25) node{PIA ($P1$)} ;
	\draw (-1,-2.25) node{PIA} ;
         \draw (5,-2.25) node{PIA} ;
	\end{tikzpicture}
\end{center}

\begin{notation}\label{notation: analytic inductive}
	We will sometimes represent $(\Omega, \epsilon)$-analytic inductive inequalities/sequents as follows: \[(\varphi\leq \psi)[\oba/!\obx, \orb/!\ory,\overline{\bgamma}/!\obz, \overline{\rdelta}/!\orw] \quad\quad (\varphi\vdash \psi)[\oba/!\obx, \orb/!\ory,\overline{\bgamma}/!\obz, \overline{\rdelta}/!\orw],\] where $(\varphi\leq \psi)[!\obx, !\ory,!\obz, !\orw]$ is  the skeleton of the given inequality, $\overline{\alpha}$ (resp.~$\overline{\beta}$) denotes the vector of positive (resp.~negative) maximal PIA-subformulas, i.e.~each $\alpha$ in $\overline{\alpha}$ and $\beta$ in $\overline{\beta}$ contains at least one $\varepsilon$-critical occurrence of some propositional variable, and moreover:
	\begin{enumerate}
		\item for each $\alpha$ in $\overline{\alpha}$, either 
		$+\alpha\prec +\varphi$ or $+\alpha\prec -\psi$;
		\item for each $\beta$ in $\overline{\beta}$, either 
		$-\beta\prec +\varphi$ or $-\beta\prec -\psi$,
	\end{enumerate}
	and $\overline{\gamma}$ (resp.~$\overline{\delta}$) denotes  the vector of positive (resp.~negative) maximal $\varepsilon^{\partial}$-subformulas, and moreover:
	\begin{enumerate}		
		\item for each $\gamma$ in $\overline{\gamma}$, either $+\gamma\prec +\varphi$ or $+\gamma\prec -\psi$;
		\item for each $\delta$ in $\overline{\delta}$, either $-\delta\prec +\varphi$ or $-\delta\prec -\psi$.
	\end{enumerate}
	For the sake of a more compact notation, in what follows we sometimes write e.g.~$(\varphi\leq \psi)[\oba, \orb,\overline{\bgamma}, \overline{\rdelta}]$ in place of $(\varphi\leq \psi)$$[\oba/!\obx, \orb/!\ory,\overline{\bgamma}/!\obz, \overline{\rdelta}/!\orw]$.
	The colours are intended to help in identifying which subformula occurrences are in precedent (blue) or succedent (red) position (cf.~Footnote \ref{footnote: precedent succedent}).\footnote{The use of colours in this  notational convention is inspired by, but different from, the one introduced in \cite{GreRicMooTzi20}, where the blue (resp.~red) colour identifies the logical connectives algebraically  interpreted as right (resp.~left) adjoints or residuals. However, when restricted to the analytic inductive LE- inequalities, these two conventions coincide, since the main connective of a (non-atomic) positive (resp.~negative) maximal PIA-subformula is a right (resp.~left) adjoint/residual. Interestingly,  the so-called (strong) \emph{focalization} property of the focalized sequent calculi introduced in  \cite{GreRicMooTzi20} can be equivalently formulated in terms of maximal PIA-subtrees. }
	\end{notation}
\begin{lemma}
\label{lemma: from inductive to definite sequents} For any LE-language $\mathcal{L}$,  any analytic inductive $\mathcal{L}$-sequent $(\varphi\vdash \psi)[\oba/!\obx, \orb/!\ory,\overline{\bgamma}/!\obz, \overline{\rdelta}/!\orw]$ is equivalent to the conjunction of definite analytic inductive $\mathcal{L}$-sequents $(\varphi_j\vdash \psi_i)[\oba/!\obx, \orb/!\ory,\overline{\bgamma}/!\obz, \overline{\rdelta}/!\orw]$.
\end{lemma}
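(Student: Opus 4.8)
The plan is to deduce this statement from Lemma~\ref{lemma: reduction to definite}, applied to the \emph{Skeleton} of the given sequent, together with the basic lattice laws governing $\vee$ in precedent position and $\wedge$ in succedent position. First I would take an analytic inductive sequent presented as in Notation~\ref{notation: analytic inductive}, i.e.\ $(\varphi\vdash\psi)[\oba/!\obx, \orb/!\ory,\overline{\bgamma}/!\obz, \overline{\rdelta}/!\orw]$, and regard the Skeleton $(\varphi\vdash\psi)[!\obx,!\ory,!\obz,!\orw]$ as a pair of pure Skeleton formulas in the placeholder variables, which at this stage are treated as atomic. By the very definition of the notation, $+\varphi$ is a positive Skeleton formula and $-\psi$ is a negative Skeleton formula. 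Hence Lemma~\ref{lemma: reduction to definite}(2) applies to $\varphi$ and yields $\varphi\equiv\bigvee_{j}\varphi_j$ with each $\varphi_j$ definite positive Skeleton, while Lemma~\ref{lemma: reduction to definite}(1) applies to $\psi$ and yields $\psi\equiv\bigwedge_{i}\psi_i$ with each $\psi_i$ definite negative Skeleton. Re-substituting the maximal PIA- and $\varepsilon^\partial$-subformulas $\oba,\orb,\overline{\bgamma},\overline{\rdelta}$ into the placeholder leaves then gives $\varphi[\cdots]\equiv\bigvee_j\varphi_j[\cdots]$ and $\psi[\cdots]\equiv\bigwedge_i\psi_i[\cdots]$. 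Here I would note that the distribution steps may copy a given placeholder across distinct disjuncts (resp.\ conjuncts), but never duplicate it within a single $\varphi_j$ (resp.\ $\psi_i$), so the single-occurrence convention of $!\obx,!\ory,!\obz,!\orw$ is respected within each resulting sequent.

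Next I would rewrite the sequent using these equivalences. Since $\bigvee_j a_j\vdash b$ is equivalent to $a_j\vdash b$ for every $j$, and $a\vdash\bigwedge_i b_i$ is equivalent to $a\vdash b_i$ for every $i$, the sequent $\bigvee_j\varphi_j[\cdots]\vdash\bigwedge_i\psi_i[\cdots]$ is equivalent to the conjunction, ranging over all $i,j$, of the sequents $(\varphi_j\vdash\psi_i)[\oba/!\obx,\orb/!\ory,\overline{\bgamma}/!\obz,\overline{\rdelta}/!\orw]$. This is precisely the desired family, doubly indexed by the disjuncts $\varphi_j$ and the conjuncts $\psi_i$, and the equivalence holds both at the level of validity in the class of LEs and as interderivability, since all the identities invoked (the coordinatewise distribution laws used in Lemma~\ref{lemma: reduction to definite} and the $\vee$- and $\wedge$-rules) belong to the basic logic.

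It then remains to check that each $(\varphi_j\vdash\psi_i)[\cdots]$ is itself \emph{definite} analytic inductive. Definiteness of the Skeleton is immediate, since by construction neither $\varphi_j$ nor $\psi_i$ contains $\Delta$-adjoint nodes ($+\vee$ or $-\wedge$). The crucial observation—and the step I expect to require the most care—is that the distribution steps of Lemma~\ref{lemma: reduction to definite} merely rearrange the Skeleton while leaving every maximal PIA- and $\varepsilon^\partial$-subformula intact and preserving the sign of each occurrence. Granting this, each branch of $\varphi_j\vdash\psi_i$ is again the concatenation of a PIA-path (internal to a substituted PIA-subformula) with a Skeleton-path, hence good; the $\epsilon$-critical leaves and the SRR-nodes along critical branches all lie inside the unchanged PIA-subformulas, so conditions (1) and (2) of Definition~\ref{Inducive:Ineq:Def} persist with the same $\Omega$ and $\epsilon$; and the precedent/succedent polarities demanded in Notation~\ref{notation: analytic inductive} are preserved because distribution is sign-preserving. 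Thus the only genuine work is the bookkeeping verification that reduction to a definite Skeleton respects polarities and keeps the PIA-subtrees untouched, everything else following immediately from Lemma~\ref{lemma: reduction to definite} and the lattice laws.
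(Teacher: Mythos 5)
Your proposal is correct and follows essentially the same route as the paper's own proof: apply Lemma~\ref{lemma: reduction to definite} to the Skeleton $\varphi$ (positive, hence a disjunction of definite positive Skeletons) and $\psi$ (negative, hence a conjunction of definite negative Skeletons), then split the resulting sequent $\bigvee_j\varphi_j\vdash\bigwedge_i\psi_i$ into the doubly-indexed family $\varphi_j\vdash\psi_i$ via the lattice laws. The paper compresses the final splitting and the preservation of analytic inductivity into ``from which the statement readily follows,'' so your additional bookkeeping (placeholders copied across but not within disjuncts/conjuncts, PIA-subtrees and signs untouched by distribution) is a faithful elaboration rather than a different argument.
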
	
\begin{proof}
Since by assumption $\varphi(!\overline{x}, !\overline{y},!\overline{z}, !\overline{w})$ is positive Skeleton  and $\psi(!\overline{x}, !\overline{y},!\overline{z}, !\overline{w})$ is negative Skeleton, by Lemma \ref{lemma: reduction to definite}, the given sequent is equivalent to $(\bigvee_{j\in J}\varphi_j\vdash \bigwedge_{i\in I}\psi_i)[\oba/!\obx, \orb/!\ory,\overline{\bgamma}/!\obz, \overline{\rdelta}/!\orw]$, where every $\psi_i$ is definite negative Skeleton and  every $\varphi_j$ is definite positive Skeleton, from which the statement readily follows.
\end{proof}
	
\begin{notation}\label{notation: representations of signed generation trees}
We adopt the convention that in graphical representations of signed generation trees the squared variable occurrences are the $\epsilon$-critical ones, the doubly circled nodes are the Skeleton ones and the single-circle ones are PIA-nodes.
\end{notation}

	\begin{example}\label{Ex:Church-Rosser et al}
	\label{example:inductive and analytic inductive}
	Let $\mathcal{L}: = \mathcal{L}(\mathcal{F}, \mathcal{G})$, where $\mathcal{F}: = \{\Diamond\}$ and $\mathcal{G}: = \{\Box,\mor,\ararr\}$ with the usual arity and order-type.
The	$\mathcal{L}$-inequality $p\le \Diamond\Box p$ is $\epsilon$-Sahlqvist for $\epsilon(p) = 1$, but is not analytic inductive for any order-type, because the negative generation tree of $\Diamond \Box p$, which has only one branch, is not good. The Church-Rosser inequality $\Diamond \Box p \le \Box \Diamond p$ is analytic $\epsilon$-Sahlqvist for every order-type. 

The inequality $p \ararr (q \ararr r) \le ((p\ararr q) \ararr (\wbox p \ararr r))\mor \Diamond r$  is not Sahlqvist for any order-type: indeed, both the positive and the negative occurrence of $q$ occur under the scope of an SRR-connective. However, it is an analytic $(\Omega, \epsilon)$-inductive inequality, e.g.\ for $p <_\Omega q  <_\Omega r $ and $\epsilon(p,q,r)=(1,1,\partial)$. 

Below, we represent the signed generation trees pertaining to the inequalities above (see Notation \ref{notation: representations of signed generation trees}): 
	\begin{center}
		\begin{tikzpicture}
		\node at(-3,0){
			\begin{tikzpicture}
			\tikzstyle{level 1}=[level distance=1cm, sibling distance=2.5cm]
			\tikzstyle{level 2}=[level distance=1cm, sibling distance=1.5cm]
			\tikzstyle{level 3}=[level distance=1 cm, sibling distance=1.5cm]
			\node[draw] at (-1,0) {$\begin{aligned} +p \end{aligned}$}
			;
			\node at (0,0) {$\le$}; 
			
			\node[PIA] at (1,0) {$\begin{aligned} -\wdia \end{aligned}$}
			child{node[Ske]{$\begin{aligned} -\wbox \end{aligned}$}
				child{node{$-p$}}
			}
			;
			\end{tikzpicture}
		};
		\node at(4,0){
			\begin{tikzpicture}
			\tikzstyle{level 1}=[level distance=1cm, sibling distance=2.5cm]
			\tikzstyle{level 2}=[level distance=1cm, sibling distance=1.5cm]
			\tikzstyle{level 3}=[level distance=1 cm, sibling distance=1.5cm]
			\node[Ske] at (-1.5,0) {$\begin{aligned} +\wdia \end{aligned}$}
			child{node[PIA]{$\begin{aligned} +\wbox \end{aligned}$}
				child{node[draw]{$+p$}}
			}
			;
			\node at (0,0) {$\le$}; 
			
			\node[Ske] at (1.5,0) {$\begin{aligned} -\wbox \end{aligned}$}
			child{node[PIA]{$\begin{aligned} -\wdia \end{aligned}$}
				child{node{$-p$}}
			}
			;
			\node[rotate = -90] at (1, -1.5) {$\underbrace{\hspace{1.3cm}}$};
			\node at (0.75,-1.5) {$\rdelta$};
			\node[rotate = -90] at (-2, -1.5) {$\underbrace{\hspace{1.3cm}}$};
			\node at (-2.33,-1.5) {$\textcolor{blue}{\alpha_p}$};
			\end{tikzpicture}
		};
		\node at (0.7,-4){
			\begin{tikzpicture}
			\tikzstyle{level 1}=[level distance=1cm, sibling distance=2.5cm]
			\tikzstyle{level 2}=[level distance=1cm, sibling distance=2.5cm]
			\tikzstyle{level 3}=[level distance=1 cm, sibling distance=1.5cm]
			\node[PIA] at (-3.5,0) {$\begin{aligned} +\ararr \end{aligned}$}
			child{node{$-p$}}          
			child{node[PIA]{$\begin{aligned} +\ararr \end{aligned}$}
				child{node{$-q$}}
				child{node{$+r$}}
			};
			\node at (0,0) {$\le$}; 
			
			\node[Ske] at (4,0) {$\begin{aligned} -\mor \end{aligned}$}
			child {node[Ske] {$\begin{aligned} -\ararr \end{aligned}$}
				child{node[PIA]{$\begin{aligned} +\ararr \end{aligned}$}
					child{node{$-p$}}
					child{node[draw]{$+q$}}
				}
				child{node[Ske]{$\begin{aligned} -\ararr \end{aligned}$}
					child{node[PIA]{$\begin{aligned} +\wbox \end{aligned}$}
						child{node[draw]{$+p$}}
					}
					child{node[draw]{$-r$}}
				}            
			}
			child{node[PIA]{$\begin{aligned} -\wdia \end{aligned}$}
				child{node[draw]{$-r$}}
			}
			;
			\node[rotate = +90] at (3.7, -3.5) {$\underbrace{\hspace{1.3cm}}$};
			\node at (4.1,-3.5) {$\textcolor{blue}{\alpha_p}$};
			\node[rotate = +90] at (5.7, -1.5) {$\underbrace{\hspace{1.3cm}}$};
			\node at (6.1,-1.5) {$\textcolor{red}{\beta_{r1}}$};
			\node[rotate = -90] at (0.5, -2.5) {$\underbrace{\hspace{1.3cm}}$};
			\node at (0.1,-2.5) {$\textcolor{blue}{\alpha_q}$};
			\node[rotate = -90] at (-5, -1) {$\underbrace{\hspace{2.6cm}}$};
			\node at (-5.3,-1) {$\bgamma$};
			
			\node at (5.4,-3) {$\ \}\ \textcolor{red}{\beta_{r2}}$};
			
			\end{tikzpicture}
			
		};
		
	\end{tikzpicture}
\end{center}
\end{example}

The following auxiliary definition was introduced in \cite[Definition 48]{GMPTZ} as a simplified version of \cite[Definition 5.1]{CFPS}, and serves to calculate effectively the residuals of definite positive and negative PIA formulas (cf.\ \cite{GMPTZ}, discussion after Definition  \ref{Inducive:Ineq:Def}) w.r.t.\ a given variable occurrence $x$. The intended meaning of symbols such as $\varphi(!x, \oz)$ is that the variable $x$ occurs {\em exactly once} in the formula $\varphi$ (cf.~Notation \ref{notation: placeholder variables}). In the context of the following definition, the variable $x$ is used (and referred to) as the {\em pivotal} variable, i.e.~the variable that is displayed by effect of the recursive residuation procedure.

\begin{definition} 
\label{def: RA and LA}

For every definite positive PIA $\mathcal{L}_{\mathrm{LE}}$-formula $\psi = \psi(!x, \oz)$, and any definite negative PIA $\mathcal{L}_{\mathrm{LE}}$-formula $\xi = \xi(!x, \oz)$ such that $x$ occurs in them exactly once, the $\mathcal{L}^\ast_\mathrm{LE}$-formulas $\mathsf{la}(\psi)(u, \oz)$ and $\mathsf{ra}(\xi)(u, \oz)$ (for $u \in Var - (x \cup \oz)$) are defined by simultaneous recursion as follows: 

\begin{center}
	\begin{tabular}{r c l}
		
		$\mathsf{la}(x)$ &= &$u$;\\
		$\mathsf{la}(g(\overline{\psi_{-j}(\oz)},\psi_j(x,\oz), \overline{\xi(\oz)}))$ &= &$\mathsf{la}(\psi_j)(g^{\flat}_{j}(\overline{\psi_{-j}(\oz)},u, \overline{\xi(\oz)} ), \oz)$;\\
		$\mathsf{la}(g(\overline{\psi(\oz)}, \overline{\xi_{-j}(\oz)},\xi_j(x,\oz)))$ &= &$\mathsf{ra}(\xi_j)(g^{\flat}_{j}(\overline{\psi(\oz)}, \overline{\xi_{-j}(\oz)},u), \oz)$;\\
		&&\\
		$\mathsf{ra}(x)$ &= &$u$;\\
		$\mathsf{ra}(f(\overline{\xi_{-j}(\oz)},\xi_j(x,\oz), \overline{\psi(\oz)}))$ &= &$\mathsf{ra}(\xi_j)(f^{\sharp}_{j}(\overline{\xi_{-j}(\oz)},u, \overline{\psi(\oz)} ), \oz)$;\\
		$\mathsf{ra}(f(\overline{\xi(\oz)}, \overline{\psi_{-j}(\oz)},\psi_j(x,\oz)))$ &= &$\mathsf{la}(\psi_j)(f^{\sharp}_{j}(\overline{\xi(\oz)}, \overline{\psi_{-j}(\oz)},u), \oz)$.\\
	\end{tabular}
\end{center}
Above, symbols such as $\overline{\psi_{-j}}$ denote the vector obtained by removing the $j$th coordinate of the vector $\overline{\psi}$.
\end{definition}
\begin{example}\label{ex:adjunction} Let $\mathcal{L}: = \mathcal{L}(\mathcal{F}, \mathcal{G})$, where $\mathcal{F}: = \{\pdra\}$ and $\mathcal{G}: = \{\Box,\rightarrow\}$ with the usual arity and order-type. Let $\mathcal{F}^\ast: = \{\pdra, \Diamondblack, \otimes\}$ and $\mathcal{G}^\ast: = \{\Box,\rightarrow, \oplus\}$, where $\pdra$ (resp.~$\rightarrow$) has $\pdra'$ (resp.~$\rightarrow'$)  as residual in its first coordinate and $\oplus$ (resp.~$\otimes$) as residual in its second coordinate.
Consider the definite positive PIA $\mathcal{L}$-formula $\psi(!w, !y, !z): = \Box((w\pdra y) \rightarrow z)$. If  $x: = w$, then $\mathsf{la}(\psi)(u, !y, !z) = (\Diamondblack u\rightarrow' z)\pdra' y$. If $x: = y$, then $\mathsf{la}(\psi)(u, !w, !z) = (\Diamondblack u\rightarrow' z)\oplus w$. If $x: = z$, then $\mathsf{la}(\psi)(u, !w, !y) = (w\pdra y) \otimes\Diamondblack u$. 
\end{example}

\subsection{Display calculi for basic normal LE-logics} \label{ssec:syntactic frames associated w algebras}

In this section we define the proper display calculus $\mathrm{D.LE}$ for  the basic normal $\mathcal{L}_{\mathrm{LE}}$-logic in a fixed but arbitrary LE-signature $\mathcal{L} = \mathcal{L}(\mathcal{F}, \mathcal{G})$ (cf.~Section \ref{subset:language:algsemantics}.1). 
 Let $S_{\mathcal{F}}: = \{\FH \mid f\in \mathcal{F}^*\}$ and $S_{\mathcal{G}}: = \{\GC \mid g\in \mathcal{G}^*\}$ be the sets of structural connectives associated with  $\mathcal{F}^*$ and $ \mathcal{G}^*$ respectively (cf.~Section \ref{ssec:expanded language}). Each such structural connective comes with an arity and an order type which coincide with those of its associated operational connective in $ \mathcal{F}^*$ and $\mathcal{G}^*$.

\begin{remark} \label{rem: overloading the notation} If $f\in \mathcal{F}$ and $g\in \mathcal{G}$ form a dual pair,\footnote{Examples of dual pairs are $(\top, \bot)$, $(\wedge, \vee)$, $(\pdra, \pra)$, $(\pdla, \leftarrow)$, and $(\Diamond, \Box)$ where $\Diamond$ is defined as $\neg\Box\neg$.} 
	then $n_f = n_g$ and $\varepsilon_f = \varepsilon_g$. Then $f$ and $g$ can be assigned one and the same structural operator $H$, which is interpreted as $f$ when occurring in precedent position and as $g$ when occurring in succedent position (cf.~Footnote \ref{footnote: precedent succedent}):	
	\begin{center}
		\begin{tabular}{|r|c|c|}
			\hline
			\scriptsize{Structural symbols} & \mc{2}{c|}{$H$} \\
			\hline
			\scriptsize{Operational symbols} & $f$ & $g$ \\
			\hline
		\end{tabular}
	\end{center}
	Moreover, for any $1\leq i\leq n_f = n_g$, the residuals $f_i^\sharp$ and $g_i^\flat$ are dual to one another. Hence they can also be assigned one and the same structural connective as follows:
	
	\begin{center}
		\begin{tabular}{|r|c|c|c|c|}
			\hline
			\scriptsize{Order type}              & \mc{2}{c|}{$\varepsilon_f(i) = \varepsilon_g(i) = 1$} & \mc{2}{c|}{$\varepsilon_f(i) = \varepsilon_g(i) = \partial$} \\
			\hline
			\scriptsize{Structural symbols} & \mc{2}{c|}{$H_i$} & \mc{2}{c|}{$H_i$} \\
			\hline
			\scriptsize{Operational symbols} & \ \,\,$(g_i^\flat)$\ \,\, & $\rule[-1.2ex]{0pt}{0ex}(f_i^\sharp)\rule{0pt}{2.5ex}$ & \ \,\,$(f_i^\sharp)$\ \,\, & $(g_i^\flat)$\\
			\hline
		\end{tabular}
	\end{center}
 This observation has made it possible to associate one structural connective with two logical connectives, which has become common in the display calculi literature. In this paper, we prefer to maintain a strict one-to-one correspondence between operational and structural symbols. 
 
If we admit that the sets $\mathcal{F}$ and $\mathcal{G}$ have a non empty intersection (cf.~Footnote \ref{footnote: f and g operators}), then a unary connective $h \in \mathcal{F} \cap \mathcal{G}$ can be assigned one and the same structural operator $\tilde{h}$, which is interpreted as $h$ when occurring in precedent position and in succedent position:	
	\begin{center}
		\begin{tabular}{|r|c|c|}
			\hline
			\scriptsize{Structural symbols} & \mc{2}{c|}{$\rule[2.2ex]{0pt}{0ex}\tilde{h}$} \\
			\hline
			\scriptsize{Operational symbols} & $h$ & $h$ \\
			\hline
		\end{tabular}
	\end{center}
\end{remark}

For notational convenience, we let $\mathcal F^\partial:=\mathcal G$ and 
$\mathcal G^\partial:=\mathcal F$. Moreover, given the sets $\mathsf{Str}_\mathcal F$,
$\mathsf{Str}_\mathcal G$ defined below 
and any order type $\epsilon$ on $n$, we let $\mathsf{Str}_\mathcal{F}^{\epsilon} : = \prod_{i = 1}^{n}\mathsf{Str}_{\mathcal{F}^{\epsilon_i}}$ and $\mathsf{Str}_\mathcal{G}^{\epsilon} : = \prod_{i = 1}^{n}\mathsf{Str}_{\mathcal{G}^{\epsilon_i}}$. 

%

The calculus $\mathrm{D.LE}$ manipulates sequents $\Pi \vdash \Sigma$ where the structures $\Pi$ (for precedent) and $\Sigma$ (for succedent) are defined by the following simultaneous recursion: 

\begin{center}
	\begin{tabular}{@{}r@{}l@{}}
	 	
		$\rule[-1.2ex]{0pt}{0ex}\mathsf{Str}_\mathcal{F} \ni \Pi$ \ & $ ::=\  \varphi \mid \AATOP \mid \FH\, (\overline{\Pi}^{(\varepsilon_f)})$  \\
		
		$\rule[-1.2ex]{0pt}{0ex}\mathsf{Str}_\mathcal{G} \ni \Sigma$ \ & $ ::=\  \varphi \mid \ABOT \mid \GC\, (\overline{\Sigma}^{(\varepsilon_g)}) $  \\
	\end{tabular}
\end{center}
with $\varphi\in \mathcal{L}_\mathrm{LE}$, and $\FH \in S_{\mathcal{F}}$, $\GC\in S_{\mathcal{G}}$, $\overline{\Pi}^{(\varepsilon_f)}\in \mathsf{Str}_\mathcal{F}^{\epsilon_f}$ and $\overline{\Sigma}^{(\varepsilon_g)} \in \mathsf{Str}_\mathcal{G}^{\epsilon_g}$. Notice that for any connective $h$ of arity $n \geq 1$ the notational convention $\hat{h}$ conveys also the information that $h$ is a left-adjoint/residual and the notational convention $\check{h}$ conveys the information that $h$ is a right-adjoint/residual. 


In what follows, we use $\Upsilon_1, \ldots, \Upsilon_n$ as structure metavariables in $\mathsf{Str}_\mathcal{F} \cup \mathsf{Str}_\mathcal{G}$. The introduction rules of the calculus below will guarantee that $\Upsilon \in \mathsf{Str}_\mathcal{F}$ whenever it occurs in precedent position, and $\Upsilon \in \mathsf{Str}_\mathcal{G}$ whenever it occurs in succedent position. 
The calculus $\mathrm{D.LE} = \mathrm{D.LE}_{\mathcal{L}}$ consists of the following rules:\footnote{For any LE-language $\mathcal{L}$, we will sometimes let $\mathrm{D.LE}^\ast :=  \mathrm{D.LE}_{\mathcal{L}^\ast}$, i.e.~we will let $\mathrm{D.LE}^\ast$ denote the calculus  obtained by instantiating the general definition of the basic calculus $\mathrm{D.LE}_{\mathcal{L}}$ to $\mathcal{L}: = \mathcal{L}^\ast$.}

\begin{itemize}
	\item Identity and cut rules:\footnote{In the display calculi literature, the identity rule is sometimes defined as $\varphi \fCenter \varphi$, where $\varphi$ is an arbitrary, possibly complex, formula. The difference is inessential, given that, in any display calculus, $p \fCenter p$ is an instance of $\varphi \fCenter \varphi$, and $\varphi \fCenter \varphi$ is derivable for any formula $\varphi$ whenever $p \fCenter p$ is the Identity rule.}
\end{itemize}
\begin{center}
	\begin{tabular}{rl}
		\AXC{\phantom{$\Gamma \fCenter \varphi$}}
		\LL{\fns Id}
		\UI$p \fCenter p$
		\DP
		& 
		\AX$\Pi \fCenter \varphi$
		\AX$\varphi \fCenter \Sigma$
		\RL{\fns Cut}
		\BI$\Pi \fCenter \Sigma$
		\DP
		\\
	\end{tabular}
\end{center}


\begin{itemize}		
	\item Display postulates for $f\in \mathcal{F}$ and $g\in \mathcal{G}$: for any $1\leq i,j\leq n_f$ and $1\leq h,k\leq n_g$,
\end{itemize}
\begin{itemize}
	\item[] If $\varepsilon_{f}(i) = 1$ and $\varepsilon_{g}(h) = 1$,
\end{itemize}
\begin{center}
	\begin{tabular}{@{}c@{}c@{}}
		\AXC{$\FH\, (\Upsilon_1, \ldots, \Pi_i, \ldots, \Upsilon_{n_f}) \fCenter \Sigma$}
		\doubleLine
		\LL{\fns $\FH \dashv \FCS_i$}
		\UIC{$\Pi_i \fCenter \FCS_i\, (\Upsilon_1, \ldots, \Sigma, \ldots, \Upsilon_{n_f})$}
		\DP
		& 
		\AXC{$\Pi \fCenter \GC\, (\Upsilon_1 \ldots, \Sigma_h, \ldots \Upsilon_{n_g})$}
		\doubleLine
		\RL{\fns $\GHF_h \dashv \GC$}
		\UIC{$\GHF_h\, (\Upsilon_1, \ldots, \Pi, \ldots, \Upsilon_{n_g}) \fCenter \Sigma_h$}
		\DP \\
	\end{tabular}
\end{center}
\begin{itemize}
	\item[] If $\varepsilon_{f}(j) = \partial$ and $\varepsilon_{g}(k) = \partial$,
\end{itemize}
\begin{center}
	\begin{tabular}{@{}c@{}c@{}}					
		\AX$\FH\, (\Upsilon_1, \ldots, \Sigma_j, \ldots, \Upsilon_{n_f}) \fCenter \Sigma$
		\doubleLine
		\LL{\fns $(\FH, \FHS_j)$}
		\UI$\FHS_j\, (\Upsilon_1, \ldots, \Sigma, \ldots, \Upsilon_{n_f}) \fCenter \Sigma_j$
		\DP
		& 
		\AX$\Pi \fCenter \GC\, (\Upsilon_1, \ldots, \Pi_k, \ldots, \Upsilon_{n_g})$
		\doubleLine
		\RL{\fns $(\GC, \GCF_k)$}
		\UI$ \Pi_k \fCenter \GCF_k\, (\Upsilon_1, \ldots, \Pi, \ldots, \Upsilon_{n_g})$
		\DP \\
	\end{tabular}
\end{center}

\begin{itemize}
	\item Structural rules for lattice connectives:
\end{itemize}
\begin{center}
	\begin{tabular}{rl}
		\AX$\AATOP \fCenter \Sigma$
		\LL{\fns $\aatop_W$}
		\UI$\Pi \fCenter \Sigma$
		\DP
		& 
		\AX$\Pi \fCenter \ABOT$
		\RL{\fns $\abot_W$}
		\UI$\Pi \fCenter \Sigma$
		\DP
	\end{tabular}
\end{center}

\begin{itemize}
	\item Logical introduction rules for lattice connectives:
\end{itemize}
\begin{center}
	\begin{tabular}{rl}
		\AX$\AATOP \fCenter \Sigma$
		\LL{\fns $\aatop_L$}
		\UI$\aatop \fCenter \Sigma$
		\DP
		\ 
		\AXC{$\phantom{\AATOP \fCenter \Sigma}$}
		\RL{\fns $\aatop_R$}
		\UI$\AATOP \fCenter \aatop$
		\DP
		& 
		\AXC{$\phantom{\Pi \fCenter \ABOT}$}
		\LL{\fns $\abot_L$}
		\UI$\abot \fCenter \ABOT$
		\DP
		\ 
		\AX$\Pi \fCenter \ABOT$
		\RL{\fns $\abot_R$}
		\UI$\Pi \fCenter \abot$
		\DP
		\\
		
		& \\
		
		\AX$\psi \fCenter \Sigma$
		\LL{\fns $\aand_{L2}$}
		\UI$\varphi \aand \psi \fCenter \Sigma$
		\DP
		\ 
		\AX$\varphi \fCenter \Sigma$
		\LL{\fns $\aand_{L1}$}
		\UI$\varphi \aand \psi \fCenter \Sigma$
		\DP
		& 
		\AX$\Pi \fCenter \varphi$
		\AX$\Pi \fCenter \psi$
		\RL{\fns $\aand_R$}
		\BI$\Pi \fCenter \varphi \aand \psi$
		\DP
		\\
		
		& \\
		
		\AX$\varphi \fCenter \Sigma$
		\AX$\psi \fCenter \Sigma$
		\LL{\fns $\aor_L$}
		\BI$\varphi \aor \psi \fCenter \Sigma$
		\DP
		& 
		\AX$\Pi \fCenter \varphi$
		\RL{\fns $\aor_{R1}$}
		\UI$\Pi \fCenter \varphi \aor \psi$
		\DP
		\ 
		\AX$\Pi \fCenter \psi$
		\RL{\fns $\aor_{R2}$}
		\UI$\Pi \fCenter \varphi \aor \psi$
		\DP
		\\
	\end{tabular}
	
\end{center}

\begin{itemize}
	\item Logical introduction rules for $f\in\mathcal{F}$ and $g\in\mathcal{G}$:
\end{itemize}
\begin{center}
	\begin{tabular}{c}
		\bottomAlignProof
		\AxiomC{$\Big(\Upsilon_i \fCenter \varphi_i \quad \varphi_j \fCenter \Upsilon_j \mid 1\leq i, j\leq n_f, \varepsilon_{f}(i) = 1\mbox{ and } \varepsilon_{f}(j) = \partial\Big)$}
		\RL{\fns$f_R$}
		\UI$\FH\, (\Upsilon_1,\ldots, \Upsilon_{n_f})\fCenter f(\varphi_1,\ldots, \varphi_{n_f})$
		\DP
	\end{tabular}
\end{center}
\begin{center}
	\begin{tabular}{c}
		\bottomAlignProof
		\AxiomC{$\Big( \varphi_i \fCenter \Upsilon_i \quad \Upsilon_j \fCenter \varphi_j \,\mid\, 1\leq i, j\leq n_g, \varepsilon_{g}(i) = 1\mbox{ and } \varepsilon_{g}(j) = \partial \Big)$}
		\LL{\fns$g_L$}
		\UI$g(\varphi_1,\ldots, \varphi_{n_g}) \fCenter \GC\, (\Upsilon_1,\ldots, \Upsilon_{n_g})$
		\DP
	\end{tabular}
\end{center}
\begin{center}
	\begin{tabular}{c c}
		\bottomAlignProof
		\AX$\FH\, (\varphi_1,\ldots, \varphi_{n_f}) \fCenter \Sigma$
		\LL{\fns$f_L$}
		\UI$f(\varphi_1,\ldots, \varphi_{n_f}) \fCenter \Sigma$
		\DP
		&
		\bottomAlignProof
		\AX$\Pi \fCenter \GC\, (\varphi_1,\ldots, \varphi_{n_g})$
		\RL{\fns$g_R$}
		\UI$\Pi \fCenter g(\varphi_1,\ldots, \varphi_{n_g})$
		\DP
	\end{tabular}
\end{center}

If $f$ and $g$ are $0$-ary (i.e.~they are constants), the rules $f_R$ and $g_L$ above reduce to the axioms ($0$-ary rule) $\FH \fCenter f$ and $g \fCenter \GC$.

\begin{rem}\label{rem: both f and g operators}
If we admit that the sets $\mathcal{F}$ and $\mathcal{G}$ have a non empty intersection (c.f.~Footnote \ref{footnote: f and g operators}), then the rules capturing a generic connective $h \in (\mathcal{F} \cap \mathcal{G})$ of arity $n = 1$ are as follows (notice that the notational convention $\tilde{h}$ conveys also the information that $h$ is both a left-adjoint and a right-adjoint):

\begin{itemize}		
	\item Display postulates for $h\in (\mathcal{F} \cap \mathcal{G})$ occurring in precedent and in succedent position:
\end{itemize}
\begin{itemize}
	\item[] If $\varepsilon_{h}(1) = 1$,
\end{itemize}
\begin{center}
	\begin{tabular}{@{}c@{}c@{}}
		\AX$\tilde{h}\, \Pi \fCenter \Sigma$
		\doubleLine
		\LL{\fns $\tilde{h} \dashv \check{h}^{\,\sharp}$}
		\UI$\Pi \fCenter \check{h}^{\,\sharp} \Sigma$
		\DP
		& 
		\AX$\Pi \fCenter \tilde{h}\, \Sigma$
		\doubleLine
		\RL{\fns $\hat{h}^{\,\flat} \dashv \tilde{h}$}
		\UI$\hat{h}^{\,\flat}\, \Pi \fCenter \Sigma$
		\DP \\
	\end{tabular}
\end{center}
\begin{itemize}
	\item[] If $\varepsilon_{h}(1) = \partial$,
\end{itemize}
\begin{center}
	\begin{tabular}{@{}c@{}c@{}}					
		\AX$\tilde{h}\, \Sigma_1 \fCenter \Sigma_2$
		\doubleLine
		\LL{\fns $(\hat{h}^{\,\sharp}, \tilde{h})$}
		\UI$\hat{h}^{\,\sharp}\, \Sigma_2 \fCenter \Sigma_1$
		\DP
		& 
		\AX$\Pi_1 \fCenter \tilde{h}\, \Pi_2$
		\doubleLine
		\RL{\fns $(\check{h}^{\,\flat}, \tilde{h})$}
		\UI$ \Pi_2 \fCenter \check{h}^{\,\flat}\, \Pi_1$
		\DP \\
	\end{tabular}
\end{center}

\begin{itemize}
	\item Structural rules for $h\in (\mathcal{F} \cap \mathcal{G})$:
\end{itemize}
\begin{itemize}
	\item[] If $\varepsilon_{h}(1) = 1$,
\end{itemize}
\begin{center}
	\begin{tabular}{cc}
		\AX$\Pi \fCenter \Sigma$
		\LL{\fns $\tilde{h}$}
		\UI$\tilde{h}\, \Pi \fCenter \tilde{h}\, \Sigma$
		\DP
		 & 
		\AX$\tilde{h}\, \hat{h}^{\,\flat}\, \Pi \fCenter \Sigma$
		\LL{\fns $(\tilde{h}, \hat{h}^{\,\flat})$}
		\UI$\Pi \fCenter \Sigma$
		\DP
		\\
	\end{tabular}
\end{center}
\begin{itemize}
	\item[] If $\varepsilon_{h}(1) = \partial$,
\end{itemize}
\begin{center}
	\begin{tabular}{cc}
		\AX$\Pi \fCenter \Sigma$
		\RL{\fns $\tilde{h}$}
		\UI$\tilde{h}\, \Sigma \fCenter \tilde{h}\, \Pi$
		\DP
		 & 
		\AX$\Pi \fCenter \tilde{h}\, \check{h}^{\,\sharp}\, \Sigma$
		\RL{\fns $(\tilde{h}, \check{h}^{\,\sharp})$}
		\UI$\Pi \fCenter \Sigma$
		\DP
		\\
	\end{tabular}
\end{center}

\begin{itemize}
	\item Logical introduction rules for $h\in (\mathcal{F} \cap \mathcal{G})$ occurring in precedent and in succedent position:
\end{itemize}
\begin{center}
	\begin{tabular}{cc}
		\bottomAlignProof
		\AX$\tilde{h}\, (\varphi_1,\ldots, \varphi_{n_h}) \fCenter \Sigma$
		\LL{\fns$h_L$}
		\UI$h(\varphi_1,\ldots, \varphi_{n_h}) \fCenter \Sigma$
		\DP
		&
		\bottomAlignProof
		\AX$\Pi \fCenter \tilde{h}\, (\varphi_1,\ldots, \varphi_{n_h})$
		\RL{\fns$h_R$}
		\UI$\Pi \fCenter h(\varphi_1,\ldots, \varphi_{n_h})$
		\DP
	\end{tabular}
\end{center}
\end{rem}


Let $\mathrm{\cfDLE}$ 
denote the calculus obtained by removing Cut in $\mathrm{D.LE}$. In what follows, we indicate that the sequent $\varphi \vdash \psi$ is derivable in $\mathrm{D.LE}$ (resp.~in $\mathrm{\cfDLE}$) by $\vdash_{\mathrm{D.LE}} \varphi \vdash \psi$ (resp.~$\vdash_{\mathrm{\cfDLE}} \varphi \vdash \psi$).


\begin{prop}[Soundness]\label{prop:soundness of D.LE}
	The calculus $\mathrm{D.LE}$ (hence also $\mathrm{\cfDLE}$) is sound w.r.t.~the class of  complete $\mathcal{L}$-algebras.
\end{prop}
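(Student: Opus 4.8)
The plan is to follow the standard template for proving soundness of a (proper) display calculus. First I would fix a complete $\mathcal{L}$-algebra $\mathbb{A}$ and a variable assignment, and define by simultaneous recursion two interpretation maps sending each precedent structure $\Pi$ to an element $\Pi^{\mathbb{A}}$ and each succedent structure $\Sigma$ to an element $\Sigma^{\mathbb{A}}$ of $\mathbb{A}$. On formulas these maps coincide with the usual term interpretation; the structural constants $\AATOP$ and $\ABOT$ are sent to $\top^{\mathbb{A}}$ and $\bot^{\mathbb{A}}$; and each structural connective $\FH$ (resp. $\GC$) with $f \in \mathcal{F}^*$ (resp. $g \in \mathcal{G}^*$) is interpreted by the corresponding operation $f^{\mathbb{A}}$ (resp. $g^{\mathbb{A}}$). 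Here I would use that every complete $\mathcal{L}$-algebra carries a canonical structure of fully residuated $\mathcal{L}_\mathrm{LE}$-algebra: completeness guarantees that the residuals and Galois adjoints interpreting the connectives in $\mathcal{F}^* \setminus \mathcal{F}$ and $\mathcal{G}^* \setminus \mathcal{G}$ exist, so the interpretation is well defined and lands in the correct polarity. I would then declare $\Pi \vdash \Sigma$ \emph{valid} in $\mathbb{A}$ if $\Pi^{\mathbb{A}} \leq \Sigma^{\mathbb{A}}$ under every assignment, and valid in the class if valid in every complete $\mathcal{L}$-algebra.

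With validity fixed, the proof reduces to checking that each rule of $\mathrm{D.LE}$ preserves validity, i.e. that validity of the premises entails validity of the conclusion; an induction on the height of derivations then yields the claim. I would organise this by the groups in which the rules are presented. \textrm{Id} is immediate from reflexivity of $\leq$, and \textrm{Cut} from transitivity together with the fact that a formula has the same interpretation whether read as a precedent or a succedent structure. The structural rules $\top_W, \bot_W$ and the lattice introduction rules are routine consequences of $\top^{\mathbb{A}}$ being the top, $\bot^{\mathbb{A}}$ the bottom, and of the defining meet/join (in)equalities of a bounded lattice. The logical introduction rules for $f \in \mathcal{F}$ and $g \in \mathcal{G}$ follow from $\FH$ and $\GC$ interpreting exactly as $f^{\mathbb{A}}$ and $g^{\mathbb{A}}$, combined with the coordinatewise monotonicity/antitonicity dictated by the order-type $\varepsilon_f$ (resp. $\varepsilon_g$).

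The conceptual heart of the argument, and the step I expect to require the most care, is the soundness of the display postulates. Each such postulate is a two-way (invertible) rule, so I must show that, under the interpretation above, its top and bottom sequents express equivalent inequalities. This is precisely the content of the adjunction/residuation equivalences built into the definition of a fully residuated $\mathcal{L}_\mathrm{LE}$-algebra: for $f \in \mathcal{F}$ with $\varepsilon_f(i) = 1$ the equivalence between $f(\ldots, a_i, \ldots) \leq b$ and $a_i \leq f^\sharp_i(\ldots, b, \ldots)$, and the dual and Galois variants for $\varepsilon_f(i) = \partial$ and for $g \in \mathcal{G}$. The main bookkeeping burden is to verify that, for every coordinate and every order-type, the polarity conventions of the structural language line up with the $\leq$ versus $\geq$ direction of these equivalences, so that $\Pi^{\mathbb{A}}$ and $\Sigma^{\mathbb{A}}$ are correctly matched on the two sides of each postulate. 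Once this is in place, the invertibility of the display rules is immediate, and, every rule having been shown to preserve validity, the induction gives that every $\mathrm{D.LE}$-derivable sequent is valid in every complete $\mathcal{L}$-algebra; since $\cfDLE$ is obtained merely by deleting \textrm{Cut}, the same conclusion holds a fortiori for it.
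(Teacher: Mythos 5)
Your proposal is correct and follows essentially the same route as the paper's (much terser) proof: the paper likewise reduces everything to the observation that any complete $\mathcal{L}$-algebra is an $\mathcal{L}^*$-algebra, so the display postulates are sound by the adjunction/residuation properties and the logical rules by coordinatewise monotonicity/antitonicity, with the lattice rules being routine. Your write-up simply makes explicit the interpretation of structures and the induction on derivations that the paper leaves implicit.
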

\begin{proof}
	The soundness of the basic lattice rules is clear. The soundness of the remaining rules is due to the monotonicity (resp.~antitonicity) of the algebraic connectives interpreting each $f\in \mathcal{F}$ and $g\in \mathcal{G}$, and their adjunction/residuation properties, which hold since any complete $\mathcal{L}$-algebra is an $\mathcal{L}^*$-algebra.
\end{proof}
\begin{prop}\label{prop: DL has cut elim}
	
	The  calculus $\mathrm{D.LE}$
	is a proper display calculus (cf.~\cite[Theorem 26]{GMPTZ}), and hence cut elimination holds for it as a consequence of a Belnap-style cut elimination meta-theorem (cf.~\cite[Section 2.2 and Appendix A]{GMPTZ} and \cite[Theorem 2]{GreLianMosPal2020}).	\end{prop}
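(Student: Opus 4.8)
The plan is to verify directly that $\mathrm{D.LE}$ satisfies the defining conditions of a proper display calculus—namely Belnap's conditions $\mathrm{C}_1$–$\mathrm{C}_8$ together with Wansing's strengthening of $\mathrm{C}_6$ and $\mathrm{C}_7$ requiring every structural rule to be closed under uniform substitution—so that the Belnap-style cut elimination meta-theorem cited in the statement applies verbatim. All of these are conditions on the schematic shape of the rules, so the argument proceeds by a rule-by-rule inspection of the calculus presented above.

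First I would dispatch the bookkeeping conditions. Condition $\mathrm{C}_1$ (each premise is built from subformulas of the conclusion) holds by inspection of the logical introduction rules, since the formulas $\varphi_i$ occurring in the premises of $f_L, f_R, g_L, g_R$ are exactly the immediate subformulas of the principal formula. Conditions $\mathrm{C}_2$ (shape-alikeness of parameters), $\mathrm{C}_3$ (non-proliferation of parameters), and $\mathrm{C}_4$ (position-alikeness of congruent parameters) are immediate from the fact that in every rule of $\mathrm{D.LE}$ the parametric structures occur as structure-metavariables $\Upsilon_i$, or as the side structures $\Pi, \Sigma$, which are carried unchanged from premises to conclusion and always occupy a fixed precedent/succedent position determined by the order-type $\varepsilon_f(i)$ (resp.\ $\varepsilon_g(i)$). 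Condition $\mathrm{C}_5$ (display of principal constituents) holds because each logical introduction rule is designed so that its principal formula constitutes the entire antecedent or the entire succedent of the conclusion.

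The properness requirement on $\mathrm{C}_6$/$\mathrm{C}_7$ is where the design of $\mathrm{D.LE}$ pays off: every structural rule of the calculus (the display postulates and the weakening-style rules $\top_W, \bot_W$) is purely schematic in the structure-metavariables and carries no side condition restricting the substituend, so each is trivially closed under uniform substitution of arbitrary structures of the appropriate precedent/succedent type. This is exactly the content of properness, and it simultaneously secures the display property, which I would record by noting that the display postulates $\FH \dashv \FCS_i$ and $(\FH, \FHS_j)$ (together with their $\mathcal{G}$-counterparts) allow any substructure to be isolated as the whole of one side of a sequent.

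The substantial step is condition $\mathrm{C}_8$ (eliminability of matching principal constituents): for every connective one must exhibit a transformation replacing a cut whose cut-formula is principal on both sides—introduced below by the left rule and the right rule of the same connective—by a derivation using cuts only on the immediate subformulas. For the lattice connectives $\wedge, \vee, \top, \bot$ this is the standard reduction. For a connective $f \in \mathcal{F}$, a cut between the conclusions of $f_R$ and $f_L$ reduces, by using the display postulates to isolate each argument in the premise $\FH(\varphi_1,\ldots,\varphi_{n_f}) \vdash \Sigma$ of $f_L$ and then cutting with the corresponding premise of $f_R$, to cuts on the subformulas $\varphi_i$; here the order-type $\varepsilon_f(i)$ dictates whether the $i$th argument is displayed in precedent or succedent position and hence which of the $\sharp$-adjoint postulates is invoked, and the case of $g \in \mathcal{G}$ is order-dual. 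I expect this uniform handling of arbitrary arities and of mixed order-types—tracking which arguments sit in precedent versus succedent position and selecting the correct $\sharp$- or $\flat$-adjoint display postulate—to be the main obstacle, even though each individual reduction follows the familiar pattern. The remaining routine verifications, and the full detail of $\mathrm{C}_8$, are precisely those carried out in the proof of \cite[Theorem 26]{GMPTZ}, to which the argument can be deferred.
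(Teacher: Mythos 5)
Your proposal is correct and coincides with the paper's own justification: the paper offers no inline proof of Proposition~\ref{prop: DL has cut elim}, but simply defers to \cite[Theorem 26]{GMPTZ} and the Belnap-style meta-theorem of \cite[Section 2.2 and Appendix A]{GMPTZ} and \cite[Theorem 2]{GreLianMosPal2020}, and the content of that citation is exactly the verification you sketch (conditions $\mathrm{C}_1$--$\mathrm{C}_8$ plus Wansing's properness requirement, with the display postulates handling $\mathrm{C}_8$ via the order-type-sensitive residuation moves). Your rule-by-rule outline, including the final deferral to \cite[Theorem 26]{GMPTZ} for the full detail of $\mathrm{C}_8$, is therefore an accurate expansion of the same argument.
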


\subsection{The setting of distributive LE-logics}	
	
In this section we discuss how the general setting presented above can account for  the assumption that the given LE-logic is distributive, i.e.\ that the {\em distributive laws} $(p\vee r)\wedge (p\vee q)\vdash p\vee (r\wedge q)$ and $ p\wedge (r\vee q) \vdash (p\wedge r)\vee (p\wedge q)$ are valid.	Such logics will be referred to as DLE-logics, since they are algebraically captured by varieties of {\em normal distributive lattice expansions} (DLEs), i.e.~LE-algebras as 
 in Definition \ref{def:LE} such that $\mathbb{L}$ is assumed to be a \emph{bounded distributive lattice}. For any (D)LE-language, the basic 
 $\mathcal{L}_\mathrm{DLE}$-logic
 is defined as in Definition \ref{def:LE:logic:general} augmented with  the distributive laws above. 

Since $\land$ and $\lor$ distribute over each other, besides being $\Delta$-adjoints, they can also be treated as elements of $\mathcal{F}$ and $\mathcal{G}$ respectively. In particular,  the binary connectives $\leftarrow$ and $\rightarrow$ occur  in the fully residuated language $\mathcal{L}_\mathrm{DLE}^*$, the intended interpretations of which are the right residuals of $\wedge$ in the first and second coordinate respectively, as well as the binary connectives $\pdla$ and $ \pdra$, the intended interpretations of which are the left residuals of $\vee$ in the first and second coordinate, respectively.
Following the general convention discussed in Section \ref{ssec:expanded language}, we stipulate that $\pdra, \pdla\in \mathcal{F}^*$ and $\rightarrow, \leftarrow \;  \in  \mathcal{G}^*$. The basic fully residuated $\mathcal{L}^\ast_\mathrm{DLE}$-{\em logic}, which will sometimes be referred to as the {\em basic bi-intuitionistic `tense'} DLE-logic, is given as per Definition \ref{def:tense lattice logic}. In particular, the residuation rules for the lattice connectives are specified as follows:\footnote{Notice that $\varphi\rightarrow \chi$ and $\chi\leftarrow \varphi$ are interderivable for any $\varphi$ and $\psi$, since $\wedge$ is commutative; similarly,  $\psi \pdra \varphi $ and $\varphi \pdla \psi $ are interderivable, since $\vee$ is commutative. Hence in what follows we consider explicitly only $\rightarrow$ and $\pdla$.}
		$$
			\begin{array}{cccc}
			\AX$\varphi\wedge\psi \fCenter \chi$
			\doubleLine
			\UI$\psi \fCenter \varphi\rightarrow \chi$
			\DP
			&
			\AX$\psi\wedge\varphi \fCenter \chi$
			\doubleLine
			\UI$\psi \fCenter \chi\leftarrow \varphi$
			\DP
			&
			\AX$\varphi \fCenter \psi\vee\chi$
			\doubleLine
			\UI$\psi \pdra \varphi \fCenter \chi$
			\DP
			&
			\AX$\varphi \fCenter \chi\vee\psi$
			\doubleLine
			\UI$\varphi \pdla \psi \fCenter \chi$
			\DP
			\end{array}
			$$

When interpreting LE-languages on perfect distributive lattice expansions (perfect DLEs, cf.~Footnote \ref{def:can:ext}), the logical disjunction is interpreted by means of the coordinatewise completely $\wedge$-preserving join operation of the lattice, and the logical conjunction with the coordinatewise completely $\vee$-preserving meet operation of the lattice. Hence we are justified in listing $+\wedge$ and $-\vee$ among the SLRs, and $+\vee$ and $-\wedge$ among the SRRs, as is done in Table \ref{Join:and:Meet:Friendly:Table:DLE}. Consequently, the classes of ({\em analytic}) {\em inductive $\mathcal{L}_\mathrm{DLE}$-inequalities} are obtained
	by simply applying Definitions \ref{def:good:branch} and \ref{Inducive:Ineq:Def} with respect to Table \ref{Join:and:Meet:Friendly:Table:DLE} below. 
	\begin{table}[h]
		\begin{center}
			\begin{tabular}{| c | c |}
				\hline
				Skeleton &PIA\\
				\hline
				$\Delta$-adjoints & SRA \\
				\begin{tabular}{ c c c c c c}
					$\phantom{\wedge}$ &$+$ &$\vee$ &$\phantom{\lhd}$ & &\\
					$\phantom{\vee}$ &$-$ &$\wedge$ \\
					
				\end{tabular}
				&
				\begin{tabular}{c c c c }
					$+$ &$\wedge$ &$g$ & with $n_g = 1$ \\
					$-$ &$\vee$ &$f$ & with $n_f = 1$ \\
		
				\end{tabular}
				\\			\hline
				SLR &SRR\\
				\begin{tabular}{c c c c }
					$+$ & $\wedge$  &$f$ & with $n_f \geq 1$\\
					$-$ & $\vee$ &$g$ & with $n_g \geq 1$ \\
				\end{tabular}
				&\begin{tabular}{c c c c}
					$+$ & $\vee$ &$g$ & with $n_g \geq 2$\\
					$-$ & $\wedge$  &$f$ & with $n_f \geq 2$\\
				\end{tabular}
				\\
				\hline
			\end{tabular}
		\end{center}
		\vspace{-1em}
		\caption{Skeleton and PIA nodes for $\mathcal{L}_\mathrm{DLE}$.}\label{Join:and:Meet:Friendly:Table:DLE}
	\end{table}

Precisely because, as reported in Table \ref{Join:and:Meet:Friendly:Table:DLE}, the nodes $+\land$ and $-\lor$ are now also SLR nodes, and $+\lor$ and $-\land$ are also SRR nodes (see also Remark \ref{remark:distr}),  the classes of (analytic) inductive $\mathcal{L}_\mathrm{DLE}$-inequalities are strictly larger than the classes of (analytic) inductive $\mathcal{L}_\mathrm{LE}$-inequalities in the same signature, as shown in the next example.
\begin{example}\label{Ex:associativity}
The inequality $\Diamond \Box( p \lor q) \le \Box \Diamond p\lor \Box \Diamond q$ is not an  inductive $\mathcal{L}_\mathrm{LE}$-inequality for any order-type, but it is an $\epsilon$-Sahlqvist $\mathcal{L}_\mathrm{DLE}$-inequality e.g.~for $\epsilon(p,q)=(\partial,\partial)$. The classification of nodes in the signed generation trees of $\Diamond \Box( p \lor q) \le \Box \Diamond p\lor \Box \Diamond q$ as an $\mathcal{L}_\mathrm{DLE}$-inequality is on the left-hand side of the picture below, and the one as an $\mathcal{L}_\mathrm{LE}$-inequality is on the right (see Notation \ref{notation: representations of signed generation trees}). In the classification on the right, no branch is good, therefore $\Diamond \Box( p \lor q) \le \Box \Diamond p\lor \Box \Diamond q$ is not an inductive $\mathcal{L}_\mathrm{LE}$-inequality for any order-type. 

\begin{center}
\begin{tabular}{c}
\begin{tikzpicture}
		\node at(-4.5,0){
			\begin{tikzpicture}
			\tikzstyle{level 1}=[level distance=1cm, sibling distance=2.5cm]
			\tikzstyle{level 2}=[level distance=1cm, sibling distance=2.5cm]
			\tikzstyle{level 3}=[level distance=1 cm, sibling distance=1.5cm]
			\node[Ske] at (-2,0) {$\begin{aligned} +\Diamond \end{aligned}$}
			child{node[PIA]{$\begin{aligned}+\Box\end{aligned}$}
			child{node[PIA]{$\begin{aligned}+\lor\end{aligned}$}
			child{node{$+p$}}
			child{node{$+q$}}
			}
			}          
			;
			\node at (0,0) {$\le$}; 
			
			\node[Ske] at (2,0) {$\begin{aligned} -\lor \end{aligned}$}
			child{node[Ske]{$\begin{aligned} -\Box \end{aligned}$}
			child{node[PIA]{$\begin{aligned} -\Diamond \end{aligned}$}
			child{node[draw]{$-p$}}
			}
			}
			child{node[Ske]{$\begin{aligned} -\Box \end{aligned}$}
			child{node[PIA]{$\begin{aligned} -\Diamond \end{aligned}$}
			child{node[draw]{$-q$}}
			}
			}
			;
			
			\node[rotate = +90] at (3.8, -2.5) {$\underbrace{\hspace{1.3cm}}$};
			\node at (4.2,-2.5) {$\textcolor{red}{\beta_q}$};
			
			\node[rotate = -90] at (0.2, -2.5) {$\underbrace{\hspace{1.3cm}}$};
			\node at (-0.15,-2.5) {$\textcolor{red}{\beta_p}$};
			
			\node[rotate = -90] at (-3.1, -2) {$\underbrace{\hspace{2.6cm}}$};
			\node at (-3.5,-2) {$\bgamma$};
			
		\end{tikzpicture}
	};
	\node at(+4.5,0){
	\begin{tikzpicture}
		\tikzstyle{level 1}=[level distance=1cm, sibling distance=2.5cm]
		\tikzstyle{level 2}=[level distance=1cm, sibling distance=2.5cm]
		\tikzstyle{level 3}=[level distance=1 cm, sibling distance=1.5cm]
		\node[Ske] at (-2,0) {$\begin{aligned} +\Diamond \end{aligned}$}
		child{node[PIA]{$\begin{aligned}+\Box\end{aligned}$}
			child{node[Ske]{$\begin{aligned}+\lor\end{aligned}$}
				child{node{$+p$}}
				child{node{$+q$}}
			}
		}          
		;
		\node at (0,0) {$\le$}; 
		
		\node[PIA] at (2,0) {$\begin{aligned} -\lor \end{aligned}$}
		child{node[Ske]{$\begin{aligned} -\Box \end{aligned}$}
			child{node[PIA]{$\begin{aligned} -\Diamond \end{aligned}$}
				child{node{$-p$}}
			}
		}
		child{node[Ske]{$\begin{aligned} -\Box \end{aligned}$}
			child{node[PIA]{$\begin{aligned} -\Diamond \end{aligned}$}
				child{node{$-q$}}
			}
		}
		;
		
		
		
		
	\end{tikzpicture}
	};
\end{tikzpicture}
\end{tabular}
\end{center}

The inequality $p \land  (q \lor r )\le q \lor ( p\land r)$ is an $\epsilon$-Sahlqvist $\mathcal{L}_\mathrm{DLE}$-inequality e.g.~for $\epsilon(p,q, r)=(1, 1, 1)$ but is not an inductive $\mathcal{L}_\mathrm{LE}$-inequality for any order-type.  The classification of nodes in the signed generation trees of $p \land  (q \lor r )\le q \lor ( p\land r)$ as an $\mathcal{L}_\mathrm{DLE}$-inequality is on the left-hand side of the picture below, and the one as an $\mathcal{L}_\mathrm{LE}$-inequality is on the right (see Notation \ref{notation: representations of signed generation trees}). The squared variable occurrences are the $\epsilon$-critical ones, the doubly circled nodes are Skeleton and the single-circle ones are PIA. In the classification on the right, no branch is good leading to occurrences of $r$, therefore $p \land  (q \lor r )\le q \lor ( p\land r)$ is not an inductive $\mathcal{L}_\mathrm{LE}$-inequality for any order-type.
\begin{center}
\begin{tabular}{c}
	\begin{tikzpicture}
		\node at(-4,0){
			\begin{tikzpicture}
			\tikzstyle{level 1}=[level distance=1cm, sibling distance=2.5cm]
			\tikzstyle{level 2}=[level distance=1cm, sibling distance=1.5cm]
			\tikzstyle{level 3}=[level distance=1 cm, sibling distance=1.5cm]
			\node[Ske] at (-2,0) {$\begin{aligned} + \land \end{aligned}$}
child{node[draw]{$+p$}}
            child{node[Ske]{$\begin{aligned} +\lor \end{aligned}$}
            child{node[draw]{$+q$}}
               child{node[draw]{$+r$}}
               }
			;
			\node at (0,0) {$\le$}; 
			
			\node[Ske] at (2,0) {$\begin{aligned} -\lor \end{aligned}$}
child{node{$-q$}}
            child{node[Ske]{$\begin{aligned} -\land \end{aligned}$}
            child{node{$-p$}}
               child{node{$-r$}}
               }
			;
			\end{tikzpicture}
		};

	\node at(+4,0){
			\begin{tikzpicture}
			\tikzstyle{level 1}=[level distance=1cm, sibling distance=2.5cm]
			\tikzstyle{level 2}=[level distance=1cm, sibling distance=1.5cm]
			\tikzstyle{level 3}=[level distance=1 cm, sibling distance=1.5cm]
			\node[PIA] at (-2,0) {$\begin{aligned} + \land \end{aligned}$}
child{node{$+p$}}
            child{node[Ske]{$\begin{aligned} +\lor \end{aligned}$}
            child{node{$+q$}}
               child{node{$+r$}}
               }
			;
			\node at (0,0) {$\le$}; 
			
			\node[PIA] at (2,0) {$\begin{aligned} -\lor \end{aligned}$}
child{node{$-q$}}
            child{node[Ske]{$\begin{aligned} -\land \end{aligned}$}
            child{node{$-p$}}
               child{node{$-r$}}
               }
			;
			\end{tikzpicture}
		};

	\end{tikzpicture}
\end{tabular}
\end{center}
\end{example}

Also, definite  Skeleton and definite PIA $\mathcal{L}_{\mathrm{DLE}}$-formulas are defined verbatim in the same way as in the setting of $\mathcal{L}_{\mathrm{LE}}$-formulas. Namely, $\ast \xi$ (resp.~$\ast \varphi$) is definite Skeleton (resp.~definite PIA) iff  all nodes of $\ast \xi$ (resp.~$\ast \varphi$) are SLR (resp.~SRR).	However, the classification of nodes we need to consider is now the one of Table  \ref{Join:and:Meet:Friendly:Table:DLE}, where $+\wedge$ and $-\vee$ are also SLR-nodes, and $+\vee$ and $-\wedge$ are also SRR-nodes.			
Definition \ref{def: RA and LA} is specified for $\land$, $\lor$, $\rightarrow$ and $\pdla$ as follows:
			\begin{center}
				\begin{tabular}{r c l}
					$\mathsf{la}(\xi(\oz) \rightarrow \psi(x, \oz))$ &= &$\mathsf{la}(\psi)(u \wedge \xi(\oz), \oz)$;\\
					$\mathsf{la}(\psi_1(\oz) \vee \psi_2(x, \oz))$ &= &$\mathsf{la}(\psi_2)(u \pdla \psi_1(\oz), \oz)$;\\
					$\mathsf{la}(\xi(x, \oz) \rightarrow \psi(\oz))$ &= &$\mathsf{ra}(\xi)(u \rightarrow \psi(\oz), \oz)$;\\
					&&\\
					$\mathsf{ra}(\xi(x, \oz) \pdla \psi(\oz))$ &= &$\mathsf{ra}(\xi)(\psi(\oz) \vee u, \oz)$;\\
					$\mathsf{ra}(\xi_1(\oz) \wedge \xi_2(x, \oz))$ &= &$\mathsf{ra}(\xi_2)(\xi_1(\oz) \rightarrow u, \oz)$;\\
					$\mathsf{ra}(\xi(\oz) \pdla \psi(x, \oz))$ &= &$\mathsf{la}(\psi)(\xi(\oz) \pdla u, \oz)$;\\
				\end{tabular}
			\end{center} 
Finally, as to  the display calculus $\mathrm{D.DLE}$ for the basic $\mathcal{L}_{\mathrm{DLE}}$-logic,  its language is obtained by augmenting the language of $\mathrm{D.LE}$ with the following structural symbols for the lattice operators and their residuals:\footnote{In the presence of the exchange rules $E_L$ and $E_R$, the structural connectives $\ADRARR, \ALARR$ and the corresponding operational connectives $\adrarr, \alarr$ are redundant. For simplicity, we consider languages and calculi where the operational connectives $\adrarr$ and $\alarr$ and their introduction rules are not included.}
						
			\begin{center}
				\begin{tabular}{|r|c|c|c|c|c|c|c|c|}
					\hline
					\scriptsize{Structural symbols} & $\ATOP$ & $\ABOT$ & $\AAND$ & $\AOR$ & $\ADRARR$ & $\ARARR$ & $\ADLARR$ & $\ALARR$ \\
					\hline
					\scriptsize{Operational symbols} & $\top$ & $\bot$ & $\pand$ & $\por$ & $(\pdra)$ & $(\pra)$ & $(\pdla)$ & $(\leftarrow)$\\
					\hline
				\end{tabular}
			\end{center}			

		
\noindent Display postulates for lattice connectives and their residuals are specified as follows:
				\begin{center}
					\begin{tabular}{c}
						\AX$\Pi_1 \AAND \Pi_2 \fCenter \Sigma$
						\LL{\fns$\hat{\wedge} \dashv \check{\ararr}$}
						\doubleLine
						\UI$\Pi_2 \fCenter \Pi_1 \ARARR \Sigma$
						\DP
						\ \ 
						\AX$\Pi_1 \AAND \Pi_2 \fCenter \Sigma$
						\LL{\fns$\hat{\wedge} \dashv \check{\alarr}$}
						\doubleLine
						\UI$\Pi_1 \fCenter \Sigma \ALARR \Pi_2$
						\DP
						\ \ 
						\AX$\Pi \fCenter \Sigma_1 \AOR \Sigma_2$
						\RL{\fns$\hat{\adrarr} \dashv \check{\aor}$}
						\doubleLine
						\UI$\Sigma_1 \ADRARR \Pi \fCenter \Sigma_2$
						\DP
						 \ \ 
						\AX$\Pi \fCenter \Sigma_1 \AOR \Sigma_2$
						\RL{\fns$\hat{\adlarr} \dashv \check{\aor}$}
						\doubleLine
						\UI$\Pi \ADLARR \Sigma_2 \fCenter \Sigma_1$
						\DP
					\end{tabular}
				\end{center}

\noindent Moreover, $\mathrm{D.DLE}$ is augmented with the following structural rules encoding the characterizing properties of the  lattice connectives:
				\begin{center}
					\begin{tabular}{rlcrl}
						\AX$\Pi \fCenter \Sigma$
						\doubleLine
						\LL{\fns$\ATOP_{L}$}
						\UI$\ATOP \AAND \Pi \fCenter Y$
						\DP
						&
						\AX$\Pi \fCenter \Sigma$
						\doubleLine
						\RL{\fns$\ABOT_{R}$}
						\UI$\Pi \fCenter \Sigma \AOR \ABOT$
						\DP
						& &
						\AX$\Pi_1 \AAND \Pi_2 \fCenter \Sigma$
						\LL{\fns$E_L$}
						\UI$\Pi_2 \AAND \Pi_1 \fCenter \Sigma$
						\DP
						&
						\AX$\Pi \fCenter \Sigma_1 \AOR \Sigma_2$
						\RL{\fns$E_R$}
						\UI$\Pi \fCenter \Sigma_2 \AOR \Sigma_1$
						\DP
						\\
						&\\
						\AX$\Pi_2 \fCenter \Sigma$
						\LL{\fns$W_L$}
						\UI$\Pi_1 \AAND \Pi_2 \fCenter \Sigma$
						\DP
						&
						\AX$\Pi \fCenter \Sigma_1$
						\RL{\fns$W_R$}
						\UI$\Pi \fCenter \Sigma_1 \AOR \Sigma_2$
						\DP
						& &
						\AX$\Pi \AAND \Pi \fCenter \Sigma$
						\LL{\fns$C_L$}
						\UI$\Pi \fCenter \Sigma$
						\DP
						&
						\AX$\Pi \fCenter \Sigma \AOR \Sigma$
						\RL{\fns$C_R$}
						\UI$\Pi \fCenter \Sigma$
						\DP
						\\
						&\\
						\mc{2}{c}{
							\AX$\Pi_1 \AAND (\Pi_2 \AAND \Pi_3) \fCenter \Sigma$
							\doubleLine
							\LL{\fns$A_{L}$}
							\UI$(\Pi_1 \AAND \Pi_2) \AAND \Pi_3 \fCenter \Sigma$
							\DP}
						& &
						\mc{2}{c}{
							\AX$\Pi \fCenter (\Sigma_1 \AOR \Sigma_2) \AOR \Sigma_3$
							\doubleLine
							\RL{\fns$A_{R}$}
							\UI$\Pi \fCenter \Sigma_1 \AOR (\Sigma_2 \AOR \Sigma_3)$
							\DP}
					\end{tabular}
				\end{center}
and the introduction rules for the lattice connectives (and their residuals) follow the same pattern as the introduction rules of any $f\in \mathcal{F}$ and $g\in \mathcal{G}$:
		
				{
				\begin{center}
					\begin{tabular}{@{}rl | rl@{}}
						\AXC{\phantom{$\abot \fCenter \ABOT$}}
						\LL{\fns$\abot_L$}
						\UI$\abot \fCenter \ABOT$
						\DP
						&
						\AX$\Pi \fCenter \ABOT$
						\RL{\fns$\abot_R$}
						\UI$\Pi \fCenter \abot$
						\DP
						&
						\AX$\ATOP \fCenter \Sigma$
						\LL{\fns$\aatop_L$}
						\UI$\aatop \fCenter \Sigma$
						\DP
						&
						\AXC{\phantom{$\ATOP \fCenter \aatop$}}
						\RL{\fns$\aatop_R$}
						\UI$\ATOP \fCenter \top$
						\DP
						\\
						& & & \\
						\AX$\varphi \AAND \psi \fCenter \Sigma$
						\LL{\fns$\pand_L$}
						\UI$\varphi \pand \psi \fCenter \Sigma$
						\DP
						&
						\AX$\Pi_1 \fCenter \varphi$
						\AX$\Pi_2 \fCenter \psi$
						\RL{\fns$\pand_R$}
						\BI$\Pi_1 \AAND \Pi_2 \fCenter \varphi \pand \psi$
						\DP
						\ \ &\ \ 
						\AX$\varphi \fCenter \Sigma_1$
						\AX$\psi \fCenter \Sigma_2$
						\LL{\fns$\por_L$}
						\BI$\varphi \por \psi \fCenter \Sigma_1 \AOR \Sigma_2$
						\DP
						&
						\AX$\Pi \fCenter \varphi \AOR \psi$
						\RL{\fns$\por_R$}
						\UI$\Pi \fCenter \varphi \por \psi$
						\DP
						\\
					\end{tabular}
				\end{center}
				 }
			
\begin{rem}	 \label{rem: derivable rules}
Rules $\aand_{L1}$, $\aand_{L2}$, $\aand_{L}$, $\aor_{R1}$, $\aor_{R2}$ and $\aor_{R}$ in  $\mathrm{D.LE^\ast}$ are derivable in $\mathrm{D.DLE}$ as follows:
\begin{center}
\begin{tabular}{ccc}
{
$\aand_{L2}$:
\begin{tabular}{c}
\AX$\psi \fCenter \Sigma$
\LL{\fns$W_L$}
\UI$\varphi \AAND \psi \fCenter \Sigma$
\LL{\fns$\aand_L$}
\UI$\varphi \aand \psi \fCenter \Sigma$
\DP 
 \\
\end{tabular}
}
&
{
$\aand_{L1}$:
\begin{tabular}{c}
\AX$\varphi \fCenter \Sigma$
\LL{\fns$W_L$}
\UI$ \psi \AAND \varphi \fCenter \Sigma$
\LL{\fns$E_L$}
\UI$ \varphi \AAND \psi \fCenter \Sigma$
\LL{\fns$\aand_L$}
\UI$\varphi \aand \psi \fCenter \Sigma$
\DP 
 \\
\end{tabular}
}
 & 
{
$\aand_{R}$:
\begin{tabular}{c}
\AX$\Pi \fCenter \varphi$
\AX$\Pi \fCenter \psi$
\RL{\fns$\aand_R$}
\BI$\Pi \AAND \Pi \fCenter \varphi \aand \psi$
\LL{\fns$C_L$}
\UI$\Pi \fCenter \varphi \aand \psi$
\DP 
 \\
\end{tabular}
}
\\

{
$\aor_{L}$:
\begin{tabular}{c}
\AX$\varphi \fCenter \Sigma$
\AX$ \psi \fCenter \Sigma$
\LL{\fns$\aor_L$}
\BI$\varphi \aor \psi \fCenter \Sigma \AOR\Delta$
\RL{\fns$C_R$}
\UI$\varphi \aor \psi \fCenter \Sigma$
\DP 
 \\
\end{tabular}
}
 & 
{
$\aor_{R1}$:
\begin{tabular}{c}
\AX$\Pi \fCenter \varphi$
\RL{\fns$W_R$}
\UI$\Pi \fCenter \varphi \AOR \psi$
\RL{\fns$\aor_R$}
\UI$\Pi \fCenter \varphi \aor \psi$
\DP 
 \\
\end{tabular}
}
 & 
{
$\aor_{R_2}$:
\begin{tabular}{c}
\AX$\Pi \fCenter \psi$
\RL{\fns$W_R$}
\UI$\Pi \fCenter \psi \AOR \varphi$
\RL{\fns$E_R$}
\UI$\Pi \fCenter \varphi \AOR\psi $
\RL{\fns$\aor_R$}
\UI$\Pi \fCenter \varphi \aor \psi $
\DP 
 \\
\end{tabular}
}
\end{tabular}
\end{center}
\end{rem}

\begin{remark}\label{remark:distr}
	In what follows, we will work in the non-distributive setting with the calculus $\mathrm{D.LE}$ and its extensions. However, all the results we obtain about derivations in $\mathrm{D.LE}$ straightforwardly transfer to $\mathrm{D.DLE}$ using the following procedure: all applications of $\aand_{L1}$, $\aand_{L2}$, $\aand_{R}$, $\aor_{R1}$, $\aor_{R2}$ and $\aor_{L}$ will be replaced by their derivations in $\mathrm{D.DLE}$  (cf.~Remark \ref{rem: derivable rules}). 
	
	All occurrences of $\land$ (resp.~$\lor$)  in an inductive $\mathcal{L}_{\mathrm{DLE}}$-inequality which are classified as SLR (resp.~SRR)  will be treated as  connectives in $\mathcal{F}$ (resp.~$\mathcal{G}$). 
	\end{remark} 

\subsection{Derivations in pre-normal form}
\label{ssec: canonical form}
In Section \ref{sec:syntactic completeness}, we will show that any analytic inductive LE-axiom $\varphi \vdash \psi$ can be effectively derived in the corresponding basic cut-free calculus $\mathrm{\cfDLE}$ enriched with the structural analytic rules $R_1,\ldots, R_n$ corresponding to $\varphi \vdash \psi$. In fact,  the cut-free derivation we produce has a particular shape, referred to as {\em pre-normal form}, which we define in the present section.
Informally, in a derivation in pre-normal form, a division of labour is effected on the applications of rules:\footnote{The name `pre-normal' is intended to remind of a similar division of labour, among rules applied in derivations in  normal form of the well known natural deduction systems for classical and intuitionistic logic.} some rules are applied only before the application of $R_i$ and some rules are applied only after the application of $R_i$. 

Before moving on to the definitions, we highlight the following fact: when using ALBA to compute the analytic structural rule(s) corresponding to a given analytic inductive LE-axiom $x\vdash y$, if $+\land$ and $-\lor$ occur as SRA nodes in a non-critical maximal PIA subtree of $\varphi \vdash \psi$, then this subtree will generate  two or more premises of one of the corresponding rules (depending on the number of occurrences of $+\land$ and $-\lor$). If $-\land$ and $+\lor$ occur as $\Delta$-adjoints in the Skeleton of $\varphi \vdash \psi$, then the axiom is non-definite, and by exhaustively permuting those occurrences upwards, i.e.~towards the roots of the signed generation trees, and then applying the ALBA splitting rules, the given axiom can be equivalently transformed into a set of definite axioms, each of which will correspond to one analytic structural rule.

\begin{definition}
\label{def:canonical form nonDist}
A derivation $\pi$ in $\mathrm{\cfDLE}$  of the analytic inductive axiom $\varphi \vdash \psi$ (also indicated as $Ax$) is in \emph{pre-normal form} if the unique application of each rule in its corresponding set of analytic structural rules $R_1(Ax),\ldots,R_m(Ax)$ computed by ALBA splits $\pi$ into the following components: 

{\small
\begin{center}
\begin{tabular}{c}
\hspace{-0.35cm}
\begin{tikzpicture}		
\node at(0,0) {
\AXC{$p_{1.1} \fCenter p_{1.1} \ \cdots \ p_{1.k} \fCenter p_{1.k} \ \ \ \ \ \cdots \ \ \ \ \ p_{n.1} \fCenter p_{n.1} \ \ \cdots \ \ p_{n.\ell} \fCenter p_{n.\ell}$}
\noLine
\UIC{$\ddots\vdots\iddots \quad \quad \quad \quad \quad \ \ \ \ \ \quad \quad \quad \quad \quad \!\ddots\vdots\iddots$}
\noLine
\UIC{$\ \ \Pi^1_1 \fCenter \Sigma^1_1 \quad \quad \quad \quad \ \ \cdots \ \ \ \ \quad \quad \quad \quad \Pi^1_n \fCenter \Sigma^1_n \ $}
\RL{$R_1(Ax)$}
\UIC{$\Pi^1 \fCenter \Sigma^1$}

%
%

\AXC{$p_{1.1} \fCenter p_{1.1} \ \cdots \ p_{1.k'} \fCenter p_{1.k'} \ \ \ \ \ \cdots \ \ \ \ \ p_{n'.1} \fCenter p_{n'.1} \ \cdots \ p_{n'.\ell'} \fCenter p_{n'.\ell'}$}
\noLine
\UIC{$\ \ddots\vdots\iddots \ \quad \quad \quad \quad \quad \quad \quad \quad \quad \quad \quad \ \ \ddots\vdots\iddots\quad$}
\noLine
\UIC{$\Pi^m_1 \fCenter \Sigma^{m\phantom{1}}_1 \quad \quad \quad \ \ \ \ \ \ \cdots \ \ \quad \quad \quad \quad \ \ \Pi^m_{n'} \fCenter \Sigma^m_{n'}\ \ $}
\RL{$\scriptsize R_m(Ax)$}
\UIC{$\Pi^m \fCenter \Sigma^{m\phantom{1}}$}
\noLine
\BIC{$\ $}
\noLine
\UIC{$\ \ \ \ \ \ddots\vdots\iddots$}
\noLine
\UIC{$\ \ \ \ \ \varphi \fCenter \psi$}
\DP
};
\node at (0.5,-0.3) {$\cdots$};
\node[rotate = -90] at (-7.2, -0.8) {$\underbrace{\hspace{1.53cm}}$};
\node at (-8.1,-0.8) {Skeleton($\pi$)};
\node[rotate = -90] at (-8, 0.79) {$\underbrace{\hspace{1.53cm}}$};
\node at (-8.6, 0.79) {PIA($\pi$)};
\end{tikzpicture}
\end{tabular}
\end{center} }

\noindent where: 
\begin{itemize}
\item[(i)] Skeleton($\pi$) is the proof-subtree of $\pi$ containing the root of $\pi$ and applications of \emph{invertible rules} for the introduction of all connectives occurring in the Skeleton of $\varphi \vdash \psi$ (possibly modulo applications of display rules);
\item[(ii)] PIA($\pi$) is a collection of  proof-subtrees of $\pi$ containing the initial axioms of $\pi$ and all the applications of \emph{non-invertible  rules} for the introduction of connectives occurring in the maximal PIA-subtrees in the signed generation trees of $\varphi \vdash \psi$ (possibly modulo applications of display rules) and such that
\item[(iii)] the root of each proof-subtree in PIA($\pi$) coincides with a premise of the application of $R(ax)$ in $\pi$, where the atomic structural variables are suitably instantiated with operational maximal PIA-subtrees of $\varphi \vdash \psi$. 
\end{itemize}
\end{definition}

\begin{definition}
	\label{def:canonical form Dist}
	A derivation $\pi$ in $\mathrm{\underline{D.DLE}}$ of the analytic inductive axiom $\varphi \vdash \psi$ (also indicated as $Ax$) is in \emph{pre-normal form} if the unique application of each rule in its corresponding set of analytic structural rules $R_1(Ax), \ldots, R_m(Ax)$ computed by ALBA splits $\pi$ into the following components:

{\small
\begin{center}
\begin{tabular}{c}
\hspace{-0.35cm}
\begin{tikzpicture}		
\node at(0,0) {
\AXC{$p_{1.1} \fCenter p_{1.1} \ \cdots \ p_{1.k} \fCenter p_{1.k} \ \ \ \ \ \cdots \ \ \ \ \ p_{n.1} \fCenter p_{n.1} \ \ \cdots \ \ p_{n.\ell} \fCenter p_{n.\ell}$}
\noLine
\UIC{$\ddots\vdots\iddots \quad \quad \quad \quad \quad \ \ \ \ \ \quad \quad \quad \quad \quad \!\ddots\vdots\iddots$}
\noLine
\UIC{$\ \ \Pi^1_1 \fCenter \Sigma^1_1 \quad \quad \quad \quad \ \ \cdots \ \ \ \ \quad \quad \quad \quad \Pi^1_n \fCenter \Sigma^1_n \ $}
\RL{$R_1(Ax)$}
\UIC{$\Pi^1 \fCenter \Sigma^1$}

%
%

\AXC{$p_{1.1} \fCenter p_{1.1} \ \cdots \ p_{1.k'} \fCenter p_{1.k'} \ \ \ \ \ \cdots \ \ \ \ \ p_{n'.1} \fCenter p_{n'.1} \ \cdots \ p_{n'.\ell'} \fCenter p_{n'.\ell'}$}
\noLine
\UIC{$\ \ddots\vdots\iddots \ \quad \quad \quad \quad \quad \quad \quad \quad \quad \quad \quad \ \ \ddots\vdots\iddots\quad$}
\noLine
\UIC{$\Pi^m_1 \fCenter \Sigma^{m\phantom{1}}_1 \quad \quad \quad \ \ \ \ \ \ \cdots \ \ \quad \quad \quad \quad \ \ \Pi^m_{n'} \fCenter \Sigma^m_{n'}\ \ $}
\RL{$\scriptsize R_m(Ax)$}
\UIC{$\Pi^m \fCenter \Sigma^{m\phantom{1}}$}
\noLine
\BIC{$\ $}
\noLine
\UIC{$\ \ \ \ \ \ddots\vdots\iddots$}
\noLine
\UIC{$\ \ \ \ \ \varphi \fCenter \psi$}
\DP
};
\node at (0.5,-0.3) {$\cdots$};
\node[rotate = -90] at (-7.2, -0.8) {$\underbrace{\hspace{1.53cm}}$};
\node at (-8.1,-0.8) {Skeleton($\pi$)};
\node[rotate = -90] at (-8, 0.79) {$\underbrace{\hspace{1.53cm}}$};
\node at (-8.6, 0.79) {PIA($\pi$)};
\end{tikzpicture}
\end{tabular}
\end{center} }

	\noindent where: 
	\begin{itemize}
		\item[(i)] Skeleton($\pi$) is the proof-subtree of $\pi$ containing, possibly modulo applications of display rules, the root of $\pi$ and applications of
		\begin{itemize}
			\item[(a)] \emph{invertible rules} for the introduction of all 
			connectives occurring as SLR nodes in the Skeleton of $\varphi \vdash \psi$;
			\item[(b)] \emph{non-invertible  rules} and Contraction for the introduction of all 
			connectives occurring as $\Delta$-adjoint nodes in the Skeleton of $\varphi \vdash \psi$;
		\end{itemize} 
		\item[(ii)] PIA($\pi$) is a collection of  proof-subtrees of $\pi$ containing, possibly modulo applications of display rules, the initial axioms of $\pi$ and  applications of 
		\begin{itemize}
			\item[(a)] \emph{non-invertible  rules} for the introduction of all 
			connectives occurring as unary SRA nodes or as SRR nodes in the maximal PIA-subtrees in the signed generation trees of $\varphi \vdash \psi$;
			\item[(b)] \emph{invertible rules} and Weakening for the introduction of all lattice connectives occurring as SRA nodes in the maximal PIA-subtrees in the signed generation trees of $\varphi \vdash \psi$; 
		\end{itemize} 	
	 and such that
		\item[(iii)] the root of each proof-subtree in PIA($\pi$) coincides with a premise of the application of $R(ax)$ in $\pi$, where the atomic structural variables are suitably instantiated with operational maximal PIA-subtrees of $\varphi \vdash \psi$. 
	\end{itemize}
\end{definition} 
The key tools for obtaining the sub-derivations in PIA($\pi$) introducing the connectives occurring as unary SRA nodes or as SRR nodes are given in Proposition \ref{prop: generalized derivation of identities} and Corollary \ref{cor: generalized deriving la almost implies atom}. An inspection on the proofs of these results reveals that indeed only non-invertible logical rules and display rules are applied. 
The key tools involving the introduction of the lattice connectives occurring as  SRA nodes in PIA($\pi$)  (resp.~as $\Delta$-adjoint nodes in Skeleton($\pi$)) are given in Proposition \ref{prop: generalized derivation of amlost identities} (resp.~Proposition \ref{prop: the thing needed for the non-definite}). Again, inspecting the proofs of these results reveals that only introduction rules of {\em one type} are applied in each component. 
\begin{remark}
The binary introduction rules of  $\mathrm{\cfDLE}$  for lattice connectives  are invertible, while the corresponding rules of $\mathrm{\underline{D.DLE}}$ are not, and Contraction is needed to derive these rules of $\mathrm{\cfDLE}$ in $\mathrm{\underline{D.DLE}}$. Likewise, the unary introduction rules of  $\mathrm{\cfDLE}$ for lattice connectives  are not invertible, while the corresponding rules of $\mathrm{\underline{D.DLE}}$ are, and so Weakening is needed to derive these rules of  $\mathrm{\cfDLE}$ in $\mathrm{\underline{D.DLE}}$. This is why derivations in pre-normal form of  analytic inductive axioms in the general lattice setting of Definition \ref{def:canonical form nonDist} can be described purely in terms of invertible and non-invertible introduction rules, while in the distributive lattice setting of Definition \ref{def:canonical form Dist}, the occurrences of lattice connectives in `$\Delta$-adjoint/SRA-position' in  the signed generation trees of a given analytic inductive axiom need to be accounted for separately (cf.~clauses (b) of Definition \ref{def:canonical form Dist}).  
However, if $\varphi \vdash \psi$ is an analytic inductive axiom in the general lattice setting, applying the process described in Remark \ref{remark:distr} to a derivation of $\varphi \vdash \psi$ in $\mathrm{\cfDLE}$  in pre-normal form according to Definition \ref{def:canonical form nonDist}  results in a derivation of $\varphi \vdash \psi$ in $\mathrm{\underline{D.DLE}}$ which is in pre-normal form according to Definition \ref{def:canonical form Dist}.
\end{remark}

\begin{remark}
If $\varphi \vdash \psi$ is a {\em definite} analytic inductive axiom, then ALBA yields a {\em single} analytic structural rule corresponding to it.  
So, both in the general lattice and in the distributive settings, the Skeleton part of the derivation of $\varphi \vdash \psi$ in pre-normal form will only have one branch, yielding the following simpler shape of $\pi$:
	\begin{center}
	\begin{tabular}{c}
		\begin{tikzpicture}		
			\node at(0,0) {
				\AXC{$p_{1.1} \fCenter p_{1.1} \quad \cdots \quad p_{1.k} \fCenter p_{1.k}$}
				\noLine
				\UIC{$\ddots\vdots\iddots$}
				\noLine
				\UIC{$\Pi_1 \fCenter \Sigma_1$}
				\AXC{$\cdots$}
				\noLine
				\UIC{$\phantom{\vdots}$}
				\noLine
				\UIC{$\phantom{\Pi_i} \cdots \phantom{\Pi_j}$}
				\AXC{$p_{n.1} \fCenter p_{n.1} \quad \cdots \quad p_{n.\ell} \fCenter p_{n.\ell}$}
				\noLine
				\UIC{$\ddots\vdots\iddots$}
				\noLine
				\UIC{$\Pi_n \fCenter \Sigma_n$}
				\RL{$R(Ax)$}
				\TIC{$\Pi \fCenter \Sigma$}
				\noLine
				\UIC{$\vdots$}
				\noLine
				\UIC{$\varphi \fCenter \psi$}
				\DP
			};
			\node[rotate = -90] at (-3.9, -0.8) {$\underbrace{\hspace{1.4cm}}$};
			\node at (-5,-0.8) {Skeleton($\pi$)};
			\node[rotate = -90] at (-5.2, 0.75) {$\underbrace{\hspace{1.47cm}}$};
			\node at (-6, 0.75) {PIA($\pi$)};
		\end{tikzpicture}
	\end{tabular}
\end{center}
\end{remark}
All derivations in Examples \ref{example: deriving qsi axioms} and \ref{ex:derivations general} are derivations of definite analytic inductive axioms in pre-normal form.

\section{Properties of the basic display calculi $\mathrm{D.LE}$}	
\label{sec:properties}	
		
In this section, we will state and prove the key lemmas needed for the proof of the syntactic completeness. Throughout this section, we let $\mathcal{L}_{\mathrm{LE}}$ (resp.~$\mathcal{L}_{\mathrm{DLE}}$) be an arbitrary but fixed (D)LE-language, and $\mathrm{D.LE}$ (resp.~$\mathrm{D.DLE}$) denote the proper display calculi for the basic $\mathcal{L}_{\mathrm{LE}}$-logic (resp.~$\mathcal{L}_{\mathrm{DLE}}$-logic).

\begin{notation}		
\label{notation: structural counterparts of skeleton and pia} 
For any definite Skeleton (resp.~definite PIA) formula $\varphi$ (resp.~$\psi, \gamma, \delta, \xi, \ldots$), we let its corresponding capital Greek letter $\Phi$ (resp.~$\Psi, \Gamma, \Delta, \Xi, \ldots$) denote its structural counterpart, defined by induction as follows (cf.~Notation \ref{notation: placeholder variables}): 
\begin{enumerate} 
\item if $\varphi: =p\in \mathsf{AtProp}$, then $\Phi: = p$; 
\item if $\varphi: =f(\overline{\xi}, \overline{\psi})$, then $\Phi: = \hat{f}(\overline{\Xi}, \overline{\Psi})$;
\item if $\varphi: =g(\overline{\psi}, \overline{\xi})$, then $\Phi: = \check{g}(\overline{\Psi}, \overline{\Xi})$.
\end{enumerate}
Notice that items 2 and 3 above cover also the case of zero-ary connectives (and of $\wedge$ and $\vee$ in the setting of $\mathrm{D.DLE}$).
\end{notation}
Also,  notice that the introduction rules of $\mathrm{D.LE}$ (resp.~$\mathrm{D.DLE}$) are such that structural counterparts of connectives in $\mathcal{F}$ (resp.~in $\mathcal{F}\cup\{\wedge\}$) can only occur in precedent position, and  structural counterparts of connectives in $\mathcal{G}$ (resp.~in $\mathcal{G}\cup\{\vee\}$) can only occur in succedent position, which is why Notation \ref{notation: structural counterparts of skeleton and pia} only applies to definite Skeleton and definite PIA formulas.

\begin{notation}\label{notation: isotone vs antitone}
In what follows, we let $\overline{\sigma}$, $\overline{S}$ and $\overline{\sigma\vdash S}$ (resp.~$\overline{\tau}$, $\overline{U}$ and $\overline{U\vdash \tau}$) denote finite vectors of formulas, of structures in $\mathsf{Str}_\mathcal{G}$ (resp.~$\mathsf{Str}_\mathcal{F}$) and of $\mathrm{D.(D)LE}$-sequents. 
%
\end{notation}

\begin{prop}\label{prop: generalized derivation of identities}
For every definite positive PIA (i.e.~definite negative Skeleton) formula $\gamma(!\overline{x}, !\overline{y})$ and every definite negative PIA (i.e.~definite positive Skeleton) formula $\delta(!\overline{y}, !\overline{x})$,
\begin{enumerate} 
\item if $\overline{\sigma\vdash S}$ and $\overline{U\vdash \tau}$ are derivable in $\mathrm{D.LE}$ (resp.~$\mathrm{D.DLE}$), then so is $\gamma[\overline{\bsigma}/!\obx, \overline{\rtau}/!\ory]\vdash \Gamma[\overline{S}/!\ox, \overline{U}/!\oy]$;
\item if $\overline{\sigma\vdash S}$ and $\overline{U\vdash \tau}$ are derivable in $\mathrm{D.LE}$ (resp.~$\mathrm{D.DLE}$), then so is $\Delta[\overline{U}/!\oy, \overline{S}/!\ox]\vdash \delta[\overline{\rtau}/!\ory, \overline{\bsigma}/!\obx]$.
\end{enumerate}	
with derivations such that, if any rules are applied other than right-introduction rules for  negative SRR-connectives and negative unary SRA-connectives (cf.~Tables \ref{Join:and:Meet:Friendly:Table} and \ref{Join:and:Meet:Friendly:Table:DLE}, and Definition \ref{def: signed gen tree}), 
and left-introduction rules for positive SRR-connectives and positive unary SRA-connectives, then they are applied only in the derivations of $\overline{\sigma\vdash S}$ and $\overline{U\vdash \tau}$.
			\end{prop}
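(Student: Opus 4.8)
The plan is to prove clauses (1) and (2) \emph{simultaneously}, by induction on the structure of the definite positive PIA-formula $\gamma$ and the definite negative PIA-formula $\delta$. The engine of the argument is an orientation invariant: a derivation produced by clause (1) always has a \emph{formula} on the left and its \emph{structural counterpart} on the right (it is of the shape $\text{formula}\vdash\text{structure}$), while a derivation produced by clause (2) always has a structural counterpart on the left and a formula on the right ($\text{structure}\vdash\text{formula}$). The hypotheses are tailored to exactly this: the given sequents $\overline{\sigma\vdash S}$ have the shape consumed at the leaves of a clause-(1) derivation, and the $\overline{U\vdash\tau}$ have the shape consumed at the leaves of a clause-(2) derivation. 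This is precisely why the groupings are written $\gamma(!\overline{x},!\overline{y})$ but $\delta(!\overline{y},!\overline{x})$ with the roles of $\overline{x}$ and $\overline{y}$ swapped, and why the substitutions in the two clauses place $\sigma,S$ and $U,\tau$ into the positions they do.

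For the base cases, if $\gamma$ is a single placeholder variable then, as a bare leaf it carries net-positive sign in $+\gamma$, so it belongs to $\overline{x}$, and the target sequent $\gamma[\overline{\bsigma}/!\obx]\vdash\Gamma[\overline{S}/!\ox]$ is literally $\sigma\vdash S$, one of the given derivable sequents; symmetrically, a variable $\delta$ yields one of the $U\vdash\tau$ for clause (2). In each base case no rule is applied beyond invoking the assumed derivation, so the side condition holds trivially.

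For the inductive step of (1), by definiteness every node of $+\gamma$ is some $g\in\mathcal{G}$ --- a positive SRR-node if $n_g\geq 2$ or a positive unary SRA-node if $n_g=1$, since binary SRA-nodes are excluded (cf.\ Tables \ref{Join:and:Meet:Friendly:Table} and \ref{Join:and:Meet:Friendly:Table:DLE}). Writing $\gamma=g(\overline{\psi},\overline{\xi})$, the isotone coordinates carry positive PIA-subformulas $\overline{\psi}$ and the antitone coordinates carry negative PIA-subformulas $\overline{\xi}$, so the two inductive hypotheses apply: clause (1) on each $\psi$ derives $\psi[\ldots]\vdash\Psi[\ldots]$ and clause (2) on each $\xi$ derives $\Xi[\ldots]\vdash\xi[\ldots]$. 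These are exactly the premises of the left-introduction rule $g_L$ (whose isotone premises are $\varphi_i\vdash\Upsilon_i$ and whose antitone premises are $\Upsilon_j\vdash\varphi_j$), whose conclusion is $g(\ldots)\vdash\check{g}(\ldots)$, i.e.\ $\gamma[\ldots]\vdash\Gamma[\ldots]$ by Notation \ref{notation: structural counterparts of skeleton and pia}. As $g_L$ is a left-introduction rule for a positive SRR- or positive unary SRA-connective, and as the construction introduces no display rule outside the assumed derivations, the side condition is met. Clause (2) is entirely symmetric: $\delta=f(\overline{\xi},\overline{\psi})$ with $f\in\mathcal{F}$ (a negative SRR- or negative unary SRA-node), and one applies the right-introduction rule $f_R$, whose premises come from the inductive hypotheses (clause (2) on the isotone $\xi$'s, clause (1) on the antitone $\psi$'s) and whose conclusion is $\hat{f}(\ldots)\vdash f(\ldots)$, i.e.\ $\Delta[\ldots]\vdash\delta[\ldots]$.

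I expect the main obstacle to be the order-type and sign bookkeeping rather than the rule-by-rule construction. One must verify that in a definite positive PIA-formula the isotone children are again positive PIA and the antitone children negative PIA, so that the two statements can legitimately call each other; and one must check that the formula-versus-structure orientation threaded through the premises of $g_L$ and $f_R$ lines up with the two invariants, so that the single assumed sequents $\sigma\vdash S$ and $U\vdash\tau$ land in the correct slots at the leaves despite the sign changes introduced by antitone coordinates. The distributive case is obtained verbatim via Remark \ref{remark:distr}: the occurrences of $\wedge$ and $\vee$ classified as SRR-nodes in Table \ref{Join:and:Meet:Friendly:Table:DLE} are treated as connectives of $\mathcal{F}$ and $\mathcal{G}$, with structural counterparts $\AAND$ and $\AOR$ and introduction rules $\pand_L$ and $\por_L$, so the same induction goes through with no change.
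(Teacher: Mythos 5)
Your proof is correct and follows essentially the same route as the paper's: a simultaneous induction on $\gamma$ and $\delta$ in which the variable cases reduce to the assumed sequents $\sigma\vdash S$ and $U\vdash \tau$, and the inductive steps apply $g_L$ (resp.~$f_R$) to the premises supplied by the mutual induction hypotheses, with the distributive case handled exactly as in the paper by treating SRR-occurrences of $\vee$ and $\wedge$ as connectives in $\mathcal{G}$ and $\mathcal{F}$. One slip in your final sentence: the $\mathrm{D.DLE}$-rules actually invoked are $\por_L$ and $\pand_R$ (left-introduction for the positively occurring SRR-node $+\vee$ in clause 1, right-introduction for the negatively occurring SRR-node $-\wedge$ in clause 2), not $\pand_L$ and $\por_L$.
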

	\begin{proof}
	By simultaneous induction on $\gamma$ and $\delta$. If $\gamma : =x$, then $\Gamma : =x$. Hence, $\gamma[\overline{\bsigma}/!\obx, \overline{\rtau}/!\ory] \vdash \Gamma[\overline{S}/!\ox, \overline{U}/!\oy]$ reduces to $\sigma \vdash S$, which is derivable by  assumption. The case of $\delta: =y$ is shown similarly.
		As to the inductive steps, let $\gamma(!\overline{x}, !\overline{y}) := g(\overline{\psi}(!\overline{x}, !\overline{y}), \overline{\xi}(!\overline{x}, !\overline{y}))$ with $\overline{\psi}$ definite positive PIA-formulas and  $\overline{\xi}$ definite negative PIA-formulas.
%
Then 
$\gamma[\overline{\sigma}/!\overline{x}, \overline{\tau}/!\overline{y}] = g(\overline{\psi}[\overline{\sigma}/ !\overline{x},\overline{\tau}/ !\overline{y}], \overline{\xi}[\overline{\tau}/!\overline{y},\overline{\sigma}/!\overline{x}])$ and $ \Gamma[\overline{S}/!\overline{x}, \overline{U}/!\overline{y}]= \check{g}(\overline{\Psi}[\overline{S}/ !\overline{x},\overline{U}/ !\overline{y}] , \overline{\Xi}[\overline{U}/!\overline{y},\overline{S}/!\overline{x}])$. 

By induction hypothesis, all sequents in the following vectors are derivable in  $\mathrm{D.LE}$ (resp.~$\mathrm{D.DLE}$):
\[\overline{\psi[\overline{\bsigma}/ !\obx,\overline{\rtau}/ !\ory] \vdash \Psi [\overline{S}/ !\ox,\overline{U}/ !\oy]} \quad \text{ and }
\quad \overline{\Xi[\overline{U}/!\oy, \overline{S}/!\ox] \vdash \xi [\overline{\rtau}/!\ory,\overline{\bsigma}/!\obx]}.\] Then we can derive the required sequent $\gamma[\overline{\bsigma}/!\obx, \overline{\rtau}/!\ory]\vdash \Gamma[\overline{S}/!\ox, \overline{U}/!\oy]$ by prolonging all these derivations with an application of $g_L$ as follows:
\[
		\AxiomC{$\overline{\psi[\overline{\bsigma}/ !\obx,\overline{\rtau}/ !\ory]\vdash \Psi[\overline{S}/ !\ox,\overline{U}/ !\oy]}$}
		\AxiomC{$\overline{\Xi[\overline{U}/!\oy, \overline{S}/!\ox]\vdash\xi[ \overline{\rtau}/!\ory,\overline{\bsigma}/!\obx]}$}
		\LL{\fns$g_L$}
		\BinaryInfC{$g(\overline{\bpsi}[\overline{\bsigma}/ !\obx,\overline{\rtau}/ !\ory], \overline{\rxi}[ \overline{\rtau}/!\ory,\overline{\bsigma}/!\obx]) \vdash \check{g}(\overline{\Psi}[\overline{S}/ !\ox,\overline{U}/ !\oy], \overline{\Xi}[\overline{U}/!\oy, \overline{S}/!\ox])$}
		\DisplayProof
\]

Let $\delta (!\overline{x}, !\overline{y}) : = f(\overline{\xi}(!\overline{x}, !\overline{y}), \overline{\psi}(!\overline{x}, !\overline{y}))$ with $\overline{\xi}$  definite negative PIA-formulas (i.e.~positive Skeleton-formulas) and $\overline{\psi}$  definite positive PIA-formulas (i.e.~negative Skeleton-formulas).
Then 
$\delta[\overline{\tau}/!\overline{y}, \overline{\sigma}/!\overline{x}] =  f ( \overline{\xi}[\overline{\tau}/ !\overline{y},\overline{\sigma}/!\overline{x}],\overline{\psi}[\overline{\sigma}/!\overline{x}, \overline{\tau}/!\overline{y}])$ and $ \Delta[\overline{U}/!\overline{y}, \overline{S}/!\overline{x}]=  \hat{f} ( \overline{\Xi}[\overline{U}/ !\overline{y},\overline{S}/!\overline{x}],\overline{\Psi}[\overline{S}/!\overline{x}, \overline{U}/!\overline{y}])$. 
By induction hypothesis, all sequents in the following vectors are derivable in  $\mathrm{D.LE}$ (resp.~$\mathrm{D.DLE}$): \[\overline{\psi[\overline{\bsigma}/ !\obx,\overline{\rtau}/ !\ory]\vdash \Psi[\overline{S}/ !\ox, \overline{U}/ !\oy]}\quad \text{ and } \quad
\overline{\Xi[\overline{U}/!\oy, \overline{S}/!\ox])\vdash \xi [\overline{\rtau}/!\ory,\overline{\bsigma}/!\obx]}.\]  Then we can derive the required sequent $ \Delta[\overline{U}/!\oy, \overline{S}/!\ox]\vdash \delta [\overline{\rtau}/!\ory, \overline{\bsigma}/!\obx]$ by prolonging all these derivations with an application of $f_R$ as follows:
\[
\AxiomC{$\overline{\Xi[\overline{U}/!\oy, \overline{S}/!\ox]\vdash\xi [\overline{\rtau}/!\ory,\overline{\bsigma}/!\obx]}$}
\AxiomC{$\overline{\psi[\overline{\bsigma}/!\obx,\overline{\rtau}/!\ory]\vdash \Psi[\overline{S}/ !\ox,\overline{U}/!\oy]}$}
\RL{\fns$f_R$}
		\BinaryInfC{$\hat{f}(\overline{\Xi}[\overline{U}/!\oy, \overline{S}/!\ox],\overline{\Psi}[\overline{S}/!\ox,\overline{U}/!\oy]) \vdash f(\overline{\rxi}[ \overline{\rtau}/!\ory,\overline{\bsigma}/!\obx],\overline{\bpsi}[\overline{\bsigma}/ !\obx,\overline{\rtau}/ !\ory] )$}
		\DisplayProof
\]
The proof, specific to the setting of $\mathrm{D.DLE}$, of the case in which $\gamma: = \gamma_1\vee \gamma_2$ (resp.~$\delta: = \delta_1\wedge \delta_2$) goes like the case of arbitrary $g\in \mathcal{G}$ (resp.~$f\in \mathcal{F}$) discussed above, using the $\mathrm{D.DLE}$-rule $\vee_L$ (resp.~$\wedge_R$).
\end{proof}
	By instantiating $\overline{\sigma\vdash S}$ and $\overline{U\vdash \tau}$ in the proposition above to identity axioms, we immediately get the following
\begin{cor}\label{prop: derivation of identities}
			Any  calculus $\mathrm{\cfDLE}$ (resp.~$\mathrm{\cfDDLE}$) derives the following sequents (cf.~Notation \ref{notation: structural counterparts of skeleton and pia}):
\begin{enumerate}
\item $\gamma\vdash \Gamma$ for every definite positive PIA (i.e.~definite negative Skeleton) formula $\gamma$;
\item $\Delta\vdash \delta$ for every definite negative PIA (i.e.~definite positive Skeleton) formula $\delta$,
\end{enumerate}	
with derivations which only consist of identity axioms, and applications of right-introduction rules for  negative SRR-connectives and negative unary SRA-connectives (cf.~Tables \ref{Join:and:Meet:Friendly:Table} and \ref{Join:and:Meet:Friendly:Table:DLE}, and Definition \ref{def: signed gen tree}),  and left-introduction rules for positive SRR-connectives and positive unary SRA-connectives. 
			\end{cor}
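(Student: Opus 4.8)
The plan is to obtain the statement as a direct specialization of Proposition \ref{prop: generalized derivation of identities}, instantiating the parameter sequents $\overline{\sigma\vdash S}$ and $\overline{U\vdash \tau}$ to identity axioms. Concretely, for item 1, given a definite positive PIA formula $\gamma(!\overline{x}, !\overline{y})$, I would set each $\sigma_i := x_i$ and $S_i := x_i$, and each $\tau_j := y_j$ and $U_j := y_j$. By clause 1 of Notation \ref{notation: structural counterparts of skeleton and pia}, the structural counterpart of an atomic variable is the variable itself, so this choice is legitimate: each sequent $\sigma_i\vdash S_i$ becomes $x_i\vdash x_i$ and each $U_j\vdash \tau_j$ becomes $y_j\vdash y_j$, i.e.\ instances of the $\mathrm{Id}$ axiom, hence derivable.

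With this instantiation the conclusion of Proposition \ref{prop: generalized derivation of identities}(1), namely $\gamma[\overline{\bsigma}/!\obx, \overline{\rtau}/!\ory]\vdash \Gamma[\overline{S}/!\ox, \overline{U}/!\oy]$, collapses to $\gamma\vdash \Gamma$, since substituting each variable for itself acts as the identity on both the formula $\gamma$ and its structural counterpart $\Gamma$. The same instantiation applied to Proposition \ref{prop: generalized derivation of identities}(2) yields $\Delta\vdash \delta$, which is item 2.

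It remains to verify the rule-provenance claim. The side condition in Proposition \ref{prop: generalized derivation of identities} states that any rule application outside the prescribed list---right-introduction rules for negative SRR- and negative unary SRA-connectives, and left-introduction rules for positive SRR- and positive unary SRA-connectives---must occur inside the derivations of the parameter sequents $\overline{\sigma\vdash S}$ and $\overline{U\vdash \tau}$. But under the present instantiation these derivations are single $\mathrm{Id}$ axioms, containing no rule applications at all. Hence the whole derivation of $\gamma\vdash \Gamma$ (resp.\ $\Delta\vdash \delta$) consists exactly of identity axioms together with applications of the prescribed introduction rules, as required.

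There is no genuine obstacle here: the corollary is immediate once the instantiation is fixed. The only point meriting attention is the identification, for atomic $\sigma$, of the structure $S$ with the variable itself, which is precisely what makes the parameter sequents identity axioms and thereby forces, through the side condition, the sharp description of which rules appear in the resulting derivation.
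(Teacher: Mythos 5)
Your proposal is correct and coincides exactly with the paper's own argument: the corollary is obtained there precisely by instantiating $\overline{\sigma\vdash S}$ and $\overline{U\vdash \tau}$ in Proposition \ref{prop: generalized derivation of identities} to identity axioms, so that the substitutions act as the identity and the side condition on rule provenance immediately yields the stated description of the derivations. Your additional check that the parameter derivations contain no rule applications, forcing the sharp rule-provenance claim, is the same (implicit) observation the paper relies on.
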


\begin{example}
The formula $\wdia (p \mand q) \ararr (q \mor p) $ is definite positive PIA in any (D)LE-language such that $\Diamond, \mand\in \mathcal{F}$ and $\mor, \rightarrow\; \in \mathcal{G}$ with $n_\Diamond = 1$ and $\epsilon_\Diamond(1) = 1$, and $n_\mand = n_\mor = n_\rightarrow = 2$ and $\varepsilon_\circ (i) = 1$ for every $\circ\in \{\mand, \mor, \rightarrow\}$ and every $1\leq i\leq 2$ except $\varepsilon_{\rightarrow}(1) = \partial$. Then, instantiating the argument above, we can derive  the sequent
$
\wdia (p \mand q) \ararr (q \mor p) \vdash
\WDIA(p\MAND q) \ARARR(q \MOR p)
$ in $\mathrm{\cfDLE}$ (resp.~$\mathrm{\cfDDLE}$) as follows: 
\[
\AX$p\fCenter p$
\AX$q\fCenter q$
\RL{\fns $\mand_R$}
\BI$p \MAND q \fCenter p \mand q$
\RL{\fns $\wdia_R$}
\UI$\WDIA(p \MAND q)\fCenter \wdia (p\mand q)$
\AX$p\fCenter p$
\AX$q\fCenter q$
\LL{\fns $\mor_L$}
\BI$p \mor q \fCenter p\MOR q$
\LL{\fns $\ararr_L$}
\BI$ \wdia (p\mand q) \ararr (p \mor q ) \fCenter \WDIA(p \MAND q) \ARARR  p\MOR q$
\DP
\] 
The formula $\wdia p \mand q $ in the same language is  definite negative PIA. Then, instantiating the argument above, we can derive  the sequent $\WDIA p \MAND q \vdash \wdia p\mand q$ in $\mathrm{\cfDLE}$ (resp.~$\mathrm{\cfDDLE}$) as follows: 

\[
\AX$p\fCenter p$
\RL{\fns$\wdia_R$}
\UI$\WDIA p \fCenter \wdia p $
\AX$q \fCenter q$
\RL{\fns$\mand_R$}
\BI$ \WDIA p \MAND q \fCenter \wdia p\mand q$
\DP
\]
\end{example}

\begin{prop}\label{prop: generalized derivation of amlost identities}
	Let $\gamma = \gamma(!\overline{x}, !\overline{y})$ and $\delta = \delta(!\overline{y}, !\overline{x})$ be a positive and a negative PIA formula, respectively, and let $\bigwedge_{i\in I}\gamma_i$ and $\bigvee_{j\in J}\delta_j$ be their equivalent rewritings as per Lemma \ref{lemma: reduction to definite}, so that each $\gamma_i$ (resp.~each $\delta_j$) is definite positive (resp.~negative) PIA. 
	\begin{enumerate}
		\item  If $\overline{\sigma\vdash S}$ and $\overline{U\vdash \tau}$ are derivable in $\mathrm{\cfDLE}$ (resp.~$\mathrm{\cfDDLE}$), then so is $\gamma[\overline{\bsigma}/!\obx, \overline{\rtau}/!\ory] \vdash \Gamma_i[\overline{S}/!\ox, \overline{U}/!\oy]$ for each $i\in I$;
		\item  if $\overline{\sigma\vdash S}$ and $\overline{U \vdash \tau}$ are derivable in $\mathrm{\cfDLE}$ (resp.~$\mathrm{\cfDDLE}$), then so is $\Delta_{j\,}[\overline{U}/!\oy, \overline{S}/!\ox] \vdash \delta[\overline{\rtau}/!\ory, \overline{\bsigma}/!\obx]$ for each $j\in J$, 
	\end{enumerate}	
	with derivations such that, if any rules are applied other than right-introduction rules for  negative PIA-connectives (cf.~Tables \ref{Join:and:Meet:Friendly:Table} and \ref{Join:and:Meet:Friendly:Table:DLE}, and Definition \ref{def: signed gen tree}), 
and left-introduction rules for positive PIA-connectives 
(and weakening and exchange rules in the case of $\mathrm{\cfDDLE}$), then they are applied only in the derivations of $\overline{\sigma\vdash S}$ and $\overline{U\vdash \tau}$.
\end{prop}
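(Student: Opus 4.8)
The plan is to prove both statements by a simultaneous induction on $\gamma$ and $\delta$ that closely mirrors the recursion underlying Lemma \ref{lemma: reduction to definite} and extends the argument of Proposition \ref{prop: generalized derivation of identities} to the non-definite case. The only genuinely new phenomenon, relative to that proposition, is the presence of binary SRA nodes, i.e.\ occurrences of $+\wedge$ in $\gamma$ and $-\vee$ in $\delta$; these are precisely the nodes that the reduction to definite formulas eliminates by distribution, and handling them is where the index sets $I$, $J$ and the rules $\wedge_{L1},\wedge_{L2}$ (resp.\ $\vee_{R1},\vee_{R2}$) enter. The base cases $\gamma := x$ and $\delta := y$ are treated exactly as before: each index set is a singleton, $\gamma_1 = x$ and $\Gamma_1 = x$ (resp.\ $\delta_1 = y$, $\Delta_1 = y$), so the required sequent reduces to $\sigma\vdash S$ (resp.\ $U\vdash\tau$), derivable by hypothesis.

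For an inductive step whose top node is SLR, SRR or unary SRA, say $\gamma := g(\overline{\psi},\overline{\xi})$ with $g\in\mathcal{G}$, $\overline{\psi}$ positive PIA and $\overline{\xi}$ negative PIA, the argument is the one of Proposition \ref{prop: generalized derivation of identities}, now fed by the stronger induction hypothesis. By Lemma \ref{lemma: reduction to definite}, each definite conjunct $\gamma_i$ has the form $g(\overline{\psi_{i'}},\overline{\xi_{i''}})$, where $\psi_{i'}$ (resp.\ $\xi_{i''}$) ranges over the definite conjuncts of $\psi$ (resp.\ disjuncts of $\xi$), so that $\Gamma_i = \check{g}(\overline{\Psi_{i'}},\overline{\Xi_{i''}})$. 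The induction hypothesis (part 1 on each $\psi$, part 2 on each $\xi$) yields derivations of $\psi[\overline{\sigma}/\overline{x},\overline{\tau}/\overline{y}]\vdash \Psi_{i'}[\overline{S}/\overline{x},\overline{U}/\overline{y}]$ and of $\Xi_{i''}[\overline{U}/\overline{y},\overline{S}/\overline{x}]\vdash \xi[\overline{\tau}/\overline{y},\overline{\sigma}/\overline{x}]$, which are exactly the premises of one application of $g_L$ whose conclusion is $\gamma[\overline{\sigma}/\overline{x},\overline{\tau}/\overline{y}]\vdash\Gamma_i[\overline{S}/\overline{x},\overline{U}/\overline{y}]$. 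The clause for $\delta$ with a top $\mathcal{F}$-connective is symmetric, using $f_R$.

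The new case is the binary SRA node. If $\gamma := \gamma_a\wedge\gamma_b$ (top node $+\wedge$), then Lemma \ref{lemma: reduction to definite} gives that the definite conjuncts of $\gamma$ are the union of those of $\gamma_a$ and of $\gamma_b$, so each $\Gamma_i$ is either some $\Gamma_{a,i}$ or some $\Gamma_{b,i}$. In the first case the induction hypothesis gives a derivation of $\gamma_a[\overline{\sigma}/\overline{x},\overline{\tau}/\overline{y}]\vdash\Gamma_{a,i}[\overline{S}/\overline{x},\overline{U}/\overline{y}]$, and one application of $\wedge_{L1}$ produces $\gamma[\overline{\sigma}/\overline{x},\overline{\tau}/\overline{y}]\vdash\Gamma_i[\overline{S}/\overline{x},\overline{U}/\overline{y}]$; the second case uses $\wedge_{L2}$. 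The clause for $\delta := \delta_a\vee\delta_b$ is dual, using $\vee_{R1},\vee_{R2}$. In the distributive setting one replaces these rules by their $\mathrm{D.DLE}$ derivations (Remark \ref{rem: derivable rules}), which consume precisely the weakening and exchange allowances mentioned in the statement. To conclude, I would verify the side condition: outside the given derivations of $\overline{\sigma\vdash S}$ and $\overline{U\vdash\tau}$, the only rules introduced are $g_L$ and $f_R$, together with $\wedge_{L1},\wedge_{L2},\vee_{R1},\vee_{R2}$ and, in $\mathrm{D.DLE}$, weakening and exchange; since $+\wedge$ and $-\vee$ are themselves PIA (SRA) connectives, all of these are left-introductions of positive PIA-connectives or right-introductions of negative PIA-connectives, hence permitted.

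The main obstacle is bookkeeping rather than conceptual: one must match the multi-indexed definite conjuncts $\gamma_i$ (and their structural counterparts $\Gamma_i$) produced by Lemma \ref{lemma: reduction to definite} against the premises delivered by the induction hypothesis, checking that the distribution carried out semantically in that lemma is faithfully reproduced syntactically by the chosen introduction rules---that is, that each $g_L$-application outputs exactly the structural counterpart $\Gamma_i$ of the intended definite conjunct, and that feeding the full (possibly non-definite) subformula on one side while using the definite-conjunct structural counterpart on the other is precisely the premise shape that $g_L$ and $f_R$ require. Since the $\gamma_i$ and $\delta_j$ are definite, their structural counterparts are well defined (Notation \ref{notation: structural counterparts of skeleton and pia}), so no difficulty arises on that front.
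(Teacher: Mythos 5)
Your proof is correct, but it takes a genuinely different route from the paper's. You argue by structural induction on $\gamma$ and $\delta$, handling each root connective directly: the cases where the root is an SRR or unary SRA connective are discharged by one application of $g_L$ (resp.\ $f_R$), exploiting the fact that the premise shape of these rules tolerates the full, possibly non-definite, subformula on the operational side against the structural counterpart of a single definite conjunct/disjunct on the other side; the binary SRA cases $\gamma=\gamma_a\wedge\gamma_b$ and $\delta=\delta_a\vee\delta_b$ are discharged by $\wedge_{L1},\wedge_{L2}$ and $\vee_{R1},\vee_{R2}$ (replaced in $\mathrm{\cfDDLE}$ by their derivations via weakening and exchange). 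The paper instead inducts on the \emph{number} $n_\gamma$ (resp.\ $n_\delta$) of occurrences of $+\wedge$ and $-\vee$, using Proposition \ref{prop: generalized derivation of identities} wholesale as the base case $n=0$; in the inductive step it selects a focal occurrence of $+\wedge$ or $-\vee$, replaces the subformula $\xi'\vee\xi''$ rooted there by a fresh variable $u$ to obtain $\gamma'$ with smaller count, derives $\Xi'_h\vdash(\xi'\vee\xi'')[\ldots]$ from the inductive hypothesis on $\xi',\xi''$ followed by $\vee_{R1},\vee_{R2}$, and then feeds these derived sequents back in as new input sequents $U\vdash\tau$ when invoking the inductive hypothesis on $\gamma'$. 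Your approach buys uniformity and self-containment: it is a single structural induction in which every case is local, at the cost of essentially re-proving the connective cases of Proposition \ref{prop: generalized derivation of identities} inline rather than citing it. The paper's approach buys modularity (the definite machinery is reused as a black box, and the new work is isolated to the binary SRA nodes) and it makes vivid why the statement must be formulated with arbitrary derivable input sequents rather than identity axioms: the fresh-variable substitution trick requires plugging freshly derived sequents back into the hypothesis slot, whereas in your induction the inputs are merely passed down unchanged to the subformulas. Both arguments track the side condition on which rules are applied equally well, since $g_L$, $f_R$, $\wedge_{L1/2}$, $\vee_{R1/2}$ are all introduction rules for PIA-connectives of the permitted polarity.
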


\begin{proof}
Let $n_\gamma(+\land)$ (resp.~$n_\delta(+\land)$) be the number of occurrences of $+\land$  in $+\gamma$ (resp.~$-\delta$), and let $n_\gamma(-\lor)$ (resp.~$n_\delta(-\lor)$) be the number of occurrences of $-\lor$ in $+\gamma$ (resp.~$-\delta$). The proof is by simultaneous induction on $n_\gamma = n_\gamma(+\land)+ n_\gamma(-\lor)$ and $n_\delta =  n_\delta(+\land)+ n_\delta(-\lor)$. 

If  $n_\gamma = n_\delta = 0$, then  $\gamma$ (resp.~$\delta$) is definite positive (resp.~negative) PIA. Then the claims follow from Proposition \ref{prop: generalized derivation of identities}.

If $n_\gamma \geq 1$, then let us consider one 
occurrence of $+\land$ or $-\lor$ in $+\gamma$, which we will refer to as `the focal occurrence'. 
Let us assume that the focal  occurrence of $+\land$ or $-\lor$ in $+\gamma$ 
is an occurrence of $-\vee$ (the case in which it is an occurrence of $+\land$ is argued similarly).

Let $-\xi'$ and $-\xi''$ be the two subtrees under the focal occurrence of  $-\lor $. Then $\xi '\lor \xi''$ is a subformula of $\gamma$ such that $\xi'$ and $\xi ''$ are negative PIA formulas, and $n_{\xi'}$ and $n_{\xi''}$ are strictly smaller than $n_{\gamma}$. Let $u$ be a fresh variable which does not occur in $\gamma$, and let $\gamma'$ be the formula obtained by substituting the occurrence of $\xi'\lor \xi''$ in $\gamma$ with $u$.  Then $\gamma'$ is a positive PIA formula such that $n_{\gamma'} $ is strictly smaller than $n_\gamma$, and $\gamma = \gamma' [(\xi '\lor \xi '')/!u]$. 
Let $\bigwedge_{i\in I}\gamma_i$, $\bigwedge_{j\in J}\gamma'_{j}$, 
$\bigvee_{h\in H}\xi'_{h}$ and $\bigvee_{k\in K}\xi''_{k}$ be the equivalent rewritings of $\gamma$, $\gamma'$, $\xi'$ and $\xi''$, respectively, resulting from distributing exhaustively $+\wedge$ and $-\vee$ over each connective in $\gamma$, $\gamma'$, $\xi'$ and $\xi''$, respectively. Then, 
\[\{\gamma_i\mid i\in I\} = \{\gamma'_{j}[\xi'_{h}/!u ] \mid j \in J \text{ and } h \in H\}\cup \{\gamma'_{j}[\xi''_{k}/!u ] \mid j \in J \text{ and } k \in K\}.\]
By induction hypothesis, the following sequents are derivable in $\mathrm{\cfDLE}$ (resp.~$\mathrm{\cfDDLE}$) for every $h \in H$ and $k \in K$:
\[
 \Xi_{h\,}[\overline{U}/!\oy, \overline{S}/!\ox] \vdash \xi'[\overline{\rtau}/!\ory, \overline{\bsigma}/!\obx] \quad \text{~and~}\quad \Xi_{k\,}[\overline{U}/!\oy, \overline{S}/!\ox] \vdash \xi''[\overline{\rtau}/!\ory, \overline{\bsigma}/!\obx] .
\]
Then, by prolonging the derivations of the two sequents above with suitable applications of $(\lor_{R1})$ and $(\lor_{R2})$, we obtain   derivations in $\mathrm{\cfDLE}$ (resp.~$\mathrm{\cfDDLE}$)\footnote{In the calculus $\mathrm{\cfDDLE}$, we obtain the derivations of the sequents in \eqref{eq:der sequents} by replacing the  applications of the derivable rules $(\lor_{R1})$ and $(\lor_{R2})$ with the derivations of those applications as shown in Remark \ref{rem: derivable rules}, thereby implementing the general strategy outlined in Remark \ref{remark:distr}. That is:
\begin{center}
\begin{tabular}{cc}
{\fns
\begin{tabular}{c}
\AX$\Xi_{h\,}[\overline{U}/!\oy, \overline{S}/!\ox]\fCenter \xi'[\overline{\rtau}/!\ory, \overline{\bsigma}/!\obx]$
\RL{\fns$W_{R}$}
\UI$ \Xi_{h\,}[\overline{U}/!\oy, \overline{S}/!\ox]\fCenter \xi' [\overline{\rtau}/!\ory, \overline{\bsigma}/!\obx]\AOR \xi''[\overline{\rtau}/!\ory, \overline{\bsigma}/!\obx]$
\RL{\fns$\lor_{R}$}
\UI$ \Xi_{h\,}[\overline{U}/!\oy, \overline{S}/!\ox]\fCenter \xi' [\overline{\rtau}/!\ory, \overline{\bsigma}/!\obx]\lor \xi''[\overline{\rtau}/!\ory, \overline{\bsigma}/!\obx]$
\DP 
 \\
\end{tabular}
}
&
{\fns
\begin{tabular}{c}
\AX$\Xi_{k\,}[\overline{U}/!\oy, \overline{S}/!\ox]\fCenter  \xi''[\overline{\rtau}/!\ory, \overline{\bsigma}/!\obx]$
\RL{\fns$W_{R}$}
\UI$\Xi_{k\,}[\overline{U}/!\oy, \overline{S}/!\ox]\fCenter  \xi''[\overline{\rtau}/!\ory, \overline{\bsigma}/!\obx] \AOR \xi'[\overline{\rtau}/!\ory, \overline{\bsigma}/!\obx]$
\RL{\fns$E_{R}$}
\UI$\Xi_{k\,}[\overline{U}/!\oy, \overline{S}/!\ox]\fCenter \xi'[\overline{\rtau}/!\ory, \overline{\bsigma}/!\obx] \AOR  \xi''[\overline{\rtau}/!\ory, \overline{\bsigma}/!\obx] $
\RL{\fns$\lor_{R}$}
\UI$\Xi_{k\,}[\overline{U}/!\oy, \overline{S}/!\ox]\fCenter \xi'[\overline{\rtau}/!\ory, \overline{\bsigma}/!\obx] \lor \xi''[\overline{\rtau}/!\ory, \overline{\bsigma}/!\obx] $
\DP 
 \\
\end{tabular}
}
\end{tabular}
\end{center}

} of the following sequents  for every $h \in H$ and $k \in K$:
\begin{equation} \label{eq:der sequents}
 \Xi_{h\,}[\overline{U}/!\oy, \overline{S}/!\ox] \vdash (\xi'\lor  \xi'')[\overline{\rtau}/!\ory, \overline{\bsigma}/!\obx] \quad \text{~and~}\quad \Xi_{k\,}[\overline{U}/!\oy, \overline{S}/!\ox]\vdash (\xi' \lor  \xi'')[\overline{\rtau}/!\ory, \overline{\bsigma}/!\obx] .
\end{equation}

\begin{center}
\begin{tabular}{cc}
{\fns
\begin{tabular}{c}
\AX$\Xi_{h\,}[\overline{U}/!\oy, \overline{S}/!\ox]\fCenter \xi'[\overline{\rtau}/!\ory, \overline{\bsigma}/!\obx]$
\RL{\fns$\lor_{R1}$}
\UI$ \Xi_{h}[\overline{U}/!\oy, \overline{S}/!\ox]\fCenter \xi' [\overline{\rtau}/!\ory, \overline{\bsigma}/!\obx]\lor  \xi''[\overline{\rtau}/!\ory, \overline{\bsigma}/!\obx]$
\DP 
 \\
\end{tabular}
}
&
{\fns
\begin{tabular}{c}
\AX$\Xi_{k}[\overline{U}/!\oy, \overline{S}/!\ox] \fCenter  \xi''[\overline{\rtau}/!\ory, \overline{\bsigma}/!\obx]$
\RL{\fns$\lor_{R2}$}
\UI$\Xi_{k}[\overline{U}/!\oy, \overline{S}/!\ox]\fCenter \xi'[\overline{\rtau}/!\ory, \overline{\bsigma}/!\obx] \lor \xi''[\overline{\rtau}/!\ory, \overline{\bsigma}/!\obx] $
\DP 
 \\
\end{tabular}
}
\end{tabular}
\end{center}

By induction hypothesis on $\gamma'$, the following sequents are also derivable in $\mathrm{\cfDLE}$ (resp.~$\mathrm{\cfDDLE}$)   for every $j \in J$, $h \in H$ and $k \in K$:
\begin{align*}
&\gamma'[\overline{\bsigma}/!\obx, \overline{\rtau}/!\ory, \textcolor{red}{(\xi' \lor \xi'')}[\overline{\rtau}/!\ory, \overline{\bsigma}/!\obx]/!\textcolor{red}{u}] \vdash \Gamma'_{j\,}[\overline{S}/!\ox, \overline{U}/!\oy, \Xi_{h\,}[\overline{U}/!\oy, \overline{S}/!\ox]/!u] \quad \text{~and~}\\
&\gamma'[\overline{\bsigma}/!\obx, \overline{\rtau}/!\ory, \textcolor{red}{(\xi'\lor \xi'')}[\overline{\rtau}/!\ory, \overline{\bsigma}/!\obx]/!\textcolor{red}{u}] \vdash \Gamma'_{j\,}[\overline{S}/!\ox, \overline{U}/!\oy, \Xi_{k\,}[\overline{U}/!\oy, \overline{S}/!\ox]/!u].
\end{align*}
which is enough to prove the statement, since $\gamma = \gamma' [(\xi '\lor \xi '')/!u]$, and for every $i\in I$, either $\gamma_i = \gamma'_{j}[\xi'_{h}/!u ]$ for some $j \in J$ and $h \in H$, or $\gamma_i = \gamma'_{j}[\xi'_{k}/!u ]$ for some $j \in J$ and $k \in K$.
The induction step for $n_\delta \geq 1$ is similar to the induction step above.
\end{proof}

By instantiating $\overline{\sigma\vdash S}$ and $\overline{U\vdash \tau}$ in the proposition above to identity axioms, we immediately get the following
\begin{cor}\label{prop: derivation of identities with weakening}
For any positive (resp.~negative) PIA formula $\gamma$ (resp.~$\delta$), let $\bigwedge_{i\in I}\gamma_i$ (resp.~$\bigvee_{j\in J}\delta_j$) be its equivalent rewriting as per Lemma \ref{lemma: reduction to definite}, so that each $\gamma_i$ (resp.~$\delta_j$) is definite positive (resp.~negative) PIA. Then the following sequents are derivable in $\mathrm{\cfDLE}$ (resp.~$\mathrm{\cfDDLE}$, cf.~Notation \ref{notation: structural counterparts of skeleton and pia}):
			 \begin{enumerate}
\item $\gamma\vdash \Gamma_i$ for every $i\in I$; 
\item $\Delta_j\vdash \delta$ for every $j\in J$, 
\end{enumerate}	
with derivations which only consist of identity axioms, and applications of right-introduction rules for  negative PIA-connectives (cf.~Tables \ref{Join:and:Meet:Friendly:Table} and \ref{Join:and:Meet:Friendly:Table:DLE}, and Definition \ref{def: signed gen tree}), 
and left-introduction rules for positive PIA-connectives 
(and weakening and exchange rules in the case of $\mathrm{\cfDDLE}$).
			\end{cor}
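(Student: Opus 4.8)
The plan is to obtain this corollary as an immediate instantiation of Proposition~\ref{prop: generalized derivation of amlost identities}, feeding it identity axioms in place of the parametric input derivations. Concretely, for each critical placeholder $x_k$ in $\overline{x}$ I would set $\sigma_k := x_k$ and $S_k := x_k$, and for each placeholder $y_k$ in $\overline{y}$ I would set $\tau_k := y_k$ and $U_k := y_k$. Since a single propositional variable is at once an atomic formula and an atomic structure belonging to both $\mathsf{Str}_\mathcal{F}$ and $\mathsf{Str}_\mathcal{G}$, each member of $\overline{\sigma\vdash S}$ becomes literally $x_k\vdash x_k$ and each member of $\overline{U\vdash \tau}$ becomes $y_k\vdash y_k$; these are instances of the identity axiom and hence derivable in $\mathrm{\cfDLE}$ (resp.~$\mathrm{\cfDDLE}$) by single-node derivations containing no rule applications.

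Next I would note that, under this choice, every substitution occurring in the two conclusions of Proposition~\ref{prop: generalized derivation of amlost identities} reduces to the identity substitution. Thus $\gamma[\overline{\bsigma}/!\obx, \overline{\rtau}/!\ory]=\gamma$ and $\Gamma_i[\overline{S}/!\ox, \overline{U}/!\oy]=\Gamma_i$, and symmetrically $\Delta_{j}[\overline{U}/!\oy, \overline{S}/!\ox]=\Delta_j$ and $\delta[\overline{\rtau}/!\ory, \overline{\bsigma}/!\obx]=\delta$. Hence the two families of sequents asserted derivable by the proposition collapse to exactly $\gamma\vdash\Gamma_i$ (for every $i\in I$) and $\Delta_j\vdash\delta$ (for every $j\in J$), which are the two claims of the corollary.

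Finally I would check the rule-usage side condition. The proposition guarantees that any rule applied outside the (now trivial) derivations of $\overline{\sigma\vdash S}$ and $\overline{U\vdash \tau}$ is a right-introduction rule for a negative PIA-connective or a left-introduction rule for a positive PIA-connective (and, in the $\mathrm{\cfDDLE}$ case, a weakening or exchange rule). Because the chosen input derivations consist of a single identity axiom each and apply no rules at all, the whole derivation of $\gamma\vdash\Gamma_i$ (resp.~$\Delta_j\vdash\delta$) consists solely of identity axioms at the leaves together with applications of the listed introduction rules, exactly as claimed.

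The substantive content has already been discharged inside Proposition~\ref{prop: generalized derivation of amlost identities}, so I expect no real obstacle. The only point warranting a moment's attention is the legitimacy of treating the placeholders $\overline{x},\overline{y}$ as atomic propositions, which is what makes $x_k\vdash x_k$ and $y_k\vdash y_k$ genuine identity axioms and renders the displayed substitutions trivial; this is immediate from the conventions fixed in Notation~\ref{notation: placeholder variables} and the base case of the structural-counterpart map in Notation~\ref{notation: structural counterparts of skeleton and pia}.
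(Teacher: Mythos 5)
Your proposal is correct and is essentially identical to the paper's own argument: the paper obtains this corollary precisely by instantiating $\overline{\sigma\vdash S}$ and $\overline{U\vdash \tau}$ in Proposition~\ref{prop: generalized derivation of amlost identities} to identity axioms, exactly as you do. Your additional observations (that the substitutions become trivial and that the rule-usage condition is inherited because the input derivations contain no rule applications) are just the details the paper leaves implicit in the word ``immediately.''
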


\begin{example}\label{ex:usedforadjoints}
The formula $\wbox (p \aand q)$ is a positive PIA in any (D)LE-language such that  $\wbox \in \mathcal{G}$, and is equivalent to $\wbox p\aand\wbox q$. Since $p\vdash  \BBOX \wdia p$ and $q \vdash  \BBOX \wdia q$ are derivable sequents 
\begin{center}
	\begin{tabular}{cc}
		\AX$p \fCenter p $
		\RL{\fns$\wdia_R$}
		\UI$\WDIA p  \fCenter \wdia p $
		\UI$p \fCenter \BBOX \wdia p $
		\DP
		& 
		\AX$q \fCenter q$
		\RL{\fns$\wdia_R$}
		\UI$\WDIA q \fCenter \wdia q$
		\UI$q \fCenter \BBOX \wdia q$
		\DP
	\end{tabular}
\end{center}
instantiating the argument in Proposition \ref{prop: generalized derivation of amlost identities}, we can derive the sequents
$\wbox (p \aand q) \fCenter \WBOX \BBOX \wdia p$ and $\wbox (p \aand q) \fCenter \WBOX \BBOX \wdia q$ in $\mathrm{D.LE}$ and in $\mathrm{D.DLE}$ as follows: 

\begin{center}
\begin{tabular}{ccccc}
\cfDLE-derivation of $\wbox (p \aand q) \fCenter \WBOX \BBOX \wdia p$:
 & & & &
\cfDDLE-derivation of $\wbox (p \aand q) \fCenter \WBOX \BBOX \wdia p$: 
\rule[-1.8mm]{0mm}{0mm} \\
\cline{0-1}
\cline{4-5}

\AXC{$\phantom{p \fCenter p}$}
\noLine
\UI$p \fCenter p $
\RL{\fns$\wdia_R$}
\UI$\WDIA p  \fCenter \wdia p $
\UI$p \fCenter \BBOX \wdia p $
\LL{\fns$\aand_L$}
\UI$p \aand q \fCenter \BBOX \wdia p$
\LL{\fns$\wbox_L$}
\UI$\wbox (p \aand q) \fCenter \WBOX \BBOX \wdia p$
\DP
&&\ \ \ \ \ \ \ &&
\AX$p \fCenter p \rule[3mm]{0mm}{0mm}$
\RL{\fns$\wdia_R$}
\UI$\WDIA p  \fCenter \wdia p $
\UI$p \fCenter \BBOX \wdia p $
\LL{\fns$W_L$}
\UI$p \AAND q \fCenter \BBOX \wdia p$
\LL{\fns$\aand_L$}
\UI$p \aand q \fCenter \BBOX \wdia p$
\LL{\fns$\wbox_L$}
\UI$\wbox (p \aand q) \fCenter \WBOX \BBOX \wdia p$
\DP
 \\
&&\ \ \ \ \ \ \ &&\\
\cfDLE-derivation of $\wbox (p \aand q) \fCenter \WBOX \BBOX \wdia q$:
 & & & &
\cfDDLE-derivation of $\wbox (p \aand q) \fCenter \WBOX \BBOX \wdia q$: 
\rule[-1.8mm]{0mm}{0mm} \\
\cline{0-1}
\cline{4-5}

\AXC{$\phantom{q \fCenter q}$}
\noLine
\UIC{$\phantom{q \fCenter q}$}
\noLine
\UI$q \fCenter q$
\RL{\fns$\wdia_R$}
\UI$\WDIA q \fCenter \wdia q$
\UI$q \fCenter \BBOX \wdia q$
\LL{\fns$\aand_L$}
\UI$p \aand q \fCenter \BBOX \wdia q$
\LL{\fns$\wbox_L$}
\UI$\wbox (p \aand q) \fCenter \WBOX \BBOX \wdia q$
\DP
&&\ \ \ \ \ \ \ &&
\AX$q \fCenter q \rule[3mm]{0mm}{0mm}$
\RL{\fns$\wdia_R$}
\UI$\WDIA q \fCenter \wdia q$
\UI$q \fCenter \BBOX \wdia q$
\LL{\fns$W_L$}
\UI$q \AAND p \fCenter \BBOX \wdia q$
\LL{\fns$E_L$}
\UI$p \AAND q \fCenter \BBOX \wdia q$
\LL{\fns$\aand_L$}
\UI$p \aand q \fCenter \BBOX \wdia q$
\LL{\fns$\wbox_L$}
\UI$\wbox (p \aand q) \fCenter \WBOX \BBOX \wdia q$
\DP
 \\
\end{tabular}
\end{center}
\end{example}	
  In the remainder of the present section, if $\varphi (!\overline{x}, !\overline{y})$   (resp.~$\psi(!\overline{y}, !\overline{x})$) is a definite positive (resp.~negative) PIA formula, we will need to fix one variable in $\overline{x}$ or in $\overline{y}$ and make it the pivotal variable for the computation of the corresponding $\mathsf{la}(\varphi)(u, \overline{z})$ (resp.~$\mathsf{ra}(\psi)(u, \overline{z})$), where the vector $\overline{z}$ of parametric variables exactly includes all the placeholder variables in $\overline{x}$ and in $\overline{y}$ different from the pivotal one. So we write e.g.~$\varphi_x$ (resp.~$\varphi_y$) to indicate that we are choosing the pivotal variable among the variables in $\overline{x}$ (resp.~$\overline{y}$). In order to simplify the notation, we leave it to be understood that the set  of parametric variables does not contain the pivotal one, although we do not make this fact explicit in the notation.  In the remainder of the paper, we will let e.g.~$\mathsf{LA}(\varphi)(u, \overline{z})$ denote the structural counterpart of $\mathsf{la}(\varphi)(u, \overline{z})$ (cf.~Definition \ref{def: RA and LA}).		
\begin{cor}
	\label{cor: generalized deriving la almost implies atom}
	Let $\psi(!\overline{x}, !\overline{y})$ and $\xi(!\overline{y}, !\overline{x})$ be a positive and a negative PIA formula respectively, and let $\bigwedge_{i\in I}\psi_i$ and $\bigvee_{j\in J}\xi_j$ be their equivalent rewritings as per Lemma \ref{lemma: reduction to definite}, so that each $\varphi_i$ (resp.~$\xi_j$) is a definite positive (resp.~negative) PIA formula.
	Then: 
	\begin{enumerate}
		\item if $\overline{\sigma\vdash S}$ and $\overline{U\vdash \tau}$ are derivable in $\mathrm{\cfDLE}$ (resp.~$\mathrm{\cfDDLE}$), then so is $\mathsf{LA}(\psi_i)[\bpsi_{\textcolor{blue}{x\,}}[\overline{\bsigma}/!\obx, \overline{\rtau}/!\ory]/!\textcolor{blue}{u}, \overline{S}/!\ox, \overline{U}/!\oy] \vdash S_{\!x\,}$, where $\psi_i$ is the definite positive PIA formula in which the pivotal variable $x$ occurs;
		\item if $\overline{\sigma\vdash S}$ and $\overline{U\vdash \tau}$ are derivable in $\mathrm{\cfDLE}$ (resp.~$\mathrm{\cfDDLE}$), then so is $U_y\vdash \mathsf{LA}(\psi_i)[\bpsi_{\textcolor{blue}{y\,}}[\overline{\bsigma}/!\obx, \overline{\rtau}/!\ory]/!\textcolor{blue}{u}, \overline{S}/!\ox, \overline{U}/!\oy]$, where $\psi_i$ is the definite positive PIA formula in which the pivotal variable $y$ occurs;
		\item if $\overline{\sigma\vdash S}$ and $\overline{U\vdash \tau}$ are derivable in $\mathrm{\cfDLE}$ (resp.~$\mathrm{\cfDDLE}$), then so is $\mathsf{RA}(\xi_j)[\rxi_{\textcolor{red}{x\,}}[\overline{\rtau}/!\ory, \overline{\bsigma}/!\obx]/!\textcolor{red}{u}, \overline{U}/!\oy, \overline{S}/!\ox] \vdash S_{\!x\,}$, where $\xi_j$ is the definite negative PIA formula in which the pivotal variable $x$ occurs;
		\item if $\overline{\sigma\vdash S}$ and $\overline{U\vdash \tau}$ are derivable in $\mathrm{\cfDLE}$ (resp.~$\mathrm{\cfDDLE}$), then so is $U_y\vdash \mathsf{RA}(\xi_j)[\rxi_{\textcolor{red}{y\,}}[\overline{\rtau}/!\ory, \overline{\bsigma}/!\obx]/!\textcolor{red}{u}, \overline{U}/!\oy, \overline{S}/!\ox]$, where $\xi_j$ is the definite negative PIA formula in which the pivotal variable $y$ occurs,
		
		
	\end{enumerate} 
	with derivations such that, if any rules are applied other than display rules, right-introduction rules for  negative PIA-connectives (cf.~Tables \ref{Join:and:Meet:Friendly:Table} and \ref{Join:and:Meet:Friendly:Table:DLE}, and Definition \ref{def: signed gen tree}), 
and left-introduction rules for positive PIA-connectives 
(and weakening and exchange rules in the case of $\mathrm{\cfDDLE}$), then they are applied only in the derivations of $\overline{\sigma\vdash S}$ and $\overline{U\vdash \tau}$.	
	
\end{cor}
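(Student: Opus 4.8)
The plan is to follow exactly the two-step recipe that underlies the (definite) commented-out predecessor of this corollary, but feeding it the stronger non-definite input of Proposition~\ref{prop: generalized derivation of amlost identities} in place of Proposition~\ref{prop: generalized derivation of identities}. Concretely, for claim~1 I would fix the definite positive PIA conjunct $\psi_i$ of the rewriting $\psi\equiv\bigwedge_{i\in I}\psi_i$ (cf.~Lemma~\ref{lemma: reduction to definite}) in which the pivotal variable $x$ occurs, and let $\Psi_i$ be its structural counterpart (Notation~\ref{notation: structural counterparts of skeleton and pia}). By Proposition~\ref{prop: generalized derivation of amlost identities}(1), from the derivability of $\overline{\sigma\vdash S}$ and $\overline{U\vdash \tau}$ we obtain a derivation in $\mathrm{\cfDLE}$ (resp.~$\mathrm{\cfDDLE}$) of $\psi[\overline{\bsigma}/!\obx, \overline{\rtau}/!\ory]\vdash \Psi_i[\overline{S}/!\ox, \overline{U}/!\oy]$, where, crucially, the precedent is the \emph{full} substituted formula $\psi[\cdots]=\bpsi_x[\cdots]$ and the structural variable $S_x$ occupies the (unique) node of $\Psi_i$ corresponding to $x$. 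Moreover, by the shape clause of Proposition~\ref{prop: generalized derivation of amlost identities}, this derivation already meets the constraint demanded in the statement, except for the display rules still to be added.

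Next I would prolong this derivation by applying, reading bottom-up, the display postulates for the connectives in $\mathcal{G}$ (resp.~$\mathcal{F}$) along the branch of $\Psi_i$ from its root to the leaf $S_x$. Each node on this branch is a PIA-node of $\psi_i$, hence a unary SRA-node or an SRR-node, and for each such node exactly one display postulate is available, displaying the coordinate containing $x$; this peels the structural connective off the succedent and transfers its residual $\check{g}^\flat$ (resp.~$\hat{f}^\sharp$) to the precedent. By the recursive clauses of Definition~\ref{def: RA and LA} together with Notation~\ref{notation: structural counterparts of skeleton and pia}, the structure accumulated in the precedent once the whole branch has been displayed is precisely $\mathsf{LA}(\psi_i)$ with its $u$-slot occupied by the original precedent $\bpsi_x[\cdots]$ and its remaining parametric slots occupied by the corresponding structures in $\overline{S}$ and $\overline{U}$, while the succedent has become $S_x$. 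This is exactly the sequent required in claim~1, and since the only rules added are display rules, the constraint on the shape of the derivation is preserved.

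Claims~2--4 would be obtained by the same two-step recipe, invoking the corresponding item of Proposition~\ref{prop: generalized derivation of amlost identities} and the dual display postulates: claim~2 treats a pivotal $y$ lying in a negative-sign (order-type~$\partial$) position of $\psi$, producing $U_y$ on the precedent and $\mathsf{LA}(\psi_i)$ on the succedent, while claims~3--4 treat the negative PIA formula $\xi$, using the $\mathsf{ra}/\mathsf{RA}$-recursion of Definition~\ref{def: RA and LA} for $f\in\mathcal{F}$ in place of the $\mathsf{la}/\mathsf{LA}$-recursion for $g\in\mathcal{G}$. In every case one first extracts the sequent $\psi[\cdots]\vdash\Psi_i[\cdots]$ (resp.~$\Xi_j[\cdots]\vdash\xi[\cdots]$) from Proposition~\ref{prop: generalized derivation of amlost identities}, and then displays the branch to the pivotal variable, matching it against the appropriate recursive clause of $\mathsf{la}$ or $\mathsf{ra}$.

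The main obstacle is the bookkeeping in the second step: one must verify that the finite sequence of display-postulate applications along the branch to the pivotal variable reproduces exactly the recursion defining $\mathsf{la}$ (resp.~$\mathsf{ra}$), including the switch between $\mathsf{la}$ and $\mathsf{ra}$ forced at coordinates of order-type~$\partial$, and that the $u$-slot is filled by the \emph{entire} original precedent $\bpsi_x[\cdots]$ rather than by $\psi_i[\cdots]$. This last point is precisely what makes the non-definite version stronger than a naive appeal to the definite result: the left-hand side delivered by Proposition~\ref{prop: generalized derivation of amlost identities} is the full formula $\psi[\cdots]$, so it is this full formula that the display moves transport into the $u$-position. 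Checking the correspondence node-by-node is routine, provided one records that every branch node is genuinely a PIA-node (guaranteeing the applicability of a display postulate) and that the pivotal variable occurs exactly once in $\psi_i$ (guaranteeing a unique branch to display).
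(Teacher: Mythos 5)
Your proposal is correct and follows essentially the same route as the paper's own proof: invoke Proposition \ref{prop: generalized derivation of amlost identities} to obtain $\psi_{x}[\overline{\bsigma}/!\obx, \overline{\rtau}/!\ory]\vdash \Psi_{i}[\overline{S}/!\ox, \overline{U}/!\oy]$ (and its analogues), then prolong that derivation with display rules applied along the branch of $\Psi_i$ leading to the pivotal variable, which by construction yields $\mathsf{LA}(\psi_i)$ (resp.~$\mathsf{RA}(\xi_j)$) in the displayed position with the full substituted formula in the $u$-slot. Your additional bookkeeping remarks --- that the display moves track the recursion of Definition \ref{def: RA and LA} and that the $u$-slot receives $\psi_x[\cdots]$ rather than $\psi_i[\cdots]$ --- are exactly the points the paper leaves implicit, and they are handled correctly.
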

\begin{proof}
1.   Let $\Psi_x$ denote the structural counterpart of $\psi_x$ (cf.~Notation \ref{notation: structural counterparts of skeleton and pia}). The assumptions imply, by Proposition \ref{prop: generalized derivation of amlost identities}, that the sequent $\psi_{x\,}[\overline{\bsigma}/!\obx, \overline{\rtau}/!\ory]\vdash \Psi_{i\,}[\overline{S}/!\ox, \overline{U}/!\oy]$ is derivable in $\mathrm{\cfDLE}$ (resp.~$\mathrm{\cfDDLE}$) with a derivation such that, if any rules are applied other than right-introduction rules for  negative PIA-connectives 
and left-introduction rules for positive PIA-connectives 
(and weakening and exchange rules in the case of $\mathrm{\cfDDLE}$), then they are applied only in the derivations of $\overline{\sigma\vdash S}$ and $\overline{U\vdash \tau}$. Then, we can prolong this derivation by applying display rules to each node  of the branch of $\Psi_i$ leading to  the pivotal variable $x$, so as to obtain a derivation of the required sequent $\mathsf{LA}(\psi_i)[\bpsi_{\textcolor{blue}{x\,}}[\overline{\bsigma}/!\obx, \overline{\rtau}/!\ory]/\textcolor{blue}{u}, \overline{S}/!\ox, \overline{U}/!\oy]\vdash S_{\!x\,}$. The remaining items are proved similarly.
\end{proof}

By instantiating $\overline{\sigma\vdash S}$ and $\overline{U\vdash \tau}$ in the corollary above to identity axioms, we immediately get the following

\begin{cor}
\label{cor: deriving la implies atom with weakening}
The following sequents are derivable in  $\mathrm{\cfDLE}$ (resp.~$\mathrm{\cfDDLE}$) for any  positive PIA (i.e.~negative Skeleton) formula $\psi(!\overline{x}, !\overline{y})$ and any  negative PIA (i.e.~positive Skeleton) formula $\xi(!\overline{y}, !\overline{x})$, such that $\bigwedge_{i\in I}\psi_i$ and $\bigvee_{j\in J}\xi_j$ are their equivalent rewritings as per Lemma \ref{lemma: reduction to definite}, so that each $\psi_i$ (resp.~$\xi_j$) is a definite positive (resp.~negative) PIA formula.
\begin{enumerate}
\item $\mathsf{LA}(\psi_i)[\bpsi_{\textcolor{blue}{x}}/\textcolor{blue}{u}]\vdash x$, where $\psi_i$ is the definite positive PIA formula in which the pivotal variable $x$ occurs;
\item $y\vdash \mathsf{LA}(\psi_i)[\bpsi_{\textcolor{blue}{y}}/\textcolor{blue}{u}]$, where $\psi_i$ is the definite positive PIA formula in which the pivotal variable $y$ occurs; 
		\item $\mathsf{RA}(\xi_j)[\rxi_{\textcolor{red}{x}}/\textcolor{red}{u}]\vdash x$, where $\xi_j$ is the definite negative PIA formula in which the pivotal variable $x$ occurs; 
		\item $y\vdash \mathsf{RA}(\xi_j)[\rxi_{\textcolor{red}{y}}/\textcolor{red}{u}]$, where $\xi_j$ is the definite negative PIA formula in which the pivotal variable $y$ occurs, 
		\end{enumerate}
		with derivations which only consist of identity axioms, and applications of display rules, right-introduction rules for  negative PIA-connectives (cf.~Tables \ref{Join:and:Meet:Friendly:Table} and \ref{Join:and:Meet:Friendly:Table:DLE}, and Definition \ref{def: signed gen tree}), 
and left-introduction rules for positive PIA-connectives 
(and weakening and exchange rules in the case of $\mathrm{\cfDDLE}$).
\end{cor}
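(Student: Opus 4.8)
The plan is to obtain this corollary exactly as the sentence preceding the statement suggests: as an immediate instantiation of Corollary \ref{cor: generalized deriving la almost implies atom}. The point is that the \emph{conditional} derivability established there becomes \emph{unconditional} as soon as we feed in premises that are themselves initial axioms of the calculus, so that no rule at all is applied in deriving them.

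Concretely, I would take the families $\overline{\sigma\vdash S}$ and $\overline{U\vdash \tau}$ to consist of identity axioms: for each placeholder variable $x$ in $\overline{x}$ set $\sigma_x := x$ and $S_x := x$, so that $\sigma_x\vdash S_x$ is the instance $x\vdash x$ of $\mathrm{Id}$; and for each $y$ in $\overline{y}$ set $U_y := y$ and $\tau_y := y$, so that $U_y\vdash \tau_y$ is $y\vdash y$. Each of these is derivable in $\mathrm{\cfDLE}$ (resp.~$\mathrm{\cfDDLE}$) by a zero-step derivation consisting of a single initial identity axiom. With this choice every substitution occurring in Corollary \ref{cor: generalized deriving la almost implies atom} collapses to the identity. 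Since $\sigma_x = x$ and $\tau_y = y$, we have $\bpsi_x[\overline{\bsigma}/!\obx, \overline{\rtau}/!\ory] = \bpsi_x$; and since $S_x = x$, $U_y = y$, whose structural counterparts are again $x$ and $y$ by Notation \ref{notation: structural counterparts of skeleton and pia}(1), the parametric substitutions $[\overline{S}/!\ox, \overline{U}/!\oy]$ act trivially as well, with $S_{\!x} = x$ and $U_y = y$ on the right-hand sides. Hence item~1 of Corollary \ref{cor: generalized deriving la almost implies atom} specialises precisely to $\mathsf{LA}(\psi_i)[\bpsi_x/u]\vdash x$, which is item~1 of the present statement; items~2--4 follow in the same way from the corresponding items there.

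Finally, the side condition on rule usage transfers with no extra work. The clause in Corollary \ref{cor: generalized deriving la almost implies atom} permits rules outside the advertised repertoire (identity axioms, display rules, right-introduction rules for negative PIA-connectives, left-introduction rules for positive PIA-connectives, together with weakening and exchange in the $\mathrm{\cfDDLE}$ case) only within the derivations of the premises $\overline{\sigma\vdash S}$ and $\overline{U\vdash \tau}$; but here those derivations are single identity axioms and apply no rules whatsoever, so the assembled derivation uses exactly the listed rules. There is essentially no obstacle beyond bookkeeping: the only point deserving minor care is checking that the structural counterpart of the atomic premise $x\vdash x$ is again $x$, so that the substitutions genuinely vanish and the pivotal/parametric variable conventions (and the blue/red colour assignments) line up correctly with the general statement.
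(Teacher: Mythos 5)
Your proposal is correct and coincides with the paper's own proof: the paper obtains this corollary precisely by instantiating $\overline{\sigma\vdash S}$ and $\overline{U\vdash \tau}$ in Corollary \ref{cor: generalized deriving la almost implies atom} to identity axioms, exactly as you do. Your additional bookkeeping (checking that the substitutions collapse to identities and that the rule-usage side condition transfers because the premise derivations apply no rules) just makes explicit what the paper leaves as ``immediate''.
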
	

\begin{example}
	The formula $\wbox((\wbox (p \aand q))\oplus r)$ is a positive PIA in any (D)LE-language such that  $\wbox,\oplus \in \mathcal{G}$, and is equivalent to $\wbox((\wbox p)\oplus r)\aand\wbox((\wbox q)\oplus r)$. Let $x=p$, then $\mathsf{LA}(\wbox((\wbox p)\oplus r))[\wbox((\wbox (p \aand q))\oplus r)/u]=\BDIA(\BDIA \wbox((\wbox (p \aand q))\oplus r)\ADLARR r)$, where $\ADLARR$ is the left residual of $\oplus$ on the first coordinate. As shown in Example \ref{ex:usedforadjoints} $p\vdash  \BBOX \wdia p$ and $r \vdash  \BBOX \wdia r$ are derivable sequents. Instantiating the argument in Corollary \ref{cor: generalized deriving la almost implies atom}, we can derive the sequent
	$\BDIA(\BDIA \wbox((\wbox (p \aand q))\oplus r)\ADLARR \BBOX \wdia r) \fCenter \BBOX \wdia p$ as follows: 
	
	\begin{center}
		\begin{tabular}{c}
			\cfDLE-derivation of $\BDIA(\BDIA \wbox((\wbox (p \aand q))\oplus r)\ADLARR \BBOX \wdia r) \fCenter \BBOX \wdia p$:
	\\ 	
	\hline
			\AXC{$\phantom{p \fCenter p}$}
			\noLine
			\UI$p \fCenter p $
			\RL{\fns$\wdia_R$}
			\UI$\WDIA p  \fCenter \wdia p $
			\UI$p \fCenter \BBOX \wdia p $
			\LL{\fns$\aand_L$}
			\UI$p \aand q \fCenter \BBOX \wdia p$
			\LL{\fns$\wbox_L$}
			\UI$\wbox (p \aand q) \fCenter \WBOX \BBOX \wdia p$
			\AXC{$\phantom{r \fCenter r}$}
			\noLine
			\UI$r \fCenter r $
			\RL{\fns$\wdia_R$}
			\UI$\WDIA r  \fCenter \wdia r $
			\UI$r \fCenter \BBOX \wdia r $
			\LL{\fns$\oplus_L$}
			\BI$\wbox (p \aand q)\oplus r\fCenter  \WBOX \BBOX \wdia p \MOR  \BBOX \wdia r$
			\LL{\fns$\wbox_L$}
			\UI$\wbox(\wbox (p \aand q)\oplus r)\fCenter \WBOX( \WBOX \BBOX \wdia p \MOR  \BBOX \wdia r)$
			\UI$\BDIA\wbox(\wbox (p \aand q)\oplus r)\fCenter  \WBOX \BBOX \wdia p \MOR  \BBOX \wdia r$
			\UI$\BDIA\wbox(\wbox (p \aand q)\oplus r)\ADLARR \BBOX \wdia r\fCenter \WBOX \BBOX \wdia p$
			\UI$\BDIA(\BDIA\wbox(\wbox (p \aand q)\oplus r)\ADLARR \BBOX \wdia r)\fCenter \BBOX \wdia p$
			\DP
		\\ 
		\\
		\\
			\cfDDLE-derivation of $\BDIA(\BDIA \wbox((\wbox (p \aand q))\oplus r)\ADLARR \BBOX \wdia r) \fCenter \BBOX \wdia p$: 
			\\\hline
			\AXC{$\phantom{p \fCenter p}$}
		\noLine
			\UI$p \fCenter p $
			\RL{\fns$\wdia_R$}
			\UI$\WDIA p  \fCenter \wdia p $
			\UI$p \fCenter \BBOX \wdia p $
			\LL{\fns$W_L$}
			\UI$p \AAND q \fCenter \BBOX \wdia p$
			\LL{\fns$\aand_L$}
			\UI$p \aand q \fCenter \BBOX \wdia p$
			\LL{\fns$\wbox_L$}
			\UI$\wbox (p \aand q) \fCenter \WBOX \BBOX \wdia p$
			\AXC{$\phantom{r \fCenter r}$}
			\noLine
			\UI$r \fCenter r $
			\RL{\fns$\wdia_R$}
			\UI$\WDIA r  \fCenter \wdia r $
			\UI$r \fCenter \BBOX \wdia r $
			\LL{\fns$\oplus_L$}
			\BI$\wbox (p \aand q)\oplus r\fCenter  \WBOX \BBOX \wdia p \MOR  \BBOX \wdia r$
			\LL{\fns$\wbox_L$}
			\UI$\wbox(\wbox (p \aand q)\oplus r)\fCenter \WBOX( \WBOX \BBOX \wdia p \MOR  \BBOX \wdia r)$
			\UI$\BDIA\wbox(\wbox (p \aand q)\oplus r)\fCenter  \WBOX \BBOX \wdia p \MOR  \BBOX \wdia r$
			\UI$\BDIA\wbox(\wbox (p \aand q)\oplus r)\ADLARR \BBOX \wdia r\fCenter \WBOX \BBOX \wdia p$
			\UI$\BDIA(\BDIA\wbox(\wbox (p \aand q)\oplus r)\ADLARR \BBOX \wdia r)\fCenter \BBOX \wdia p$
			\DP	
		\end{tabular}
	\end{center}
\end{example}	

\begin{prop}\label{prop: the thing needed for the non-definite}
	Let $\varphi = \varphi (!\overline{x}, !\overline{y})$ and $\psi = \psi(!\overline{y}, !\overline{x})$ be a positive and a negative Skeleton formula, respectively, and let $\bigvee_{j\in J}\varphi_j$ and $\bigwedge_{i\in I}\psi_i$  be their equivalent rewritings as per Lemma \ref{lemma: reduction to definite}, so that each $\varphi_j$  (resp.~each $\psi_i$) is definite positive (resp.~negative) Skeleton. Then: 
	\begin{enumerate}
	\item if $ \Phi_{j\,}[\overline{\bsigma}/!\obx, \overline{\rtau}/!\ory]\vdash \Sigma$ is derivable in $\mathrm{\cfDLE}$ (resp.~$\mathrm{\cfDDLE}$) for every $j\in J$, then so is $\varphi[\overline{\bsigma}/!\obx, \overline{\rtau}/!\ory] \vdash \Sigma$;
		\item if $\Pi\vdash \Psi_{i\,}[\overline{\rtau}/!\ory, \overline{\bsigma}/!\obx]$ is derivable in $\mathrm{\cfDLE}$ (resp.~$\mathrm{\cfDDLE}$) for every $i\in I$, then so is $\Pi\vdash \psi[\overline{\rtau}/!\ory, \overline{\bsigma}/!\obx]$,
	\end{enumerate}	
with derivations such that, if any rules are applied other than display rules,  left-introduction rules for positive Skeleton-connectives (cf.~Tables \ref{Join:and:Meet:Friendly:Table} and \ref{Join:and:Meet:Friendly:Table:DLE}, and Definition \ref{def: signed gen tree}), right-introduction rules for negative Skeleton-connectives, 
(and contraction  in the case of $\mathrm{\cfDDLE}$), then they are applied only in the derivations of  $ \Phi_{j\,}[\overline{\bsigma}/!\obx, \overline{\rtau}/!\ory]\vdash \Sigma$ and $\Pi\vdash \Psi_{i\,}[\overline{\rtau}/!\ory, \overline{\bsigma}/!\obx]$.	
\end{prop}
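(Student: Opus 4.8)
The plan is to establish both statements by a simultaneous induction on the structure of the positive Skeleton formula $\varphi$ and of the negative Skeleton formula $\psi$, treating statement 2 as the precise order-dual of statement 1, so I describe only the argument for statement 1. Throughout I keep the substitution $[\overline{\bsigma}/!\obx, \overline{\rtau}/!\ory]$ fixed and abbreviate it as $[\ldots]$ where harmless, and I recall that the rewriting $\bigvee_{j\in J}\varphi_j$ of Lemma \ref{lemma: reduction to definite} is obtained by exhaustively distributing the $\Delta$-adjoint occurrences $+\vee$ (and $-\wedge$) of $+\varphi$ towards the root. Note that this proposition is, in spirit, the converse of Proposition \ref{prop: generalized derivation of amlost identities}: there one \emph{splits} a PIA formula into its definite pieces, whereas here one \emph{recombines} the definite Skeleton pieces $\Phi_j$ back into $\varphi$.

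In the base case $\varphi$ is a variable, so $J$ is a singleton, $\Phi_1$ coincides with $\varphi_1=\varphi$, and the conclusion is literally the hypothesis. If the root of $+\varphi$ is the $\Delta$-adjoint node $+\vee$, say $\varphi=\varphi'\vee\varphi''$, then $J$ is the disjoint union of the index sets of the rewritings of $\varphi'$ and $\varphi''$, so the hypotheses split into those for the disjuncts of $\varphi'$ and those for the disjuncts of $\varphi''$; the induction hypothesis applied to $\varphi'$ and $\varphi''$ yields $\varphi'[\ldots]\vdash\Sigma$ and $\varphi''[\ldots]\vdash\Sigma$, and one application of $\aor_L$ gives $\varphi[\ldots]\vdash\Sigma$.

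The main case is when the root of $+\varphi$ is an SLR node, i.e.\ $\varphi=f(\overline{\alpha},\overline{\beta})$ for some $f\in\mathcal{F}$, where the $\overline{\alpha}$ are the monotone children (positive Skeleton) and the $\overline{\beta}$ the antitone children (negative Skeleton). Distributing through $f$, each definite disjunct $\varphi_j$ has the form $f$ applied to a choice of definite disjuncts of the $\alpha$'s and definite conjuncts of the $\beta$'s, so $\Phi_j=\hat{f}(\overline{A},\overline{B})$ for the corresponding structural counterparts. Here I process the arguments of $\hat{f}$ one at a time by an inner induction on the number $s$ of arguments already converted into operational subformulas: the case $s=0$ is the hypothesis, and in the step, having converted the first $s$ arguments (so that they no longer depend on the remaining disjunct indices), I display the $(s{+}1)$-st argument by the appropriate postulate ($\FH\dashv\FCS_i$ for a monotone coordinate, $(\FH,\FHS_j)$ for an antitone one), apply the induction hypothesis to that argument — statement 1 for a monotone $\alpha$-argument, statement 2 for an antitone $\beta$-argument — so as to collapse all of its disjuncts into the single operational subformula, and then un-display. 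When $s$ reaches the arity of $f$ all arguments are operational and a single application of $f_L$ yields $f(\overline{\alpha},\overline{\beta})[\ldots]=\varphi[\ldots]\vdash\Sigma$. Statement 2 is dual, using $\aand_R$ at $-\wedge$ nodes and $g_R$ with the dual postulates at $-g$ SLR nodes.

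The only rules applied outside the given derivations of the hypotheses are display postulates together with the introduction rules $f_L$, $g_R$, $\aor_L$, $\aand_R$, which are exactly left-introduction rules for positive Skeleton-connectives and right-introduction rules for negative Skeleton-connectives; in the distributive setting the two-premise rules $\aor_L$ and $\aand_R$ are replaced, following Remarks \ref{remark:distr} and \ref{rem: derivable rules}, by their $\cfDDLE$-derivations, which additionally use Contraction, and SLR-occurrences of $\vee$ (resp.\ $\wedge$) are treated as connectives of $\mathcal{F}$ (resp.\ $\mathcal{G}$). Hence the rule-placement side-condition is satisfied. I expect the principal obstacle to be the bookkeeping in the SLR case: since each argument of $f$ may carry several definite disjuncts, the number of hypotheses multiplies across coordinates, and the point that makes the inner induction go through is precisely that an argument must be converted to its operational form \emph{before} any later argument is touched, so that the succedent against which a later argument is recombined no longer depends on the disjunct indices already eliminated.
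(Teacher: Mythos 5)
Your proof is correct, but it is organized around a genuinely different induction than the paper's. The paper argues by induction on the \emph{number of $\Delta$-adjoint occurrences} $n_\varphi=n_\varphi(+\vee)+n_\varphi(-\wedge)$ (resp.~$n_\psi$): the base case (a definite formula) is discharged by converting the structural skeleton $\Phi$ into $\varphi$ via introduction rules for SLR-connectives interleaved with display rules, and in the inductive step one picks a \emph{focal} occurrence of $+\vee$ (or $-\wedge$) at arbitrary depth, replaces the subformula $\xi'\vee\xi''$ under it by a fresh variable $u$, uses consecutive display rules to display that position (so the $\mathsf{LA}(\psi'_{j})[\Pi/!v]$ machinery of Definition \ref{def: RA and LA} reappears), recombines the pieces of $\xi'$ and $\xi''$ by the induction hypothesis followed by one application of $\vee_L$, re-displays, and concludes by the induction hypothesis on the remainder $\psi'$, which has strictly fewer $\Delta$-adjoints. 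You instead run a structural induction on the root connective, with items 1 and 2 entangled by mutual recursion and, at an SLR root $f(\overline{\alpha},\overline{\beta})$, an inner coordinate-wise induction via the display postulates; $\Delta$-adjoints are only ever handled when they surface at the root. Both arguments stay within the admissible rule set (display postulates, Skeleton introduction rules, plus contraction in $\mathrm{\cfDDLE}$ via Remark \ref{rem: derivable rules}), so the pre-normal-form side condition holds either way. Your route is more self-contained: no fresh variables and no appeal to $\mathsf{la}/\mathsf{ra}$, with well-foundedness being purely structural; and your closing observation---that a coordinate must be collapsed to operational form before later coordinates are touched, so the displayed succedent no longer depends on already-eliminated disjunct indices---is exactly what makes the inner induction sound. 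The paper's route, in turn, matches the focal-occurrence technique of Proposition \ref{prop: generalized derivation of amlost identities} and processes a $\Delta$-adjoint at any depth in a single step. One inessential slip: in your last paragraph the assignment is reversed; by Remark \ref{remark:distr} it is the relevant occurrences of $\wedge$ that are treated as connectives in $\mathcal{F}$ and those of $\vee$ as connectives in $\mathcal{G}$, not the other way around.
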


\begin{proof}
	Let $n_\varphi(+\lor)$  (resp.~$n_\psi(+\lor)$) be the number of occurrences of $+\lor$ in $+\varphi$  (resp.~$-\psi$), and let  $n_\varphi(-\land)$ (resp.~$n_\psi(-\land)$) be the number of occurrences of $-\land$  in $+\varphi$ (resp.~$-\psi$). The proof is by simultaneous induction on $n_\varphi = n_\varphi(+\lor) +  n_\varphi(-\land)$ and $n_\psi = n_\psi(+\lor) +n_\psi(-\land)$. 
	
	If  $n_\psi = n_\varphi = 0$, then $\varphi$ (resp.~$\psi$) is definite positive (resp.~negative) Skeleton. Then from a derivation of $\Phi[\overline{\bsigma}/!\obx, \overline{\rtau}/!\ory]\vdash\Sigma$ (resp.~$\Pi\vdash \Psi[\overline{\rtau}/!\ory, \overline{\bsigma}/!\obx]$) we obtain a derivation of $\varphi[\overline{\bsigma}/!\obx, \overline{\rtau}/!\ory]\vdash \Sigma$ (resp.~$\Pi \vdash \psi[\overline{\rtau}/!\ory, \overline{\bsigma}/!\obx]$) by applications of left-introduction rules for positive SLR-connectives, and right-introduction rules for negative SLR-connectives, interleaved with applications of display rules.

	If $n_\psi \geq 1$, then let us consider one 
	occurrence of $-\land$ or $+\lor$ in $-\psi$, which we will refer to as `the focal occurrence'. 
%
	Let us assume that the focal  occurrence of $-\land$ or $+\lor$ in $-\psi$ 
	is an occurrence of $+\vee$ (the case in which it is an occurrence of $-\land$ is argued similarly).
	Let $+\xi'$ and $+\xi''$ be the two subtrees under the focal occurrence of  $+\lor $. Then $\xi '\lor \xi''$ is a subformula of $\psi$ such that $\xi'$ and $\xi ''$ are positive Skeleton formulas, and $n_{\xi'}$ and $n_{\xi''}$ are strictly smaller than $n_{\psi}$. Let $u$ be a fresh variable which does not occur in $\psi$, and let $\psi'$ be the formula obtained by substituting the occurrence of $\xi'\lor \xi''$ in $\psi$ with $u$.  Then $\psi'$ is a negative Skeleton formula such that $n_{\psi'} $ is strictly smaller than $n_\psi$, and $\psi = \psi' [(\xi '\lor \xi '')/ !u ]$. 
	
	Let $\bigwedge_{i\in I}\psi_i$, $\bigwedge_{j\in J}\psi'_{j}$, 
	$\bigvee_{h\in H}\xi'_{h}$ and $\bigvee_{k\in K}\xi''_{k}$ be the equivalent rewritings of $\psi$, $\psi'$, $\xi'$ and $\xi''$, respectively, resulting from applying Lemma \ref{lemma: reduction to definite} to $\psi$, $\psi'$, $\xi'$ and $\xi''$, respectively. Then, 
	\[\{\psi_i\mid i\in I\} = \{\psi'_{j}[\xi'_{h}/!u ] \mid j \in J \text{ and } h \in H\}\cup \{\psi'_{j}[\xi''_{k}/!u ] \mid j \in J \text{ and } k \in K\}.\]
	Hence, the assumptions can be equivalently reformulated as the following sequents being   derivable in $\mathrm{\cfDLE}$ (resp.~$\mathrm{\cfDDLE}$) for every $j\in J$, $h\in H$, and $k\in K$:
	\[\Pi\vdash\Psi'_{j}[\Xi'_h/!u]\quad\quad\Pi\vdash\Psi'_{j}[\Xi''_k/!u].\]
By prolonging those derivations with consecutive applications of display rules, we obtain derivations in $\mathrm{\cfDLE}$ (resp.~$\mathrm{\cfDDLE}$) of the following sequents, for every $j\in J$, $h\in H$, and $k\in K$:
	 \[\Xi'_{h}\vdash\mathsf{LA}(\psi'_{j})[\Pi/!v] \quad\quad 
		\Xi''_{k}\vdash\mathsf{LA}(\psi'_{j})[\Pi/!v].\] 
	  Hence, by induction hypothesis on $\xi'$ and $\xi''$, the following sequents are derivable in $\mathrm{\cfDLE}$ (resp.~$\mathrm{\cfDDLE}$) for every $j \in J$:
		\[\xi'\vdash \mathsf{LA}(\psi'_{j})[\Pi/!v]\quad\quad
		\xi''\vdash \mathsf{LA}(\psi'_{j})[\Pi/!v].\] 
	Then, by prolonging the derivations of the two sequents above with suitable applications of $(\lor_{L})$, we obtain   derivations in $\mathrm{\cfDLE}$ (resp.~$\mathrm{\cfDDLE}$)\footnote{In the calculus $\mathrm{\cfDDLE}$, we obtain the derivations of the sequents in \eqref{eq:der sequents} by replacing the  applications of the derivable rules $(\lor_{L})$ with the derivations of those applications as shown in Remark \ref{rem: derivable rules}, thereby implementing the general strategy outlined in Remark \ref{remark:distr}, which in this specific case involves the application of structural contraction rule.} of the following sequents  for every $j \in J$:

	\begin{center}
                \begin{tabular}{c}
					\AX$\xi'[\overline{\bsigma}/!\obx, \overline{\rtau}/!\ory]\fCenter \mathsf{LA}(\psi'_{j})[\Pi/!v,\overline{\rtau}/!\ory, \overline{\bsigma}/!\obx]$
					\AX$\xi''[\overline{\bsigma}/!\obx, \overline{\rtau}/!\ory]\fCenter \mathsf{LA}(\psi'_{j})[\Pi/!v,\overline{\rtau}/!\ory, \overline{\bsigma}/!\obx]$
					\RL{\fns$\lor_{L}$}
					\BI$(\xi'\lor\xi'')[\overline{\bsigma}/!\obx, \overline{\rtau}/!\ory]\fCenter \mathsf{LA}(\psi'_{j})[\Pi/!v,\overline{\rtau}/!\ory, \overline{\bsigma}/!\obx]$
					\DP 
					\\
				\end{tabular}
	\end{center}
	By prolonging the derivations above with consecutive applications of display rules we obtain derivations in $\mathrm{\cfDLE}$ (resp.~$\mathrm{\cfDDLE}$) of the following sequents  for every $j \in J$:
$$\Pi\vdash\Psi'_{j}[\textcolor{blue}{(\xi_1\lor\xi_2)}[\overline{\bsigma}/!\obx, \overline{\rtau}/!\ory]/!\textcolor{blue}{v},\overline{\rtau}/!\ory, \overline{\bsigma}/!\obx].$$ 
	By induction hypothesis on $\psi'$, and recalling that $\psi = \psi' [(\xi '\lor \xi '')/ !u ]$, we can conclude that $\Pi \vdash \psi[\overline{\rtau}/!\ory, \overline{\bsigma}/!\obx]$ is derivable, as required.
\end{proof}

\begin{example}The formula $\wdia (p \lor \wdia p)$ is a negative PIA in any (D)LE-language such that  $\wdia \in \mathcal{F}$, and is equivalent to $\wdia p\lor\wdia\wdia p$. Assuming that $\WDIA p\vdash \WBOX \wdia p$ and $\WDIA\WDIA p\vdash \WBOX \wdia p$ are derivable sequents, instantiating the argument in Proposition \ref{prop: the thing needed for the non-definite}, we can derive the sequent
	$\wdia(p \lor \wdia p)\vdash \WBOX \wdia p$ in $\mathrm{\cfDLE}$ as follows:
\begin{center}
\AX$\WDIA p \fCenter\WBOX \wdia p$
\LL{\fns $\WDIA \dashv \BBOX$}
\UI$p \fCenter \BBOX\WBOX \wdia p$
\AX$\WDIA\WDIA p \fCenter\WBOX \wdia p$
\LL{\fns $\WDIA \dashv \BBOX$}
\UI$\WDIA p \fCenter \BBOX\WBOX \wdia p$
\LL{\fns{$\wdia_L$}}
\UI$\wdia p \fCenter\BBOX\WBOX \wdia p$
\LL{\fns{$\lor_L$}}
\BI$p \lor \wdia p \fCenter\BBOX\WBOX \wdia p$
\LL{\fns $\WDIA \dashv \BBOX$}
\UI$\WDIA(p \lor \wdia p)\fCenter \WBOX \wdia p$
\LL{\fns{$\wdia_L$}}
\UI$\wdia(p \lor \wdia p)\fCenter \WBOX \wdia p$
\DP
\end{center}
\end{example}

\section{Syntactic completeness}
\label{sec:syntactic completeness}

In the present section, we fix an arbitrary LE-language $\mathcal{L}_{\mathrm{LE}}$, for which we prove our main result (cf.~Theorem \ref{prop:synt-compl}), via an effective procedure which generates cut-free derivations in pre-normal form (cf.~Definitions \ref{def:canonical form nonDist} and \ref{def:canonical form Dist}) of any analytic inductive $\mathcal{L}_{\mathrm{LE}}$-sequent   in    $\mathrm{\cfDLE}$ (resp.~$\mathrm{\cfDDLE}$) augmented with the analytic structural rule(s) corresponding to the given sequent. In Section \ref{ssec: syntactic completeness special}, we
 will first illustrate some of the main ideas of the proof in the context of a proper subclass of analytic inductive sequents, which we refer to as \emph{quasi-special inductive}. Then in Section \ref{ssec: syntactic completeness general}, we state and prove this result for arbitrary analytic inductive sequents.

\begin{notation} In this section, we will often deal with vectors of formulas $\overline{\gamma}$ and $\overline{\delta}$ such that  each $\gamma$ in  $\overline{\gamma}$ (resp.~$\delta$ in $\overline{\delta}$) is a positive (resp.~negative) PIA formula, and hence, by Lemma \ref{lemma: reduction to definite}, is equivalent to $\bigwedge_{\lambda}\gamma^\lambda$ (resp.~$\bigvee_\mu\delta^\mu$). To avoid overloading notation, we will slightly abuse it and write $\overline{\gamma^\lambda}$ (resp.~$\overline{\delta^\mu}$), understanding that,  for each element of these vectors, each $\lambda$ and $\mu$ range over different sets.
\end{notation}

\subsection{Syntactic completeness for quasi-special inductive sequents}	
\label{ssec: syntactic completeness special}

\begin{definition}\label{def:quasianalyticinductive}
 For every analytic $(\Omega,\varepsilon)$-inductive inequality $s\leq t$, if every $\varepsilon$-critical branch of the signed generation trees $+s$ and $-t$ consists solely of Skeleton nodes, then $s\leq t$ is a \emph{quasi-special inductive inequality}. Such an inequality is {\em definite} if none of its Skeleton nodes is $+\vee $ or $-\wedge$.
\end{definition}
In terms of the convention introduced in Notation \ref{notation: analytic inductive}, quasi special inductive sequents can be represented as $(\varphi\fCenter \psi)[\obp, \orq,\overline{\bgamma}, \overline{\rdelta}]$, i.e.~as those $(\varphi\fCenter \psi)[\oba, \orb,\overline{\bgamma}, \overline{\rdelta}]$ such that each $\alpha$ in $\overline{\alpha}$ and $\beta$ in $\overline{\beta}$ is an atomic proposition. 

\begin{center}
	\begin{tikzpicture}
	\draw (-5,-1.5) -- (-3,1.5) node[above]{\Large$+$} ;
	\draw (-5,-1.5) -- (-1,-1.5) ;
	\draw (-3,1.5) -- (-1,-1.5);
	\draw (-5,0) node{Ske} ;
	\draw[dashed] (-3,1.5) -- (-4,-1.5);
	\draw[dashed] (-3,1.5) -- (-2,-1.5);
	\draw[fill] (-4,-1.5) circle[radius=.1] node[below]{$+p$};
	\draw
	(-2,-1.5) -- (-2.8,-3) -- (-1.2,-3) -- (-2,-1.5);
	\draw (-2,-3.25)node{$\gamma$};
	\draw (-3,-2.25) node{PIA} ;
	\draw (0,1.8) node{$\leq$};
	\draw (5,-1.5) -- (3,1.5) node[above]{\Large$-$} ;
	\draw (5,-1.5) -- (1,-1.5) ;
	\draw (3,1.5) -- (1,-1.5);
	\draw (5,0) node{Ske} ;
	\draw[dashed] (3,1.5) -- (4,-1.5);
	\draw[dashed] (3,1.5) -- (2,-1.5);
	\draw[fill] (2,-1.5) circle[radius=.1] node[below]{$+p$};
	\draw
	(4,-1.5) -- (4.8,-3) -- (3.2,-3) -- (4, -1.5);
	\draw (4,-3.25)node{$\gamma'$};
	\draw (5,-2.25) node{PIA} ;
	\end{tikzpicture}
\end{center}


\begin{example}\label{example: quasi special inductive} 
Let $\mathcal{L}: = \mathcal{L}(\mathcal{F}, \mathcal{G})$, where $\mathcal{F}: = \{\aand,\mand,\wdia\}$ and $\mathcal{G}: = \{\aor,\mor,\wbox\}$. The $\mathcal{L}_\mathrm{LE}$-inequality 	$\Diamond p \le \Box \Diamond p $, known in the modal logic literature as axiom 5,
  is a definite quasi-special inductive inequality, e.g.\ for $<_\Omega \,= \emptyset$ and $\epsilon(p) =1$, as can be seen from the signed generation tree below (see Notation \ref{notation: representations of signed generation trees}): %
		\begin{center}
				\begin{tikzpicture}
				\tikzstyle{level 1}=[level distance=1cm, sibling distance=2.5cm]
				\tikzstyle{level 2}=[level distance=1cm, sibling distance=1.5cm]
				\tikzstyle{level 3}=[level distance=1 cm, sibling distance=1.5cm]
				\node[Ske] at (-2,0) {$\begin{aligned} +\wdia \end{aligned}$}
				child{node[draw]{$+p$}};
              \node at (0,0) {$\le$}; 
				\node[Ske] at (2,0) {$\begin{aligned} -\wbox \end{aligned}$}
				child{node[PIA]{$\begin{aligned} -\wdia \end{aligned}$}
				child{node{$-p$}}};
\node[rotate = +90] at (2.5, -1.5) {$\underbrace{\hspace{1.3cm}}$};
             \node at (2.8,-1.5) {$\rdelta$};
          \node at (-2.7,-1) {$\textcolor{blue}{\alpha_p}\,\{\,$};
				\end{tikzpicture}
		\end{center}
	The $\mathcal{L}_\mathrm{LE}$-inequality  $p\mand (p\mand \Box q) \le q \mor (q \mor \Diamond p)$ is a definite quasi-special inductive inequalities, e.g.\ for $p <_\Omega q $ and $\epsilon(p,q) =(1,\partial)$, as can be seen from the signed generation tree below (cf.~Notation \ref{notation: representations of signed generation trees}):	

\begin{center}
	\begin{tikzpicture}
		\node{
		
			\begin{tikzpicture}
				\tikzstyle{level 1}=[level distance=1cm, sibling distance=2.5cm]
				\tikzstyle{level 2}=[level distance=1cm, sibling distance=1.5cm]
				\tikzstyle{level 3}=[level distance=1 cm, sibling distance=1.5cm]
				\node[Ske] at (-2,0) {$\begin{aligned} +\mand \end{aligned}$}
				child{node[draw]{$+p$}}          
				child{node[Ske]{$\begin{aligned} +\mand \end{aligned}$}
					child{node[draw]{$+p$}}
					child{node[PIA]{$\begin{aligned} +\wbox \end{aligned}$}
						child{node{$+q$}}
					}
				};
				\node at (0,0) {$\le$}; 
				
				\node[Ske] at (2,0) {$\begin{aligned} -\mor \end{aligned}$}
				child{node[draw]{$-q$}} 
				child{node[Ske] {$\begin{aligned} -\mor \end{aligned}$}      
					child{node[draw]{$-q$}}
					child{node[PIA]{$\begin{aligned} -\wdia \end{aligned}$}
						child{node{$-p$}}
					}
				};
				
				\node[rotate = -90] at (-0.5, -2.5) {$\underbrace{\hspace{1.3cm}}$};
				\node at (-0.8,-2.5) {$\bgamma$};
				\node[rotate = -90] at (3.5, -2.5) {$\underbrace{\hspace{1.3cm}}$};
				\node at (3.2,-2.5) {$\rdelta$};
				\node at (-4,-1) {$\textcolor{blue}{\alpha_{p1}}\,\{\,$};
				\node at (-2.3,-2) {$\textcolor{blue}{\alpha_{p2}}\,\{\,$};
				\node at (1.5,-1) {$\}\, \textcolor{red}{\beta_{q1}}$};
				\node at (1.8,-2) {$\textcolor{red}{\beta_{q2}}\,\{\,$};
				
			\end{tikzpicture}
		};
	\end{tikzpicture}
\end{center}		

The $\mathcal{L}_\mathrm{LE}$-inequality 	$\Diamond (p\lor \Diamond p) \le \Box \Diamond p $ is a \emph{non-definite} quasi-special inductive inequality, e.g.\ for $<_\Omega \,= \emptyset$ and $\epsilon(p) =1$, as can be seen from the signed generation tree below (see Notation \ref{notation: representations of signed generation trees}): %
\begin{center}
	\begin{tikzpicture}
		\tikzstyle{level 1}=[level distance=1cm, sibling distance=2.5cm]
		\tikzstyle{level 2}=[level distance=1cm, sibling distance=1.5cm]
		\tikzstyle{level 3}=[level distance=1 cm, sibling distance=1.5cm]
		\node[Ske] at (-2,0) {$\begin{aligned} +\wdia \end{aligned}$}
		child{node[Ske]{$\begin{aligned} +\lor \end{aligned}$}
			child{node[draw]{$+p$}}
			child{node[Ske]{$\begin{aligned} +\wdia \end{aligned}$}
				child{node[draw]{$+p$}}}};
		\node at (0,0) {$\le$}; 
		\node[Ske] at (2,0) {$\begin{aligned} -\wbox \end{aligned}$}
		child{node[PIA]{$\begin{aligned} -\wdia \end{aligned}$}
			child{node{$-p$}}};
		\node[rotate = +90] at (2.5, -1.5) {$\underbrace{\hspace{1.3cm}}$};
		\node at (2.8,-1.5) {$\rdelta$};
		\node at (-3.5,-2) {$\textcolor{blue}{\alpha_{p1}}\,\{\,$};
		\node at (-2,-3.03) {$\textcolor{blue}{\alpha_{p2}}\,\{\,$};
	\end{tikzpicture}
\end{center}

Finally, in the distributive case, the $\mathcal{L}_\mathrm{DLE}$-inequality $p\land \Box q \le q \lor \Diamond p$, is a definite quasi-special inductive inequalities, e.g.\ for $p <_\Omega q $ and $\epsilon(p,q) =(1,\partial)$, as can be seen from the signed generation tree below (cf.~Notation \ref{notation: representations of signed generation trees}):	

\begin{center}
	\begin{tikzpicture}
		\node{
			\begin{tikzpicture}
				\tikzstyle{level 1}=[level distance=1cm, sibling distance=2.5cm]
				\tikzstyle{level 2}=[level distance=1cm, sibling distance=1.5cm]
				\tikzstyle{level 3}=[level distance=1 cm, sibling distance=1.5cm]
				\node[Ske] at (-2,0) {$\begin{aligned} +\aand \end{aligned}$}
				child{node[draw]{$+p$}}
				child{node[PIA]{$\begin{aligned} +\wbox \end{aligned}$}
					child{node{$+q$}}
				}
				;
				\node at (0,0) {$\le$}; 
				
				\node[Ske] at (2,0) {$\begin{aligned} -\aor \end{aligned}$}
				child{node[draw]{$-q$}}
				child{node[PIA]{$\begin{aligned} -\wdia \end{aligned}$}
					child{node{$-p$}}
				}
				;
				\node[rotate = -90] at (2.7, -1.5) {$\underbrace{\hspace{1.3cm}}$};
				\node at (2.4 ,-1.5) {$\rdelta$};
				\node[rotate = -90] at (-1.3, -1.5) {$\underbrace{\hspace{1.3cm}}$};
				\node at (-1.6,-1.5) {$\bgamma$};
				\node at (-3.9,-1) {$\textcolor{blue}{\alpha_p}\,\{\,$};
				\node at (1.4,-1) {$\}\, \textcolor{red}{\beta_q}$};
			\end{tikzpicture}
		};
	\end{tikzpicture}
\end{center}
\end{example}

	
\begin{lemma}
\label{lemma: quasi special inductive corresponding rule}
If $(\varphi\fCenter\psi)[\obp, \orq,\overline{\bgamma}, \overline{\rdelta}]$ is a quasi-special inductive inequality (cf.~Notation \ref{notation: analytic inductive}), then each of its corresponding rules has the following shape: 
\begin{center}
\AXC{$\overline{(Z\vdash \Gamma^\lambda)_\lambda}[\overline{X}/\overline{p}, \overline{Y}/\overline{q}]$}
\AXC{$\overline{(\Delta^\mu\vdash W)_{\mu}}[\overline{X}/\overline{p}, \overline{Y}/\overline{q}]$}
\BIC{$(\Phi_j\vdash \Psi_i)[\overline{X}, \overline{Y}, \overline{Z}, \overline{W}]$}
\DP
\end{center}
where for each $\gamma$ in $\overline{\gamma}$ (resp.~each $\delta$ in $\overline{\delta}$), each $\Gamma^\lambda$ (resp.~$\Delta^\mu$) is the structural counterpart of some conjunct (resp.~disjunct) $\gamma^\lambda$ (resp.~$\delta^\mu$) of the equivalent rewriting of $\gamma$ (resp.~$\delta$) as $\bigwedge_{\lambda}\gamma^\lambda$ (resp.~$\bigvee_\mu\delta^\mu$),  as per Lemma \ref{lemma: reduction to definite},  
with each $\gamma^\lambda$ (resp.~$\delta^\mu$) being a definite positive (resp.~negative) PIA formula, and each $\Phi_j$ (resp.~$\Psi_i$) is the structural counterpart of some disjunct (resp.~conjunct) $\varphi_j$ (resp.~$\psi_i$) of the equivalent rewriting of $\varphi$ (resp.~$\psi$) as $\bigvee_{j\in J}\varphi_j$ (resp.~$\bigwedge_{i\in I}\psi_i$),  as per Lemma \ref{lemma: reduction to definite},  
with each $\varphi_j$ (resp.~$\psi_i$) being a definite positive (resp.~negative) Skeleton formula.
\end{lemma}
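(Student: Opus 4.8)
The plan is to trace the ALBA run that computes the corresponding rule(s) and read off its shape, exploiting the defining feature of quasi-specialness: every $\epsilon$-critical branch is pure Skeleton. First I would reduce to the definite case. By Lemma~\ref{lemma: from inductive to definite sequents} the given quasi-special sequent $(\varphi\vdash\psi)[\obp,\orq,\overline{\bgamma},\overline{\rdelta}]$ is equivalent to the conjunction of the definite quasi-special sequents $(\varphi_j\vdash\psi_i)[\obp,\orq,\overline{\bgamma},\overline{\rdelta}]$, where $\varphi=\bigvee_{j\in J}\varphi_j$ and $\psi=\bigwedge_{i\in I}\psi_i$ are the rewritings of Lemma~\ref{lemma: reduction to definite}; ALBA processes each of these independently and produces exactly one rule per pair $(i,j)$ (cf.\ the remark following Definition~\ref{def:canonical form Dist}). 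Hence it suffices to show that ALBA applied to a single definite quasi-special $\varphi_j\leq\psi_i$ outputs a rule with conclusion $\Phi_j\vdash\Psi_i$ and premises of the stated two forms.

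Next I would describe the run on $\varphi_j\leq\psi_i$. Since $\varphi_j$ is definite positive Skeleton and $\psi_i$ definite negative Skeleton, every internal node on a path from a root to an $\epsilon$-critical leaf is an SLR node, whose interpretation on a perfect $\mathcal{L}$-algebra is a complete operator admitting the adjoints/residuals witnessed by the display postulates of $\mathrm{D.LE}$ (resp.\ $\mathrm{D.DLE}$). Consequently, after first approximation the approximation and residuation steps of ALBA act only on these SLR connectives: they strip the Skeleton off, turning each SLR connective into its structural counterpart in the sense of Notation~\ref{notation: structural counterparts of skeleton and pia}, and leave the critical variables $\overline{p},\overline{q}$ in display position, to be renamed by fresh structural variables $\overline{X},\overline{Y}$. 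The maximal PIA subformulas $\overline{\bgamma},\overline{\rdelta}$, which in the quasi-special case sit exactly at the non-critical leaves of the Skeleton, are treated as parameters: each positive $\gamma$ (resp.\ negative $\delta$) is replaced by a fresh variable $Z$ (resp.\ $W$) together with the side quasi-inequality $Z\leq\gamma$ (resp.\ $\delta\leq W$). Crucially, no residuation is performed inside these PIA subformulas, precisely because none of them lies on a critical branch; this is the simplification that distinguishes quasi-special sequents from general analytic inductive ones.

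Then I would split the side conditions. Invoking Lemma~\ref{lemma: reduction to definite} again to rewrite each parametric $\gamma$ as $\bigwedge_\lambda\gamma^\lambda$ and each $\delta$ as $\bigvee_\mu\delta^\mu$ with definite PIA conjuncts/disjuncts, the side conditions $Z\leq\gamma$ and $\delta\leq W$ become $\{Z\leq\gamma^\lambda\}_\lambda$ and $\{\delta^\mu\leq W\}_\mu$. Each $\gamma^\lambda$ (resp.\ $\delta^\mu$), being definite, has a structural counterpart $\Gamma^\lambda$ (resp.\ $\Delta^\mu$) via Notation~\ref{notation: structural counterparts of skeleton and pia}, and the parametric occurrences of the critical variables inside them are renamed by the same $\overline{X},\overline{Y}$. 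Reading the resulting quasi-inequality in the structural language then yields exactly the displayed rule: premises $\overline{(Z\vdash\Gamma^\lambda)_\lambda}[\overline{X}/\overline{p},\overline{Y}/\overline{q}]$ and $\overline{(\Delta^\mu\vdash W)_\mu}[\overline{X}/\overline{p},\overline{Y}/\overline{q}]$, and conclusion $(\Phi_j\vdash\Psi_i)[\overline{X},\overline{Y},\overline{Z},\overline{W}]$.

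The main obstacle will be to make the informal Skeleton-``peeling'' precise and uniform: one must verify that the approximation/residuation steps on the Skeleton reconstruct the structural terms $\Phi_j$ and $\Psi_i$ under the translation of Notation~\ref{notation: structural counterparts of skeleton and pia}, and that the critical variables are eliminated trivially, their minimal valuations being the fresh structural variables themselves, exactly because no PIA node intervenes on their branches. A careful bookkeeping of signs is also needed, to confirm that \emph{positive} maximal PIA subformulas always generate premises $Z\vdash\Gamma^\lambda$ and \emph{negative} ones premises $\Delta^\mu\vdash W$, independently of whether they occur in $\varphi_j$ or $\psi_i$. Finally, the LE and DLE classifications (Tables~\ref{Join:and:Meet:Friendly:Table} and \ref{Join:and:Meet:Friendly:Table:DLE}) must be treated uniformly, which is routine: the distributive case only adds $\wedge,\vee$ to the SLR/SRR nodes, handled via the translation of Remark~\ref{remark:distr}.
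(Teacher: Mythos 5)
Your overall route is the paper's: reduce to the definite case (Lemma \ref{lemma: from inductive to definite sequents}), apply first approximation, rewrite each parametric PIA formula as a meet/join of definite PIA formulas (Lemma \ref{lemma: reduction to definite}), and eliminate the critical variables by the Ackermann rule. However, there is a genuine gap: you only handle the case where every critical variable has a \emph{single} critical occurrence. Your claim that the critical variables are ``eliminated trivially, their minimal valuations being the fresh structural variables themselves'' fails when some $p$ in $\overline{p}$ (or $q$ in $\overline{q}$) has several critical occurrences, which quasi-specialness certainly permits --- e.g.\ $p\mand (p\mand \Box q) \vdash q \mor (q \mor \Diamond p)$ in Example \ref{example: quasi special inductive}. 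There, first approximation introduces one fresh variable \emph{per critical occurrence}, the Ackermann rule must use $\bigvee_{k}x_k$ as the minimal valuation of $p_i$ (and $\bigwedge_h y_h$ as the maximal valuation of $q_j$), and after substitution the antecedent sequents have the shape \eqref{eq: after Ackermann2}, which involves logical joins and meets and is \emph{not} yet of the shape claimed in the lemma. The missing step is \eqref{eq: after Ackermann3}--\eqref{eq:after splitting}: since each $\gamma^\lambda$ (resp.\ $\delta^\mu$) is positive (resp.\ negative) PIA and agrees with $\epsilon^\partial$, substituting these joins and meets distributes out into meets (resp.\ joins) of single-variable instances, which can then be split into the finitely many premises $z\vdash \gamma^\lambda[x_k/p_i, y_h/q_j]$ and $\delta^\mu[x_k/p_i, y_h/q_j]\vdash w$. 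This is exactly where PIA-ness and $\epsilon^\partial$-uniformity of the parametric subformulas do real work (compare rule $R_3$ in Example \ref{example:computing rules}), and without it the stated rule shape is not established.

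A secondary, less damaging inaccuracy: your narrative has ALBA ``stripping the Skeleton off'' by approximation/residuation steps, leaving the critical variables ``in display position''. In the quasi-special case no adjunction or residuation step is applied at all --- this is precisely what distinguishes this lemma from the general analytic inductive case (Lemma \ref{lemma:  inductive corresponding rule anind}), where $\mathsf{la}/\mathsf{ra}$ computations along critical branches are needed. Instead, the deep first approximation \eqref{eq: first approx} replaces, in one step, each critical occurrence and each maximal PIA subformula by a fresh variable, and the Skeleton is retained intact in the consequent, where it is read off as the structural conclusion $(\Phi_j\vdash\Psi_i)[\overline{X},\overline{Y},\overline{Z},\overline{W}]$ of the rule.
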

\begin{proof}
Let us apply the algorithm ALBA to $(\varphi\fCenter\psi)[\obp, \orq,\overline{\bgamma}, \overline{\rdelta}]$ to compute its corresponding analytic rules. Modulo pre-processing, we can assume w.l.o.g.~that $(\varphi\fCenter\psi)[\obp, \orq,\overline{\bgamma}, \overline{\rdelta}]$  is  definite,\footnote{If $(\varphi\fCenter\psi)[\obp, \orq,\overline{\bgamma}, \overline{\rdelta}]$ is not definite, then by Lemma \ref{lemma: from inductive to definite sequents} it can equivalently be transformed  into the conjunction of {\em definite} quasi-special inductive sequents which we can treat separately as shown in the proof.}  and hence we can proceed with first approximation:
\begin{equation}
\label{eq: first approx}
\forall \overline{xyzw}[(\overline{x\vdash p}\ \&\  \overline{q\vdash y}\ \&\ \overline{z \vdash \gamma}\ \&\ \overline{\delta\vdash w})\Rightarrow (\varphi\vdash\psi)[\obx, \ory,\overline{\bz}, \overline{\rw}]]
\end{equation} Since $\gamma$ (resp.~$\delta$) can be equivalently rewritten as $\bigwedge_{\lambda}\gamma^\lambda$ (resp.~$\bigvee_\mu\delta^\mu$), as per Lemma \ref{lemma: reduction to definite}, the quasi-inequality above can be equivalently rewritten as follows: \begin{equation}
	\label{eq: first approx2}
	\forall \overline{xyzw}\,[(\overline{x\vdash p}\ \&\  \overline{q\vdash y}\ \&\ \overline{(z\vdash \gamma^\lambda)_{\lambda}}\ \&\  \overline{(\delta^\mu\vdash w)_{\mu}})\Rightarrow (\varphi\vdash\psi)[\obx, \ory,\overline{\bz}, \overline{\rw}]].
	\end{equation}
If every $p$ in $\overline{p}$ and $q$ in $\overline{q}$ has  one critical occurrence, then we are in Ackermann-shape and hence we can eliminate the variables $\overline{p}$ and $\overline{q}$ as follows (since by assumption $\overline{\gamma}$ and $\overline{\delta}$ agree with  $\epsilon^\partial$):
\begin{equation}
\label{eq: application Ackermann2}
\forall \overline{xyzw}\,[( \overline{(z\vdash \gamma^\lambda)_{\lambda}} [\obx/\obp, \ory/\orq]\ \&\  \overline{(\delta^\mu\vdash w)_{\mu}}[\obx/\obp, \ory/\orq])\Rightarrow (\varphi\vdash\psi)[\obx, \ory,\overline{\bz}, \overline{\rw}]]
\end{equation}
which yields a rule of the desired shape. If there are multiple critical occurrences of  some $p$ in $\overline{p}$ or $q$ in $\overline{q}$, then the Ackermann-shape looks as in \eqref{eq: first approx2}, but with $\bigvee_{k = 1}^{n_i} x_k\vdash p_i$ and $q_j\vdash \bigwedge_{h = 1}^{m_j} y_h$, where $n_i$ (resp.~$m_j$) is the number of critical occurrences of $p_i$ (resp.~$q_j$). Hence, by applying the Ackermann rule we obtain a quasi-inequality similar to \eqref{eq: application Ackermann2}, except that the sequents in the antecedent have the following shape: 
\begin{equation}
\label{eq: after Ackermann2}
\overline{(z\vdash \gamma^\lambda)_\lambda}\left[\overline{\bigvee_{k = 1}^{n_i} \bx_{\bk}/\bp_{\bi}}, \overline{\bigwedge_{h = 1}^{m_j} \ry_{\rh}/\rrq_{\rj}}\right ]\quad \quad \overline{(\delta^\mu\vdash w)_\mu}\left [\overline{\bigvee_{k = 1}^{n_i} \bx_{\bk}/\bp_{\bi}}, \overline{\bigwedge_{h = 1}^{m_j} \ry_{\rh}/\rrq_{\rj}}\right].\end{equation}

Since by assumption $\varepsilon(p) = 1$ for every $p$ in $\overline{p}$ and $\varepsilon(q) = \partial$ for every $q$ in $\overline{q}$, recalling that $+\gamma^\lambda$ and $-\delta^\mu$ agree with  $\epsilon^\partial$ for each $\gamma^\lambda$ in $\overline{\gamma^\lambda}$ and $\delta^\mu$ in $\overline{\delta^\mu}$, and moreover every $\gamma^\lambda$  in $\overline{\gamma^\lambda}$ (resp.~$\delta^\mu$  in $\overline{\delta^\mu}$) is positive (resp.~negative) PIA, the following semantic equivalences hold for each $\gamma^\lambda$ in $\overline{\gamma^\lambda}$ and $\delta^\mu$ in $\overline{\delta^\mu}$:
\begin{equation}
\label{eq: after Ackermann3}
\gamma^\lambda\left[\overline{\bigvee_{k = 1}^{n_i} \bx_{\bk}/\bp_{\bi}}, \overline{\bigwedge_{h = 1}^{m_j} \ry_{\rh}/\rrq_{\rj}}\right ] = \bigwedge_{h = 1}^{m_j}\bigwedge_{k = 1}^{n_i}\gamma^\lambda\left[\overline{\bx_{\bk}/\bp_{\bi}}, \overline{\ry_{\rh}/\rrq_{\rj}}\right ]\quad\quad
 \delta^\mu\left [\overline{\bigvee_{k = 1}^{n_i} \bx_{\bk}/\bp_{\bi}}, \overline{\bigwedge_{h = 1}^{m_j} \ry_{\rh}/\rrq_{\rj}}\right] = \bigvee_{h = 1}^{m_j}\bigvee_{k = 1}^{n_i}\delta^\mu\left [\overline{\bx_{\bk}/\bp_{\bi}}, \overline{\ry_{\rh}/\rrq_{\rj}}\right ].
 \end{equation}
Hence, for every $\gamma^\lambda$  in $\overline{\gamma^\lambda}$ and $\delta^\mu$  in $\overline{\delta^\mu}$, the corresponding sequents  in \eqref{eq: after Ackermann2} can be equivalently replaced by (at most) $\Sigma_{n, m}(n_i m_j)$ sequents of the form

\begin{equation}
\label{eq:after splitting}
z\vdash \gamma^\lambda\left[\overline{\bx_{\bk}/\bp_{\bi}}, \overline{\ry_{\rh}/\rrq_{\rj}}\right ]\quad \quad \delta^\mu\left [\overline{\bx_{\bk}/\bp_{\bi}}, \overline{\ry_{\rh}/\rrq_{\rj}}\right]\vdash w,
\end{equation}
 yielding again a rule of the desired shape.
\end{proof}

As discussed, the Lemma above applies for both the non-distributive and distributive setting, following Remark \ref{remark:distr}.

\begin{example}
\label{example:computing rules}
Let us illustrate the procedure described in the lemma above by applying it to the sequents discussed in Example \ref{example: quasi special inductive}.
{{ 
\begin{center}
\begin{tabular}{clr}
& ALBA-run computing the structural rule for $\wdia p \vdash \wbox \wdia p$: \\
\hline
    \ &$\wdia p\vdash \wbox \wdia p$ & \ \\
iff \ & $\forall p \forall x \forall w[x\vdash p\ \&\ \wdia p \vdash w \Rightarrow \wdia x \vdash \wbox w]$ & \ Instance of (\ref{eq: first approx}) \\
iff \ & $\forall  x \forall w [ \wdia x \vdash w \Rightarrow \wdia x \vdash \wbox w]$ & \ Instance of (\ref{eq: application Ackermann2}) \\
\end{tabular}
\end{center}
}}
Hence, the analytic rule corresponding to $\wdia p \vdash \wbox \wdia p$ is 
\[
\AX$\WDIA X \fCenter W$
\RL{\fns{$R_1$}}
\UI$\WDIA X \fCenter \WBOX W$
\DP
\]

{{ 
\begin{center}
\begin{tabular}{clr}
& ALBA-run computing the structural rule for $p \land \wbox q \vdash q \lor \wdia p$: \\
\hline
    \ &$p \land \wbox q \vdash q \lor \wdia p $ & \ \\
iff \ & $\forall p\forall q \forall x \forall y \forall z \forall w[(x\vdash p\ \&\ q \vdash y \ \&\ z \vdash \wbox q \ \&\ \wdia p \vdash w) \Rightarrow x \land z \vdash y \lor w]$ & \ Instance of (\ref{eq: first approx}) \\
iff \ & $\forall x \forall y \forall z \forall w[ (z \vdash \wbox y \ \&\ \wdia x \vdash w) \Rightarrow x \land z \vdash y \lor w]$& \ Instance of (\ref{eq: application Ackermann2}) \\
\end{tabular}
\end{center}
}}
Hence, the analytic rule corresponding to $p \land \wbox q \vdash q \lor \wdia p$ is 
\[
\AX$Z \fCenter \WBOX Y$
\AX$\WDIA X \fCenter W$
\RL{\fns$R_2$}
\BI$X \AAND Z \fCenter Y \AOR W$
\DP
\]

{{ 
\begin{center}
\begin{tabular}{clr}
& ALBA-run computing the structural rule for $p\mand (p\mand \wbox q) \vdash q \mor (q \mor \wdia p)$: & \\
\hline
    \ &$p\mand (p\mand \wbox q) \vdash q \mor (q \mor \wdia p)$& \ \\
iff \ & $\forall p\forall q \forall x_1 \forall x_2 \forall y_1 \forall y_2 \forall z \forall w$
\\
& $[(x_1\vdash p\ \&\ x_2\vdash p\ \&\ q \vdash y_1\ \&\ q \vdash y_2 \ \&\ z \vdash \wbox q \ \&\ \wdia p \vdash w) \Rightarrow x_1\mand (x_2 \mand z )\vdash y_1 \mor (y_2\mor w)]$ &  \ (\ref{eq: first approx})\\
iff \ & $\forall p\forall q \forall x_1 \forall x_2 \forall y_1 \forall y_2 \forall z \forall w$
\\
& $[(x_1\vee x_2\vdash p\ \&\ q \vdash y_1\wedge y_2 \ \&\ z \vdash \wbox q \ \&\ \wdia p \vdash w) \Rightarrow x_1\mand (x_2 \mand z )\vdash y_1 \mor (y_2\mor w)]$ &  \ (\ref{eq: after Ackermann2})\\
iff \ & $\forall p\forall q \forall x_1 \forall x_2 \forall y_1 \forall y_2 \forall z \forall w$
\\
& $[(  z \vdash \wbox (y_1\wedge y_2 ) \ \&\ \wdia (x_1\vee x_2) \vdash w) \Rightarrow x_1\mand (x_2 \mand z )\vdash y_1 \mor (y_2\mor w)]$    &(\ref{eq: after Ackermann3})\\
iff \ & $ \forall x_1 \forall x_2 \forall y_1 \forall y_2 \forall z \forall w$
\\
& $[(  z \vdash \wbox y_1\ \&\  z \vdash \wbox y_2  \ \&\ \wdia x_1 \vdash w \ \& \ \wdia x_2 \vdash w) \Rightarrow x_1\mand (x_2 \mand z )\vdash y_1 \mor (y_2\mor w)]$     &(\ref{eq:after splitting})\\
\end{tabular}
\end{center}
}}
Hence, the analytic rule corresponding to $p\mand (p\mand \wbox q) \vdash q \mor (q \mor \wdia p)$ is 
\[
\AXC{$Z\vdash \WBOX Y_1 \ \quad \ 
Z\vdash \WBOX Y_2 \ \quad \ 
\WDIA X_1 \vdash W \ \quad \ 
\WDIA X_2 \vdash W$}
\RL{\fns{$R_3$}}
\UIC{$X_1\MAND (X_2 \MAND Z) \fCenter Y_1 \MOR (Y_2\mor W)$}
\DP
\]
For the last inequality $\Diamond (p\lor \Diamond p) \le \Box \Diamond p $, we first need to preprocess the inequality and obtain two definite inequalities $\Diamond p\le \Box \Diamond p$  and $\Diamond\Diamond p\le \Box \Diamond p$. We now need to compute the rule for the second one (the first was already computed above):
{{ 
		\begin{center}
			\begin{tabular}{clr}
				& ALBA-run computing the structural rule for $\wdia\wdia p \vdash \wbox \wdia p$: \\
				\hline
				\ &$\wdia\wdia p\vdash \wbox \wdia p$ & \ \\
				iff \ & $\forall p \forall x \forall w[x\vdash p\ \&\ \wdia p \vdash w \Rightarrow \wdia\wdia x \vdash \wbox w]$ & \ Instance of (\ref{eq: first approx}) \\
				iff \ & $\forall  x \forall w [ \wdia x \vdash w \Rightarrow \wdia\wdia x \vdash \wbox w]$ & \ Instance of (\ref{eq: application Ackermann2}) \\
			\end{tabular}
		\end{center}
}}
Hence, the analytic rule corresponding to $\wdia\wdia p \vdash \wbox \wdia p$ is 
\[
\AX$\WDIA X \fCenter W$
\RL{\fns{$R_4$}}
\UI$\WDIA\WDIA X \fCenter \WBOX W$
\DP
\]
\end{example}
\begin{thm} If $(\varphi\fCenter\psi)[\obp, \orq,\overline{\bgamma}, \overline{\rdelta}]$ is a quasi-special inductive inequality,  then  a cut-free derivation in pre-normal form exists of $(\varphi\fCenter\psi)[\obp, \orq,\overline{\bgamma}, \overline{\rdelta}]$  in $\mathrm{\cfDLE} + \mathcal{R}$ (resp.~$\mathrm{\cfDDLE} + \mathcal{R}$), where $\mathcal{R}$ denotes the finite set of   analytic structural rules  corresponding to $(\varphi\fCenter\psi)[\obp, \orq,\overline{\bgamma}, \overline{\rdelta}]$ as in Lemma \ref{lemma: quasi special inductive corresponding rule}.
\end{thm}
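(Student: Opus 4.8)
The plan is to build the required cut-free derivation by stacking three layers that match the pre-normal form schema of Definitions \ref{def:canonical form nonDist} and \ref{def:canonical form Dist}: a \emph{PIA part} at the top, assembled from the identity-derivation results already proved; one application per rule in $\mathcal{R}$ of the ALBA-generated structural rule in the middle; and a \emph{Skeleton part} at the bottom that reintroduces the outer Skeleton connectives. I would first dispose of the definite case, where ALBA yields a single rule $R(Ax)$ of the shape recorded in Lemma \ref{lemma: quasi special inductive corresponding rule}, and then reduce the non-definite case to it.

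For the definite case, I would instantiate the structural variables $\overline{X}, \overline{Y}, \overline{Z}, \overline{W}$ of $R(Ax)$ with the operational maximal subtrees $\overline{p}, \overline{q}, \overline{\gamma}, \overline{\delta}$ of $\varphi \vdash \psi$ (this is exactly the instantiation clause (iii) of the pre-normal form). The conclusion of this instance is then $(\Phi \vdash \Psi)[\overline{p}, \overline{q}, \overline{\gamma}, \overline{\delta}]$, where $\Phi$ and $\Psi$ are the structural counterparts (Notation \ref{notation: structural counterparts of skeleton and pia}) of the definite Skeleton formulas $\varphi, \psi$, while its premises are precisely the sequents $\gamma \vdash \Gamma^\lambda$ and $\Delta^\mu \vdash \delta$ ranging over the conjuncts/disjuncts of the definite rewritings (per Lemma \ref{lemma: reduction to definite}) of the $\gamma$'s and $\delta$'s. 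These premises are exactly the sequents produced by Corollary \ref{prop: derivation of identities with weakening}, whose derivations use only identity axioms and PIA-introduction rules (plus weakening and exchange in the distributive case); stacking them above the premises of $R(Ax)$ constitutes the PIA part. The Skeleton part is then obtained by feeding $(\Phi \vdash \Psi)[\overline{p}, \overline{q}, \overline{\gamma}, \overline{\delta}]$ into the definite instance ($n_\varphi = n_\psi = 0$) of Proposition \ref{prop: the thing needed for the non-definite}, which transforms it into $\varphi \vdash \psi$ using only SLR-introduction rules interleaved with display rules (plus contraction in the distributive case). The resulting derivation splits into Skeleton and PIA exactly as required because each auxiliary result applies introduction rules of only one type in each component.

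For the non-definite case, I would invoke Lemma \ref{lemma: from inductive to definite sequents} to rewrite the quasi-special sequent as a conjunction of definite quasi-special sequents $(\varphi_j \vdash \psi_i)$, each contributing one ALBA rule $R_{ji}(Ax)\in\mathcal{R}$. I would derive every $(\Phi_j \vdash \Psi_i)[\overline{p}, \overline{q}, \overline{\gamma}, \overline{\delta}]$ just as in the definite case, and then glue the whole family together with the two clauses of Proposition \ref{prop: the thing needed for the non-definite}: clause 2 collapses, for each fixed $j$, the sequents $\{(\Phi_j \vdash \Psi_i)\}_i$ into $\Phi_j \vdash \psi$, and clause 1 collapses $\{\Phi_j \vdash \psi\}_j$ into $\varphi \vdash \psi$. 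The division of labour (invertible Skeleton rules and contraction below, non-invertible PIA rules and weakening above the $R_{ji}(Ax)$) is inherited verbatim from the ``resp.'' formulations of these results, so the final object is in pre-normal form.

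The main obstacle will be the bookkeeping of the variable instantiations when some critical variable $p_i$ (or $q_j$) has several $\epsilon$-critical occurrences. In that case the premises produced by ALBA carry distinct structural variables $x_{i,k}$ (resp.\ $y_{j,h}$) for the distinct occurrences, which in the target axiom all collapse to the single atom $p_i$ (resp.\ $q_j$); I must verify that instantiating every such variable with the same operational atom turns each ALBA premise obtained after splitting into a sequent of the form $\gamma \vdash \Gamma^\lambda$ or $\Delta^\mu \vdash \delta$ covered by Corollary \ref{prop: derivation of identities with weakening}, and conversely that the conclusion $(\Phi_j \vdash \Psi_i)[\overline{p}, \overline{q}, \overline{\gamma}, \overline{\delta}]$ so obtained is exactly the sequent consumed by the Skeleton part. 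A secondary, purely diagrammatic, check is that all display and structural rules can be placed so as never to cross the boundary between the Skeleton and PIA components, which is what makes the assembled derivation genuinely pre-normal rather than merely cut-free.
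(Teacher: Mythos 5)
Your proposal is correct and follows essentially the same route as the paper's own proof: in the definite case you instantiate the ALBA-generated rule with the operational maximal PIA-subformulas, derive its premises via Corollary \ref{prop: derivation of identities with weakening}, and close the Skeleton part with invertible SLR-introductions and display rules (the base case of Proposition \ref{prop: the thing needed for the non-definite}), while the non-definite case is reduced to the definite one via Lemma \ref{lemma: from inductive to definite sequents} and a double application of the procedure of Proposition \ref{prop: the thing needed for the non-definite}, exactly as in the paper's footnote. The only differences are presentational (you assemble the derivation top-down where the paper reads it bottom-up, and you collapse the $\Psi_i$'s before the $\Phi_j$'s where the paper does the opposite), and your closing concern about multiple critical occurrences is resolved precisely as you suspect: instantiating all the structural variables $x_{i,k}$ (resp.\ $y_{j,h}$) with the single atom $p_i$ (resp.\ $q_j$) turns every post-splitting ALBA premise into a sequent covered by Corollary \ref{prop: derivation of identities with weakening}.
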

\begin{proof} 
 Recall that each $\gamma$ in $\overline{\gamma}$ (resp.~$\delta$ in $\overline{\delta}$) is a positive (resp.~negative) PIA formula. Hence, let $\bigwedge_\lambda\gamma^\lambda$ (resp.~$\bigvee_\mu\delta^\mu$) denote the equivalent rewriting of $\gamma$ (resp.~$\delta$) as conjunction (resp.~disjunction) of definite positive (resp.~negative) PIA formulas, as per Lemma \ref{lemma: reduction to definite}. Let us  assume that $(\varphi\fCenter\psi)[\obp, \orq,\overline{\bgamma}, \overline{\rdelta}]$ is definite,\footnote{\label{footnote:non definite}If $(\varphi\fCenter\psi)[\obp, \orq,\overline{\bgamma}, \overline{\rdelta}]$ is not definite, then, by Lemma \ref{lemma: from inductive to definite sequents}, it can  equivalently be transformed  into the conjunction of {\em definite} quasi-special inductive sequents $(\varphi_i\fCenter\psi_j)[\obp, \orq,\overline{\bgamma}, \overline{\rdelta}]$, where $\varphi$ is equivalent to $\bigvee_i\varphi_i$ and $\psi$ is equivalent to $\bigwedge_j\psi_j$,  which we can treat separately as shown in the proof. Then, a derivation of the original sequent can be obtained by applying the procedure indicated in the proof of Proposition \ref{prop: the thing needed for the non-definite} twice: by applying the procedure once, from derivations of $(\Phi_i\fCenter \Psi_j)[\obp, \orq, \overline{\bgamma}, \overline{\rdelta}]$ for every $i$ and $j$ we obtain derivations of $(\varphi\vdash\Psi_j)[\obp, \orq, \overline{\bgamma}, \overline{\rdelta}]$ for every $j$. Then, by applying the procedure again on these sequents, we obtain the required derivation of $(\varphi\fCenter \psi)[\obp, \orq, \overline{\bgamma}, \overline{\rdelta}]$.} and hence $ \mathcal{R}$ has only one element $\mathrm{R}$, which has the following shape (cf.~Lemma \ref{lemma: quasi special inductive corresponding rule}): 
\begin{center}
\AXC{$\overline{(Z\vdash \Gamma^\lambda)_\lambda}[\overline{X}/\overline{p}, \overline{Y}/\overline{q}]$}
\AXC{$\overline{(\Delta^\mu\vdash W)_{\mu}}[\overline{X}/\overline{p}, \overline{Y}/\overline{q}]$}
\BIC{$(\Phi\vdash \Psi)[\overline{X}, \overline{Y}, \overline{Z}, \overline{W}]$}
\DP
\end{center}
Then, 
modulo application of display rules, we can apply left-introduction (resp.~right-introduction) rules to positive (resp.~negative) SLR-connectives bottom-up, so as to transform all Skeleton connectives into structural connectives:
\begin{equation}
\label{apply invertible}
\AX$(\Phi\fCenter \Psi)[\obp, \orq, \overline{\bgamma}, \overline{\rdelta}]$
\noLine
\UIC{$\vdots$}
\noLine
\UI$(\varphi\fCenter \psi)[\obp, \orq, \overline{\bgamma}, \overline{\rdelta}]$
\DP
\end{equation} 
Notice that $(\Phi\fCenter \Psi)[\obp, \orq, \overline{\bgamma}, \overline{\rdelta}]$ is an instance of the conclusion of $\mathrm{R}$ with $\overline{p}/\overline{X}$, $\overline{q}/\overline{Y}$, $\overline{\gamma}/\overline{Z}$ and $\overline{\delta}/\overline{W}$. Hence, we can apply $\mathrm{R}$ bottom-up and obtain:
\begin{equation}
\label{structural rule}
\AXC{$\overline{(\gamma\vdash \Gamma^\lambda)_\lambda}[\obp/\obp, \orq/\orq]$}
\AXC{$\overline{(\Delta^\mu \vdash \delta)_\mu}[\obp/\obp, \orq/\orq]$}
\BIC{$(\Phi\vdash \Psi)[\obp,\orq, \overline{\bgamma}, \overline{\rdelta}]$}
\DP
\end{equation} 
 By Corollary \ref{prop: derivation of identities with weakening},  the sequents $(\gamma\vdash \Gamma^\lambda)[\obp/\obp, \orq/\orq]$  and $(\Delta^\mu\vdash \delta)[\obp/\obp, \orq/\orq]$ are cut-free derivable in $\mathrm{\cfDLE}$ (resp.~$\mathrm{\cfDDLE}$), with derivations which only contain identity axioms, and applications of right-introduction rules for  negative PIA-connectives, 
and left-introduction rules for positive PIA-connectives 
(and weakening and exchange rules in the case of $\mathrm{\cfDDLE}$). Moreover, as discussed above (cf.~also Proposition \ref{prop: the thing needed for the non-definite}), the rules applied after applying the rules in $\mathcal{R}$ are only display rules, left-introduction rules for positive Skeleton-connectives and right-introduction rules for negative Skeleton-connectives (plus contraction in the case of $\mathrm{\cfDDLE}$). This completes the proof that the cut-free derivation in $\mathrm{\cfDLE} + \mathcal{R}$ (resp.~$\mathrm{\cfDDLE} + \mathcal{R}$) of $(\varphi\fCenter\psi)[\obp, \orq,\overline{\bgamma}, \overline{\rdelta}]$ is in pre-normal form. 
\end{proof}

\begin{example}
\label{example: deriving qsi axioms}
Let us illustrate the procedure described in the proposition above by deriving the sequents in Example \ref{example: quasi special inductive}.

\begin{center}
\begin{tabular}{ccccc}
\cfDLE-derivation of  $\wdia p \vdash \wbox \wdia p$:
 & & & &
\cfDDLE-derivation of $p \aand \wbox q \vdash q \aor \wdia p$: 
\rule[-1.8mm]{0mm}{0mm} \\
\cline{0-1}
\cline{4-5}

\begin{tikzpicture}		
\node at(0,0) {
\AX$p\fCenter p$
\RL{\fns{$\wdia_R$}}
\UI$\WDIA p \fCenter \wdia p$
\RL{\footnotesize$R_1$}
\UI$\WDIA p \fCenter\WBOX \wdia p$
\LL{\fns{$\wdia_L$}}
\UI$\wdia p \fCenter\WBOX \wdia p$
\RL{\fns{$\wbox_R$}}
\UI$\wdia p \fCenter \wbox \wdia p$
\DP
};

\node[rotate = -90] at (-1.3, -0.5) {$\underbrace{\hspace{1.3cm}}$};
\node at (-1.8,-0.5) {(\ref{apply invertible})};
\node[rotate = 90] at (1.5, 0.26) {$\underbrace{\hspace{0.6cm}}$};
\node at (2,0.26) {(\ref{structural rule})};

\end{tikzpicture}

&&\ \ \ \ \ \ \ &&

\begin{tikzpicture}		
\node at(0,0) {

\AX$q\fCenter q$
\LL{\fns{$\wbox_L$}}
\UI$\wbox q \fCenter\WBOX q$
\AX$p\fCenter p$
\RL{\fns{$\wdia_R$}}
\UI$\WDIA p \fCenter \wdia p$
\RL{\footnotesize$R_2$}
\BI$p \AAND \wbox q \fCenter q \AOR \wdia p$
\LL{\fns{$\aand_L$}}
\UI$p \aand \wbox q \fCenter q \AOR \wdia p$
\RL{\fns{$\aor_R$}}
\UI$p \aand \wbox q \fCenter q \aor \wdia p$
\DP
};
\node[rotate = -90] at (-1.75, -0.5) {$\underbrace{\hspace{1.3cm}}$};
   \node at (-2.25,-0.5) {(\ref{apply invertible})};
\node[rotate = 90] at (2.45, 0.26) {$\underbrace{\hspace{0.6cm}}$};
\node at (2.95,0.26) {(\ref{structural rule})};

\end{tikzpicture}

 \\
\end{tabular}
\end{center}

\begin{center}
\begin{tabular}{c}
\cfDLE-derivation of $p \mand (p \mand\wbox q )\fCenter q \mor( q \mor \wdia p)$: 
\rule[-1.8mm]{0mm}{0mm} \\
\hline
\begin{tikzpicture}		
\node at(0,0) {

\AX $q\fCenter q$
\LL{\fns{$\wbox_L$}}
\UI$\wbox q \fCenter \WBOX q$
\AX$q\fCenter q$
\LL{\fns{$\wbox_L$}}
\UI$\wbox q \fCenter\WBOX q$
\AX$p\fCenter p$
\RL{\fns{$\wdia_R$}}
\UI$\WDIA p \fCenter \wdia p$
\AX$p\fCenter p$
\RL{\fns{$\wdia_R$}}
\UI$\WDIA p \fCenter \wdia p$
\RL{\footnotesize$R_3$}
\QuaternaryInfC{$p \MAND (p \MAND \wbox q )\fCenter q \MOR ( q \MOR \wdia p)$}
\UI$p \MAND \wbox q  \fCenter p \MRARR (q \MOR( q \MOR \wdia p))$
\LL{\fns$\mand_L$}
\UI$p \mand \wbox q  \fCenter p \MRARR (q \MOR( q \MOR \wdia p))$
\UI$p \MAND(p \mand\wbox q )\fCenter q \MOR( q \MOR \wdia p)$
\LL{\fns$\mand_L$}
\UI$p \mand (p \mand\wbox q )\fCenter q \MOR( q \MOR \wdia p)$
\UI$q\MDLARR (p \mand (p \mand\wbox q ))  \fCenter  q \MOR \wdia p$
\RL{\fns$\mor_R$}
\UI$q\MDLARR (p \mand (p \mand\wbox q )) \fCenter  q \mor \wdia p$
\UI$p \mand (p \mand\wbox q )\fCenter q \MOR( q \mor \wdia p)$
\RL{\fns$\mor_R$}
\UI$p \mand (p \mand\wbox q )\fCenter q \mor( q \mor \wdia p)$
\DP
};
\node[rotate = -90] at (-2.8, -0.5) {$\underbrace{\hspace{4.5cm}}$};
\node at (-3.3,-0.5) {(\ref{apply invertible})};
\node[rotate = 90] at (5.3, 1.8) {$\underbrace{\hspace{0.6cm}}$};
\node at (5.8,1.8) {(\ref{structural rule})};
\end{tikzpicture}
\end{tabular}
\end{center}

\begin{center}
	\begin{tabular}{c}
		\cfDLE-derivation of $\wdia (p \lor \wdia p)\fCenter \wbox \wdia p$: 
		\rule[-1.8mm]{0mm}{0mm} \\
		\hline
		\begin{tikzpicture}		
			\node at(0,0) {
				
			\AX$p\fCenter p$
			\RL{\fns{$\wdia_R$}}
			\UI$\WDIA p \fCenter \wdia p$
			\RL{\footnotesize$R_1$}
			\UI$\WDIA p \fCenter\WBOX \wdia p$
			\LL{\fns $\WDIA \dashv \BBOX$}
			\UI$p \fCenter \BBOX\WBOX \wdia p$
			\AX$p\fCenter p$
			\RL{\fns{$\wdia_R$}}
			\UI$\WDIA p \fCenter \wdia p$
			\RL{\footnotesize$R_4$}
			\UI$\WDIA\WDIA p \fCenter\WBOX \wdia p$
			\LL{\fns $\WDIA \dashv \BBOX$}
			\UI$\WDIA p \fCenter \BBOX\WBOX \wdia p$
			\LL{\fns{$\wdia_L$}}
			\UI$\wdia p \fCenter\BBOX\WBOX \wdia p$
			\LL{\fns{$\lor_L$}}
			\BI$p \lor \wdia p \fCenter\BBOX\WBOX \wdia p$
			\LL{\fns $\WDIA \dashv \BBOX$}
			\UI$\WDIA(p \lor \wdia p)\fCenter \WBOX \wdia p$
			\LL{\fns{$\wdia_L$}}
			\UI$\wdia(p \lor \wdia p)\fCenter \WBOX \wdia p$
			\RL{\fns{$\wbox_R$}}
			\UI$\wdia(p \lor \wdia p)\fCenter \wbox \wdia p$
			\DP
			};
			\node[rotate = -90] at (-3.3, -0.8) {$\underbrace{\hspace{2.9cm}}$};
			\node at (-4.4,-0.8) {Footnote (\ref{footnote:non definite})};
			\node[rotate = 90] at (5, -0.5) {$\underbrace{\hspace{3.58cm}}$};
			\node at (6.1, -0.5) {Footnote (\ref{footnote:non definite})};
			\node[rotate = 90] at (3.9, 1.3) {$\underbrace{\hspace{0.6cm}}$};
			\node at (4.4,1.3) {(\ref{structural rule})};
			\node[rotate = -90] at (-2.2, 0.7) {$\underbrace{\hspace{0.6cm}}$};
			\node at (-2.7,0.7) {(\ref{structural rule})};
		\end{tikzpicture}
	\end{tabular}
\end{center}
\end{example}

\subsection{Syntactic completeness for analytic inductive sequents}
\label{ssec: syntactic completeness general}




\begin{lemma}
\label{lemma:  inductive corresponding rule anind}
If $(\varphi\fCenter\psi)[\oba, \orb,\overline{\bgamma}, \overline{\rdelta}]$ is an analytic $(\Omega, \varepsilon)$-inductive sequent, then  each of its corresponding rules has the following shape:
\begin{equation}\label{eq:structural rule (definite) analytic-inductive}
\AXC{$\overline{(Z\vdash \Gamma^\lambda)_\lambda}[\overline{\mathsf{MV}(p)}/\overline{p}, \overline{\mathsf{MV}(q)}/\overline{q}]$}
\AXC{$\overline{(\Delta^\mu\vdash W)_\mu}[\overline{\mathsf{MV}(p)}/\overline{p}, \overline{\mathsf{MV}(q)}/\overline{q}]$}
\BIC{$(\Phi_j\vdash \Psi_i)[\overline{X}, \overline{Y}, \overline{Z}, \overline{W}]$}
\DP
\end{equation}
where $\Gamma^\lambda$, $\Delta^\mu$, $\Phi_j$ and $\Psi_i$ are as in Lemma \ref{lemma: quasi special inductive corresponding rule}, and  $\mathsf{MV}(p)$ and $\mathsf{MV}(q)$ denote the structural counterparts of the components of the minimal and maximal valuations $\mathsf{mv}(p)\in \mathsf{Mv}(p)$ and $\mathsf{mv}(q)\in \mathsf{Mv}(q)$ defined in the proof below. 
\end{lemma}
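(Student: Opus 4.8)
The plan is to run the algorithm ALBA on $(\varphi\fCenter\psi)[\oba, \orb,\overline{\bgamma}, \overline{\rdelta}]$ following the very same pattern as in the proof of Lemma \ref{lemma: quasi special inductive corresponding rule}, the only genuinely new ingredient being that the $\varepsilon$-critical branches are now \emph{good} rather than purely Skeleton, so that the maximal PIA-subformulas $\overline{\alpha}$ and $\overline{\beta}$ are no longer atomic and must be processed by residuation before the Ackermann step. Throughout I would invoke the soundness of the ALBA rules on perfect $\mathcal{L}_{\mathrm{LE}}$-algebras (which are $\mathcal{L}^\ast_{\mathrm{LE}}$-algebras), so that every rewriting below is an honest biconditional.

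First I would reduce to the definite case: by Lemma \ref{lemma: from inductive to definite sequents} the Skeleton splits, so it suffices to treat each $(\varphi_j\fCenter\psi_i)[\oba, \orb,\overline{\bgamma}, \overline{\rdelta}]$ separately, and by Lemma \ref{lemma: reduction to definite} I rewrite each positive PIA $\alpha$ (resp.~$\gamma$) as $\bigwedge_\lambda\alpha^\lambda$ (resp.~$\bigwedge_\lambda\gamma^\lambda$) and each negative PIA $\beta$ (resp.~$\delta$) as $\bigvee_\mu\beta^\mu$ (resp.~$\bigvee_\mu\delta^\mu$), with all conjuncts and disjuncts definite. First approximation then yields a quasi-inequality whose antecedent collects the side conditions $\overline{x\vdash\alpha}$, $\overline{\beta\vdash y}$, $\overline{z\vdash\gamma}$, $\overline{\delta\vdash w}$ and whose consequent is $(\varphi\vdash\psi)[\obx,\ory,\overline{\bz},\overline{\rw}]$, exactly as in the quasi-special case but with the critical atoms $p,q$ replaced by the PIA-formulas $\alpha,\beta$. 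Splitting the meets and joins turns $\overline{x\vdash\alpha}$ into the family $\overline{(x\vdash\alpha^\lambda)_\lambda}$ and $\overline{\beta\vdash y}$ into $\overline{(\beta^\mu\vdash y)_\mu}$.

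The heart of the argument is the residuation step that surfaces each critical variable. For a definite positive PIA $\alpha^\lambda$ whose signed tree carries the pivotal variable $p$ ($\varepsilon(p)=1$), the adjunction/residuation rules give $x\vdash\alpha^\lambda(p,\oz)\iff\mathsf{la}(\alpha^\lambda)(x,\oz)\vdash p$ (cf.~Definition \ref{def: RA and LA}); dually, for a definite negative PIA $\beta^\mu$ carrying pivotal $q$ ($\varepsilon(q)=\partial$) one has $\beta^\mu(q,\oz)\vdash y\iff q\vdash\mathsf{ra}(\beta^\mu)(y,\oz)$. I would perform these residuations together with the ensuing Ackermann eliminations respecting the dependency order $\Omega$, exactly as in the standard ALBA reduction for inductive inequalities \cite{conradie2019algorithmic}; here $(\Omega,\varepsilon)$-inductivity is indispensable, since it guarantees that the non-pivotal children of the SRR- and binary SRA-nodes occurring along each critical branch of $\alpha^\lambda,\beta^\mu$ contain only variables strictly $\Omega$-below the pivot, whose valuations are therefore already available and substituted into the parametric slots $\oz$ when the pivot is processed. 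Collecting, over all critical occurrences of a given $p$, the lower bounds so produced defines its \emph{minimal valuation} $\mathsf{mv}(p):=\bigvee_k\mathsf{la}(\alpha^{\lambda_k})(x_k,\oz)$, and dually the maximal valuation $\mathsf{mv}(q):=\bigwedge_k\mathsf{ra}(\beta^{\mu_k})(y_k,\oz)$; the set $\mathsf{Mv}(p)$ (resp.~$\mathsf{Mv}(q)$) records the admissible valuations arising from the different choices of definite conjuncts $\alpha^\lambda$ (resp.~disjuncts $\beta^\mu$), each choice producing one of the corresponding rules. The Ackermann lemma then eliminates all of $\overline{p},\overline{q}$, substituting $\mathsf{mv}(p),\mathsf{mv}(q)$ into the surviving conditions $\overline{z\vdash\gamma}$ and $\overline{\delta\vdash w}$, in which $\overline{p},\overline{q}$ occur only in $\varepsilon^\partial$-position.

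Finally, as in the quasi-special proof, when a variable has several critical occurrences the substituted join (resp.~meet) distributes coordinatewise through the $\varepsilon^\partial$-agreeing formulas $\gamma^\lambda,\delta^\mu$, so that the premises split into single-occurrence sequents of the form $z\vdash\gamma^\lambda[\overline{\mathsf{mv}(p)}/\overline{p},\overline{\mathsf{mv}(q)}/\overline{q}]$ and $\delta^\mu[\overline{\mathsf{mv}(p)}/\overline{p},\overline{\mathsf{mv}(q)}/\overline{q}]\vdash w$. Passing to structural counterparts (Notation \ref{notation: structural counterparts of skeleton and pia}), with $\Gamma^\lambda,\Delta^\mu,\Phi_j,\Psi_i$ as in Lemma \ref{lemma: quasi special inductive corresponding rule} and with $\overline{\mathsf{MV}(p)},\overline{\mathsf{MV}(q)}$ the structural counterparts of the components of $\mathsf{mv}(p),\mathsf{mv}(q)$, delivers precisely the displayed rule \eqref{eq:structural rule (definite) analytic-inductive}. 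I expect the main obstacle to be the bookkeeping of the order-respecting elimination: one must verify that after surfacing the pivot of each $\alpha^\lambda$ (resp.~$\beta^\mu$) and filling its parametric slots with the valuations of strictly $\Omega$-lower variables, the whole quasi-inequality is genuinely in Ackermann shape, i.e.~every residual occurrence of the variable being eliminated has the uniform, correct polarity. This is exactly the point where the good-branch structure of the PIA parts, rather than mere Sahlqvist-ness, is what makes the $\mathsf{la}$/$\mathsf{ra}$ residuals combine into well-defined minimal and maximal valuations.
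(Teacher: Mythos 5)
Your proposal follows the paper's own proof route step for step: reduction to the definite case via Lemmas \ref{lemma: from inductive to definite sequents} and \ref{lemma: reduction to definite}, first approximation, splitting, surfacing of the critical variables by the $\mathsf{la}$/$\mathsf{ra}$ residuation of Definition \ref{def: RA and LA}, recursive (along $<_\Omega$) construction of the minimal/maximal valuations, Ackermann elimination in dependency order, and the final distribution-and-splitting step justified by the $\epsilon^\partial$-uniformity and PIA shape of $\overline{\gamma}, \overline{\delta}$. So the approach is the same, and the overall architecture is sound.

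There is, however, one concrete imprecision which, taken literally, makes your construction fail on some analytic inductive sequents: you pair the two dichotomies \emph{positive vs.\ negative maximal PIA subformula} and \emph{$\varepsilon(p)=1$ vs.\ $\varepsilon(q)=\partial$ pivot} as if they coincided, residuating only $x\vdash\alpha^\lambda(p,\oz)$ into $\mathsf{la}(\alpha^\lambda)(x,\oz)\vdash p$ and $\beta^\mu(q,\oz)\vdash y$ into $q\vdash\mathsf{ra}(\beta^\mu)(y,\oz)$, and accordingly defining $\mathsf{mv}(p)$ as a join of $\mathsf{la}$-terms only and $\mathsf{mv}(q)$ as a meet of $\mathsf{ra}$-terms only. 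These dichotomies are independent. A critical occurrence of a $1$-variable $p$ may lie inside a maximal \emph{negative} PIA subformula $\beta_p$ (e.g.\ $\beta_p = \lhd p$ with $\lhd\in\mathcal{F}$ unary of order type $\partial$, or $\beta_p = f(p,r)$ with $\varepsilon_f = (\partial,1)$): then the first-approximation constraint $\beta_p\vdash y$ residuates to the \emph{lower} bound $\mathsf{ra}(\beta_p)[y/u,\ldots]\vdash p$; dually, a critical occurrence of a $\partial$-variable $q$ may lie inside a \emph{positive} PIA subformula $\alpha_q$, contributing the upper bound $q\vdash \mathsf{la}(\alpha_q)[x/u,\ldots]$. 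This is why the paper's $\mathsf{Mv}(p)$ collects both $\mathsf{la}(\alpha_p)$- and $\mathsf{ra}(\beta_p)$-terms, and $\mathsf{Mv}(q)$ collects both $\mathsf{la}(\alpha_q)$- and $\mathsf{ra}(\beta_q)$-terms. Under your definition, a critical occurrence of $p$ sitting inside some $\beta$ is never residuated, so that occurrence stays non-uniform in the antecedent, the quasi-inequality never reaches Ackermann shape, and the elimination stalls. Since the lemma explicitly delegates the definition of $\mathsf{Mv}(p)$ and $\mathsf{Mv}(q)$ to the proof, this must be repaired; the fix is exactly the ``dually'' you gesture at, carried out over all four combinations of PIA polarity and pivot order-type, with all resulting bounds collected into the valuations.
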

\begin{proof}
Let us apply the algorithm ALBA to  $(\varphi\fCenter\psi)[\oba, \orb,\overline{\bgamma}, \overline{\rdelta}]$ to compute its corresponding analytic rules. Modulo pre-processing, we can assume w.l.o.g.~that $(\varphi\fCenter\psi)[\oba, \orb,\overline{\bgamma}, \overline{\rdelta}]$  is  definite,\footnote{If $(\varphi\fCenter\psi)[\oba, \orb,\overline{\bgamma}, \overline{\rdelta}]$ is not definite, then by Lemma \ref{lemma: from inductive to definite sequents} it can equivalently be transformed  into the conjunction of {\em definite}  analytic inductive sequents  which we can treat separately as shown in the proof.}  and hence we can proceed with first approximation,
which yields the following quasi-inequality: 
\begin{equation}
\label{eq: first approx anind}
\forall \ox\oy\overline{z}\overline{w}[(\overline{x\vdash \alpha}\ \&\    \overline{\beta\vdash y}\ \&\ \overline{z\vdash \gamma}\ \&\ \overline{\delta\vdash w})\Rightarrow (\varphi\vdash\psi)[\obx, \ory,\overline{\bz}, \overline{\rw}]],
\end{equation}
Modulo distribution and splitting (cf.~Lemma \ref{lemma: reduction to definite}), the quasi-inequality above can be equivalently rewritten as follows :
\begin{equation}
\label{eq: first approx anind definite}
\forall \ox\oy\overline{z}\overline{w}[(\overline{x\vdash \alpha_p}\ \&\  \overline{x\vdash \alpha_q}\ \&\  \overline{\beta_p\vdash y}\ \&\ \overline{\beta_q\vdash y}\ \&\ \overline{z\vdash \gamma}\ \&\ \overline{\delta\vdash w})\Rightarrow (\varphi\vdash\psi)[\obx, \ory,\overline{\bz}, \overline{\rw}]],
\end{equation}
where each $\alpha_p$ and $\alpha_q$ (resp.~$\beta_p$ and $\beta_q$)  is {\em definite} positive (resp.~negative) PIA and contains a unique $\varepsilon$-critical propositional variable occurrence, which we indicate in its subscript.
By applying adjunction and residuation ALBA-rules on all definite PIA-formulas $\alpha_p$, $\alpha_q$, $\beta_p$ and $\beta_q$ using each $\varepsilon$-critical propositional variable occurrence as the pivotal variable, the antecedent of the quasi-inequality above can be equivalently written as follows:  
\begin{equation}
\label{eq: adj anind}
\begin{split}
\overline{\mathsf{la}(\alpha_p)[\bx/\bu, \obp,\orq]\vdash p}\ &\&\ \overline{\mathsf{ra}(\beta_p)[\ry/\ru, \obp,\orq]\vdash p}\ \&\ \\ \overline{q\vdash \mathsf{la}(\alpha_q)[\bx/\bu, \obp,\orq]}\ &\&\ \overline{q\vdash \mathsf{ra}(\beta_q)[\ry/\ru, \obp,\orq]}\ \&\ \\\overline{z\vdash \gamma}\ &\&\ \overline{\delta\vdash w}
\end{split}
\end{equation}
Since each $\gamma$ (resp.~$\delta$) is a positive (resp.~negative) PIA formula, by Lemma \ref{lemma: reduction to definite}  it is equivalent to $\bigwedge_{\lambda}\gamma^\lambda$ (resp.~$\bigvee_\mu\delta^\mu$), where each $\gamma^\lambda$ (resp.~$\delta^\mu$) is \emph{definite} positive (resp.~negative) PIA. Therefore \eqref{eq: adj anind} can be equivalently rewritten as follows:
\begin{equation}
\label{eq: adj anind2}
\begin{split}
\overline{\mathsf{la}(\alpha_p)[\bx/\bu, \obp,\orq]\vdash p}\ &\&\ \overline{\mathsf{ra}(\beta_p)[\ry/\ru, \obp,\orq]\vdash p}\ \&\ \\ \overline{q\vdash \mathsf{la}(\alpha_q)[\bx/\bu, \obp,\orq]}\ &\&\ \overline{q\vdash \mathsf{ra}(\beta_q)[\ry/\ru, \obp,\orq]}\ \&\ \\ \overline{(z\vdash \gamma^\lambda)_{\lambda}}\ &\&\ \overline{(\delta^\mu\vdash w)_{\mu}}
\end{split}
\end{equation}
Notice that the `parametric' (i.e.~non-critical) variables in $\overline{p}$ and $\overline{q}$ actually occurring in each formula $\mathsf{la}(\alpha_p)[x/u, \overline{p}, \overline{q}]$, $\mathsf{ra}(\beta_p)[y/u, \overline{p}, \overline{q}]$, $\mathsf{la}(\alpha_q)[x/u, \overline{p}, \overline{q}]$, and $\mathsf{ra}(\beta_q)[y/u, \overline{p}, \overline{q}]$ are those that are strictly $<_\Omega$-smaller than the (critical and pivotal) variable indicated in the subscript of the given PIA-formula. After applying adjunction and residuation as indicated above, the quasi-inequality \eqref{eq: first approx anind definite} is in Ackermann shape relative to the $<_\Omega$-minimal variables.

For every $p\in\overline{p}$ and $q\in\overline{q}$, let us define the sets $\mathsf{Mv}(p)$ and $\mathsf{Mv}(q)$ by recursion on $<_\Omega$ as follows: 
\begin{itemize}
	\item
	$\mathsf{Mv}(p):=\{\mathsf{la}(\alpha_p)[x_k/u,\overline{\mathsf{mv}(p)}/\overline{p},\overline{\mathsf{mv}(q)}/\overline{q}], \mathsf{ra}(\beta_p)[y_h/u,\overline{\mathsf{mv}(p)}/\overline{p},\overline{\mathsf{mv}(q)}/\overline{q}]\mid 1\leq k\leq n_{i_1}, 1\leq h\leq n_{i_2}, \overline{\mathsf{mv}(p)}\in\overline{\mathsf{Mv}(p)},\overline{\mathsf{mv}(q)}\in\overline{\mathsf{Mv}(q)}  \}$
	\item $\mathsf{Mv}(q):=\{\mathsf{la}(\alpha_q)[x_h/u,\overline{\mathsf{mv}(p)}/\overline{p},\overline{\mathsf{mv}(q)}/\overline{q}], \mathsf{ra}(\beta_q)[y_k/u,\overline{\mathsf{mv}(p)}/\overline{p},\overline{\mathsf{mv}(q)}/\overline{q})\mid 1\leq h\leq m_{j_1}, 1\leq k\leq m_{j_2}, \overline{\mathsf{mv}(p)}\in\overline{\mathsf{Mv}(p)},\overline{\mathsf{mv}(q)}\in\overline{\mathsf{Mv}(q)}  \}$
\end{itemize}
where,  $n_{i_1}$ (resp.~$n_{i_2}$) is the number of occurrences of $p$ in $\alpha$s (resp.~in $\beta$s) for every $p\in\overline{p}$, and $m_{j_1}$ (resp.~$m_{j_2}$) is the number of occurrences of $q$ in $\alpha$s (resp.~in $\beta$s) for every $q\in\overline{q}$. 
By induction on $<_\Omega$, we can apply the Ackermann rule exhaustively so as to eliminate all variables $p$  and  $q$.  Then the antecedent of the quasi-inequality has the following form: 
 
\begin{equation}
\label{eq: after Ackermann anind}
\overline{(z\vdash \gamma^\lambda)_{\lambda}}\left[\overline{\bigvee\mathsf{Mv}(p)}/\overline{p}, \overline{\bigwedge\mathsf{Mv}(q)}/\overline{q}\right]\quad \quad \overline{(\delta^\mu\vdash w)_\mu}\left [\overline{\bigvee\mathsf{Mv}(p)}/\overline{p}, \overline{\bigwedge\mathsf{Mv}(q)}/\overline{q}\right].\end{equation}
Since by assumption $\varepsilon(p) = 1$ for every $p$ in $\overline{p}$ and $\varepsilon(q) = \partial$ for every $q$ in $\overline{q}$, recalling that  $\overline{\gamma^\lambda}$ and $\overline{\delta^\mu}$ agree with  $\epsilon^\partial$, and moreover every $\gamma^\lambda$  in $\overline{\gamma^\lambda}$ (resp.~$\delta^\mu$  in $\overline{\delta^\mu}$) is positive (resp.~negative) PIA, the following semantic equivalences hold for each $\gamma^\lambda$ in $\overline{\gamma^\lambda}$ and $\delta^\mu$ in $\overline{\delta^\mu}$:
\[\gamma^\lambda\left[\overline{\bigvee\mathsf{Mv}(p)}/\overline{p}, \overline{\bigwedge\mathsf{Mv}(q)}/\overline{q}\right] = \bigwedge\gamma^\lambda\left[\overline{\mathsf{mv}(p)}/\overline{p}, \overline{\mathsf{mv}(q)}/\overline{q}\right]\]
\[ \delta^\mu\left[\overline{\bigvee\mathsf{Mv}(p)}/\overline{p}, \overline{\bigwedge\mathsf{Mv}(q)}/\overline{q}\right] = \bigvee\delta^\mu\left[\overline{\mathsf{mv}(p)}/\overline{p}, \overline{\mathsf{mv}(q)}/\overline{q}\right].\]
Hence for every $\gamma^\lambda$  in $\overline{\gamma^\lambda}$ and $\delta^\mu$  in $\overline{\delta^\mu}$, the corresponding sequents  in \eqref{eq: after Ackermann anind} can be equivalently replaced by (at most) $\Sigma_{n, m}(n_i m_j)$ sequents of the form

\begin{equation}\label{last eq}
z\vdash \gamma^\lambda\left[\overline{\mathsf{mv}(p)}/\overline{p}, \overline{\mathsf{mv}(q)}/\overline{q}\right]\quad \quad \delta^\mu\left[\overline{\mathsf{mv}(p)}/\overline{p}, \overline{\mathsf{mv}(q)}/\overline{q}\right]\vdash w,
\end{equation}
 yielding a rule of the desired shape.
\end{proof}

\begin{example}
\label{example:computing rules anint}
Let us illustrate the procedure described in the lemma above by applying it to the sequents discussed in Example \ref{example:inductive and analytic inductive}.
{{ 
\begin{center}
\begin{tabular}{clr}
& ALBA-run computing the structural rule for $\wdia \wbox p \vdash \wbox \wdia p$: \\
\hline
    \ &$\wdia \wbox p\vdash \wbox \wdia p$ & \ \\
iff \ & $\forall p \forall x \forall w[x\vdash \wbox p\ \&\ \wdia p \vdash w \Rightarrow \wdia x \vdash \wbox w]$ & \ Instance of (\ref{eq: first approx anind}) \\
iff \ & $\forall p \forall x \forall w[\bdia x\vdash p\ \&\ \wdia p \vdash w \Rightarrow \wdia x \vdash \wbox w]$   & \ Instance of (\ref{eq: adj anind})  \\
iff \ & $ \forall x \forall w[\wdia \bdia x \vdash w \Rightarrow \wdia x \vdash \wbox w]$ &  Instance of (\ref{last eq})\\
\end{tabular}
\end{center}
}}
Then the analytic rule corresponding to  $\wdia \wbox p \vdash \wbox \wdia p$ is:
\[
\AX$\WDIA \BDIA X \fCenter W$
\RL{\fns$R_4$}
\UI$\WDIA X \fCenter \WBOX W$
\DP
\]
{{ 
\begin{center}
\begin{tabular}{clr}
& ALBA-run computing the structural rule for $p \ararr (q \ararr r) \le ((p\ararr q) \ararr (\wbox p \ararr r))\mor \Diamond r$: \\
\hline
    \ &$p \ararr (q \ararr r) \le ((p\ararr q) \ararr (\wbox p \ararr r))\mor \Diamond r$ & \ \\
iff \ & $\forall p \forall q \forall r \forall x_1 \forall x_2 \forall y_1 \forall y_2 \forall z$
\\
&$[x_1\vdash \wbox p \ \& \ x_2 \vdash p \ararr q \ \&\ \wdia r \vdash y_1\ \&\ r \vdash y_2 \ \&\ z \vdash p \ararr (q \ararr r) \Rightarrow z \vdash  (x_2\ararr(x_1 \ararr y_2))\mor y_1  ]$ &(\ref{eq: first approx anind}) \\
iff \ & $\forall p \forall q \forall r \forall x_1 \forall x_2 \forall y_1 \forall y_2 \forall z$
\\
&$[\bdia x_1\vdash  p \ \&p\ \aand  x_2 \vdash  q \ \&\  r \vdash \bbox y_1\ \&\ r \vdash y_2 \ \&\ z \vdash p \ararr (q \ararr r) \Rightarrow z \vdash( x_2\ararr(x_1 \ararr y_2))\mor y_1 ]$ &  (\ref{eq: adj anind}) \\
iff \ & $ \forall q \forall r \forall x_1 \forall x_2 \forall y_1 \forall y_2 \forall z$
\\
&$[\bdia x_1\aand  x_2 \vdash  q \ \&\  r \vdash \bbox y_1\ \&\ r \vdash y_2 \ \&\ z \vdash \bdia x_1 \ararr (q \ararr r) \Rightarrow z \vdash( x_2\ararr(x_1 \ararr y_2))\mor y_1 ]$ &   \\
iff \ & $\forall r \forall x_1 \forall x_2 \forall y_1 \forall y_2 \forall z$
\\
&$[ r \vdash \bbox y_1\ \&\ r \vdash y_2 \ \&\ z \vdash \bdia x_1 \ararr (\bdia x_1\aand  x_2  \ararr r) \Rightarrow z \vdash  x_2\ararr(x_1 \ararr y_2))\mor y_1 ]$ &  \\
iff \ & $\forall x_1 \forall x_2 \forall y_1 \forall y_2 \forall z$
\\
&$[  z \vdash \bdia x_1 \ararr (\bdia x_1\aand  x_2  \ararr \bbox y_1\aor y_2) \Rightarrow  z \vdash (x_2\ararr(x_1 \ararr y_2))\mor y_1]$ &  (\ref{eq: after Ackermann anind}) \\
iff \ & $\forall x_1 \forall x_2 \forall y_1 \forall y_2 \forall z$
\\
&$[  z \vdash \bdia x_1 \ararr (\bdia x_1\aand  x_2  \ararr \bbox y_1)\ \& \ z \vdash \bdia x_1 \ararr (\bdia x_1\aand  x_2  \ararr y_2)\Rightarrow  z \vdash (x_2\ararr(x_1 \ararr y_2))\mor y_1 ]$ &   (\ref{last eq})\\
\end{tabular}
\end{center}
}}
Then the analytic rule corresponding to  $p \ararr (q \ararr r) \le ((p\ararr q) \ararr (\wbox p \ararr r))\mor \Diamond r$ is:
\[
\AX$Z \fCenter \BDIA X_1 \ARARR (\BDIA X_1\AAND  X_2  \ARARR \BBOX Y_1)$
\AX$Z \fCenter \BDIA X_1 \ARARR (\BDIA X_1\AAND  X_2  \ARARR Y_2)$
\RL{\fns$R_5$}
\BI$Z \fCenter( X_2\ARARR(X_1 \ARARR Y_2))\MOR Y_1$
\DP
\]
{{ 
\begin{center}
\begin{tabular}{clr}
& ALBA-run computing the structural rule for $\wdia \wbox ( p \aand q) \vdash \wbox \wdia p \aor  \wbox \wdia q$: \\
\hline
    \ &$\wdia \wbox ( p \aand q) \vdash \wbox \wdia p \aor  \wbox \wdia q$ & \ \\
iff \ & $\forall p \forall x \forall w[ x \vdash \wbox ( p \aand q) \ \& \  \wdia p \vdash y \ \& \ \wdia q \vdash z
\Rightarrow \wdia x  \vdash \wbox y \aor  \wbox  z]$ & \ Instance of (\ref{eq: first approx anind}) \\

iff \ & $\forall p \forall x \forall w[ x \vdash \wbox ( p \aand q) \ \& \   p \vdash \bbox y \ \& \  q \vdash \bbox z \Rightarrow \wdia x  \vdash \wbox y \aor  \wbox  z]$
 & \ Instance of (\ref{eq: adj anind})  \\

iff \ & $\forall p \forall x \forall w[ x \vdash \wbox (  \bbox y  \aand  \bbox z) \Rightarrow \wdia x  \vdash \wbox y \aor  \wbox  z]$
 & \ Instance of (\ref{eq: after Ackermann anind})  \\

iff \ & $\forall p \forall x \forall w[ x \vdash \wbox \bbox y  \ \&\ x \vdash \wbox \bbox z \Rightarrow \wdia x  \vdash \wbox y \aor  \wbox  z]$ &  Instance of (\ref{last eq})\\
\end{tabular}
\end{center}
}}
Then the analytic rule corresponding to $\wdia \wbox ( p \aand q) \vdash \wbox \wdia p \aor  \wbox \wdia q$ is:
\[
\AX$X \fCenter \WBOX \BBOX Y$
\AX$X \fCenter \WBOX \BBOX Z$
\RL{\fns$R_6$}
\BI$\WDIA X \fCenter \WBOX Y \AOR \WBOX Z$
\DP
\]
\end{example}
\begin{thm} 
\label{prop:synt-compl}
If $(\varphi\fCenter\psi)[\oba, \orb,\overline{\bgamma}, \overline{\rdelta}]$ is an analytic inductive sequent, then  a cut-free derivation in pre-normal form exists of $(\varphi\fCenter\psi)[\oba, \orb,\overline{\bgamma}, \overline{\rdelta}]$ in $\mathrm{\cfDLE} + \mathcal{R}$ (resp.~$\mathrm{\cfDDLE} + \mathcal{R}$), where $\mathcal{R}$ denotes the finite set of   analytic structural rules  corresponding to $(\varphi\fCenter\psi)[\oba, \orb,\overline{\bgamma}, \overline{\rdelta}]$  as in Lemma \ref{lemma:  inductive corresponding rule anind}.
\end{thm}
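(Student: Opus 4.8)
The plan is to follow the blueprint of the theorem for quasi-special inductive sequents, substituting the adjunction machinery of Section~\ref{sec:properties} at the two points where that proof exploited the atomicity of the critical maximal PIA-subformulas. First I would reduce to the definite case: if $(\varphi\fCenter\psi)[\oba, \orb,\overline{\bgamma}, \overline{\rdelta}]$ is not definite, Lemma~\ref{lemma: from inductive to definite sequents} rewrites it as a conjunction of definite analytic inductive sequents $(\varphi_j\fCenter\psi_i)[\oba, \orb,\overline{\bgamma}, \overline{\rdelta}]$, and a derivation of the original is recovered from derivations of these by the double application of the reassembly procedure of Proposition~\ref{prop: the thing needed for the non-definite} (assembling first the positive-Skeleton side $\varphi$ and then the negative-Skeleton side $\psi$), exactly as in the footnote of the quasi-special theorem. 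So I may assume the sequent definite, whence by Lemma~\ref{lemma:  inductive corresponding rule anind} the set $\mathcal{R}$ is a single rule $\mathrm{R}$ of the shape \eqref{eq:structural rule (definite) analytic-inductive}.

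Next comes the Skeleton phase. Starting from the conclusion $(\varphi\fCenter\psi)[\oba, \orb,\overline{\bgamma}, \overline{\rdelta}]$ I would apply bottom-up, modulo display postulates, the invertible left-introduction rules for positive SLR-connectives and the right-introduction rules for negative SLR-connectives, turning every Skeleton connective into its structural counterpart and reaching $(\Phi\fCenter\Psi)[\oba, \orb,\overline{\bgamma}, \overline{\rdelta}]$ (cf.~\eqref{apply invertible}); here the maximal PIA-subformulas $\oba,\orb$ and the maximal $\varepsilon^\partial$-subformulas $\overline{\bgamma},\overline{\rdelta}$ remain untouched in the leaves. This sequent is an instance of the conclusion of $\mathrm{R}$ under $\overline{X}\mapsto\oba$, $\overline{Y}\mapsto\orb$, $\overline{Z}\mapsto\overline{\bgamma}$, $\overline{W}\mapsto\overline{\rdelta}$, so I apply $\mathrm{R}$ bottom-up, exactly as in the quasi-special case.

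The heart of the argument is the PIA phase, discharging the open premises $\gamma\vdash\Gamma^\lambda[\overline{\mathsf{MV}(p)}/\overline{p}, \overline{\mathsf{MV}(q)}/\overline{q}]$ and $\Delta^\mu[\overline{\mathsf{MV}(p)}/\overline{p}, \overline{\mathsf{MV}(q)}/\overline{q}]\vdash\delta$. In the quasi-special case these premises collapsed to $\gamma\vdash\Gamma^\lambda$ and $\Delta^\mu\vdash\delta$ and were closed outright by Corollary~\ref{prop: derivation of identities with weakening}; here the minimal/maximal valuations are nested $\mathsf{la}/\mathsf{ra}$-terms, so I would proceed in two moves. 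First, by induction along the dependency order $<_\Omega$, mirroring the recursion defining $\mathsf{Mv}(p)$ and $\mathsf{Mv}(q)$ in Lemma~\ref{lemma:  inductive corresponding rule anind}, I derive the \emph{minimal-valuation sequents} $\mathsf{MV}(p)\vdash p$ and $q\vdash\mathsf{MV}(q)$ for every component: the $<_\Omega$-minimal variables are handled by Corollary~\ref{cor: deriving la implies atom with weakening}, while the inductive step uses Corollary~\ref{cor: generalized deriving la almost implies atom}, feeding the already-derived minimal-valuation sequents of the strictly $<_\Omega$-smaller parametric variables of $\alpha_p,\beta_p,\alpha_q,\beta_q$ as its hypotheses $\overline{\sigma\vdash S}$ and $\overline{U\vdash \tau}$. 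Second, with all minimal-valuation sequents available, I instantiate $\overline{\sigma\vdash S}$ and $\overline{U\vdash \tau}$ in Proposition~\ref{prop: generalized derivation of amlost identities} precisely by these sequents, obtaining the required premises.

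Finally I would verify that the resulting cut-free derivation is in pre-normal form (Definitions~\ref{def:canonical form nonDist} and~\ref{def:canonical form Dist}): the Skeleton phase uses only invertible Skeleton-introduction rules and display postulates, and the derivations of the PIA-premises use only non-invertible PIA-introduction rules and display postulates (together with weakening and exchange, resp.~contraction, in the $\mathrm{\cfDDLE}$ case), since these rule-type restrictions are exactly what the ``applied only in the derivations of $\overline{\sigma\vdash S}$ and $\overline{U\vdash \tau}$'' clauses of Proposition~\ref{prop: generalized derivation of amlost identities} and Corollary~\ref{cor: generalized deriving la almost implies atom} guarantee. The main obstacle is this PIA phase: making the $<_\Omega$-induction line up the minimal-valuation derivations with the nested $\mathsf{la}/\mathsf{ra}$-structure of the premises, and threading the smaller-variable sequents correctly through the generalized statements — everything else is a faithful transcription of the quasi-special argument.
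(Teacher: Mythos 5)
Your proposal is correct and follows essentially the same route as the paper's own proof: reduction to the definite case via Lemma \ref{lemma: from inductive to definite sequents} and the double application of Proposition \ref{prop: the thing needed for the non-definite}, the Skeleton phase ending in a bottom-up application of the single rule from Lemma \ref{lemma:  inductive corresponding rule anind}, and the PIA phase discharging the premises via Proposition \ref{prop: generalized derivation of amlost identities} combined with an induction along $<_\Omega$ on the minimal-valuation sequents, whose inductive step and basis are handled by Corollaries \ref{cor: generalized deriving la almost implies atom} and \ref{cor: deriving la implies atom with weakening}, exactly as in the paper. The only (immaterial) difference is one of presentation: you derive all minimal-valuation sequents first and then feed them into Proposition \ref{prop: generalized derivation of amlost identities}, whereas the paper first reduces the rule premises to those sequents and then runs the same $<_\Omega$-induction.
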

\begin{proof}
 Recall that each $\gamma$ in $\overline{\gamma}$ (resp.~$\delta$ in $\overline{\delta}$) is a positive (resp.~negative) PIA formula. Let $\bigwedge_\lambda\gamma^\lambda$ (resp.~$\bigvee_\mu\delta^\mu$) denote the equivalent rewriting of $\gamma$ (resp.~$\delta$) as conjunction (resp.~disjunction) of definite positive (resp.~negative) PIA formulas as per Lemma \ref{lemma: reduction to definite}. Let us assume that $(\varphi\fCenter\psi)[\oba, \orb,\overline{\bgamma}, \overline{\rdelta}]$ is definite,
\footnote{\label{footnote:non definite general}If $(\varphi\fCenter\psi)[\oba, \orb,\overline{\bgamma}, \overline{\rdelta}]$ is not definite, then, by Lemma \ref{lemma: from inductive to definite sequents}, it can  equivalently be transformed  into the conjunction of {\em definite} analytic inductive sequents $(\varphi_i\fCenter\psi_j)[\oba, \orb,\overline{\bgamma}, \overline{\rdelta}]$, where $\varphi$ is equivalent to $\bigvee_i\varphi_i$ and $\psi$ is equivalent to $\bigwedge_j\psi_j$,  which we can treat separately as shown in the proof. Then, a derivation of the original sequent can be obtained by applying the procedure indicated in the proof of Proposition \ref{prop: the thing needed for the non-definite} twice: by applying the procedure once, from derivations of $(\Phi_i\fCenter \Psi_j)[\oba, \orb, \overline{\bgamma}, \overline{\rdelta}]$ for every $i$ and $j$ we obtain derivations of $(\varphi\vdash\Psi_j)[\oba, \orb, \overline{\bgamma}, \overline{\rdelta}]$ for every $j$. Then, by applying the procedure again on these sequents, we obtain the required derivation of $(\varphi\fCenter \psi)[\oba, \orb, \overline{\bgamma}, \overline{\rdelta}]$.}
  and hence $ \mathcal{R}$ has only one element $\mathrm{R}$, with the following shape (cf.~Lemma \ref{lemma:  inductive corresponding rule anind}):
\begin{equation} 
\AXC{$\overline{(Z\vdash \Gamma^\lambda)_\lambda}[\overline{\mathsf{MV}(p)}/\overline{p}, \overline{\mathsf{MV}(q)}/\overline{q}]$}
\AXC{$\overline{(\Delta^\mu\vdash W)_\mu}[\overline{\mathsf{MV}(p)}/\overline{p}, \overline{\mathsf{MV}(q)}/\overline{q}]$}
\BIC{$(\Phi\vdash \Psi)[\overline{X}, \overline{Y}, \overline{Z}, \overline{W}]$}
\DP
\end{equation}
 Then, 
modulo application of display rules, we can apply left-introduction (resp.~right-introduction) rules to positive (resp.~negative) SLR-connectives bottom-up, so as to transform all Skeleton connectives into structural connectives:
\begin{equation}\label{inv anlytic}
\AX$(\Phi\fCenter \Psi)[\oba, \orb, \overline{\bgamma}, \overline{\rdelta}]$
\noLine
\UIC{$\vdots$}
\noLine
\UI$(\varphi\fCenter \psi)[\oba, \orb, \overline{\bgamma}, \overline{\rdelta}]$
\DP
\end{equation} 
Notice that $(\Phi\fCenter \Psi)[\oba, \orb, \overline{\bgamma}, \overline{\rdelta}]$ is an instance of the conclusion of $\mathrm{R}$. Hence we can apply $\mathrm{R}$ bottom-up and obtain:
\begin{center}
\AXC{$\overline{(\gamma\vdash \Gamma^\lambda)_\lambda}[\overline{\mathsf{MV}(p)}[\oba/\obx, \orb/\ory]/\obp, \overline{\mathsf{MV}(q)}[\oba/\obx, \orb/\ory]/\orq]$}
\AXC{$\overline{(\Delta^\mu\vdash \delta)_\mu}[\overline{\mathsf{MV}(p)}[\oba/\obx, \orb/\ory]/\obp, \overline{\mathsf{MV}(q)}[\oba/\obx, \orb/\ory]/\orq]$}
\BIC{$(\Phi\vdash \Psi)[\oba,\orb, \overline{\gamma}, \overline{\delta}]$}
\DP
\end{center} 
 To finish the proof that $(\varphi\fCenter \psi)[\oba, \orb, \overline{\bgamma}, \overline{\rdelta}]$
is derivable in in $\mathrm{\cfDLE} + \mathrm{R}$ (resp.~$\mathrm{\cfDDLE} + \mathrm{R}$), it is enough to show that  the sequents \[\overline{(\gamma\vdash \Gamma^\lambda)_\lambda}[\overline{\mathsf{MV}(p)}[\oba/\obx, \orb/\ory]/\obp, \overline{\mathsf{MV}(q)}[\oba/\obx, \orb/\ory]/\orq]\quad\text{  and }\quad\overline{(\Delta^\mu\vdash \delta)_\mu}[\overline{\mathsf{MV}(p)}[\oba/\obx, \orb/\ory]/\obp, \overline{\mathsf{MV}(q)}[\oba/\obx, \orb/\ory]/\orq]\] are derivable in $\mathrm{\cfDLE}$ (resp.~$\mathrm{\cfDDLE}$). Recalling that  each $\gamma$ in $\overline{\gamma}$ (resp.~$\delta$ in $\overline{\delta}$) is a positive (resp.~negative) PIA formula, 
 by  Proposition \ref{prop: generalized derivation of amlost identities}, it is enough to show that for every $p$ and $q$, the sequents $\mathsf{MV}(p)[\oba/\obx, \orb/\ory]\vdash p$ and $q\vdash \mathsf{MV}(q)[\oba/\obx, \orb/\ory]$ are derivable in $\mathrm{\cfDLE}$ (resp.~$\mathrm{\cfDDLE}$) for all formulas $\mathsf{mv}(p)\in \mathsf{Mv}(p)$ and $\mathsf{mv}(q)\in \mathsf{Mv}(q)$. Let us show this latter statement.
 Each sequent $\mathsf{MV}(p)[\oba/\obx, \orb/\ory]\vdash p$ is of either of the following forms: \[\mathsf{LA}(\alpha_p)[\ba[\obp/\obp, \orq/\orq]/!\bu, \overline{\mathsf{MV}(p)}/\overline{p}, \overline{\mathsf{MV}(q)}/\overline{q}]\vdash p    \quad    \mathsf{RA}(\beta_p)[\beta[\obp/\obp, \orq/\orq]/!u, \overline{\mathsf{MV}(p)}/\overline{p}, \overline{\mathsf{MV}(q)}/\overline{q}]\vdash p,\]
 where $\alpha_p$ (resp.~$\beta_p$) denotes the definite positive (resp.~negative) PIA formula which occurs as a conjunct (resp.~disjunct) of $\alpha$ (resp.~$\beta$) as per Lemma \ref{lemma: reduction to definite}, which contains the $\epsilon$-critical occurrence of $p$ as a subformula (cf.~discussion around \eqref{eq: first approx anind definite}).
By Corollary \ref{cor: generalized deriving la almost implies atom}, it is enough to show that $\mathsf{MV}(p')\vdash p'$ and $q'\vdash \mathsf{MV}(q')$ are derivable in $\mathrm{\cfDLE}$ (resp.~$\mathrm{\cfDDLE}$) for each $p', q'<_\Omega p$ (which is true by the induction hypothesis, while the basis of the induction holds by Corollary \ref{cor: deriving la implies atom with weakening}), and $p\vdash p$ is derivable in $\mathrm{\cfDLE}$ (resp.~$\mathrm{\cfDDLE}$), which is of course the case. Likewise, one shows that the sequents $q\vdash \mathsf{MV}(q)[\oba/\obx, \orb/\ory]$ are derivable in $\mathrm{\cfDLE}$ (resp.~$\mathrm{\cfDDLE}$), which completes the proof that $(\varphi\fCenter \psi)[\oba, \orb, \overline{\bgamma}, \overline{\rdelta}]$
is derivable in in $\mathrm{\cfDLE} + \mathrm{R}$ (resp.~$\mathrm{\cfDDLE} + \mathrm{R}$). Finally, the derivation so generated only consists of identity axioms, and applications of display rules, right-introduction rules for  negative PIA-connectives, 
and left-introduction rules for positive PIA-connectives 
(and weakening and exchange rules in the case of $\mathrm{\cfDDLE}$) before the application of a rule in $\mathcal{R}$ (cf.~Proposition \ref{prop: generalized derivation of amlost identities} and Corollaries \ref{cor: generalized deriving la almost implies atom} and \ref{cor: deriving la implies atom with weakening}); moreover, after applying a rule in $\mathcal{R}$, the only rules applied are display rules, left-introduction rules for positive Skeleton-connectives and right-introduction rules for negative Skeleton-connectives (plus contraction in the case of $\mathrm{\cfDDLE}$), cf.~Proposition \ref{prop: the thing needed for the non-definite} and Footnote \ref{footnote:non definite general}). This completes the proof that the cut-free derivation in $\mathrm{\cfDLE} + \mathcal{R}$ (resp.~$\mathrm{\cfDDLE} + \mathcal{R}$) of $(\varphi\fCenter\psi)[\oba, \orb,\overline{\bgamma}, \overline{\rdelta}]$ is in pre-normal form. 
\end{proof}

\begin{example}
\label{ex:derivations general}
Let us illustrate the procedure described in the proposition above by deriving the sequents
in Example \ref{example:inductive and analytic inductive} using the rules computed in Example \ref{example:computing rules anint}. In the last derivation below, the symbol $\hat{\slash}_{\mor}$  denotes the left residual  of $\MOR$ in its  first coordinate.


\begin{center}
{\fns
\begin{tabular}{ccccc}
\normalsize{\cfDLE-derivation of $\wdia  \wbox p \fCenter\wbox \wdia p$:}
 & & & & 
\rule[-1.8mm]{0mm}{0mm}
\normalsize{\cfDDLE-derivation of $\wdia \wbox ( p \aand q) \fCenter \wbox \wdia p \aor  \wbox \wdia q$:}
 \\
\cline{0-1}
\cline{4-5}
\rule[0mm]{0mm}{0mm}
\hspace{-0.5cm}
\begin{tikzpicture}		
\node at(0,0) {
\AX$p \fCenter p $
\LL{\fns$\wbox_L$}
\UI$\wbox p \fCenter \WBOX p $
\UI$\BDIA \wbox p \fCenter p$
\RL{\fns$\wdia_R$}
\UI$\WDIA \BDIA \wbox p\fCenter \wdia p$
\RL{\fns$R_4$}
\UI$\WDIA  \wbox p  \fCenter \WBOX \wdia p$
\LL{\fns$\wdia_L$}
\UI$\wdia  \wbox p  \fCenter \WBOX \wdia p$
\RL{\fns$\wbox_R$}
\UI$\wdia  \wbox p \fCenter\wbox \wdia  p$
\DP
};

\node[rotate = -90] at (-1.3, -1) {$\underbrace{\hspace{1.3cm}}$};
\node at (-2,-1) {(\ref{inv anlytic})};
\node[rotate = 90] at (1.7,-0.3) {$\underbrace{\hspace{0.6cm}}$};
\node at (2.2,-0.3) {(\ref{eq:structural rule (definite) analytic-inductive})};
\end{tikzpicture}

 & &\ \ \ \ \ \ \ \ \ \ & & 

\hspace{-1cm}
\begin{tikzpicture}		
\node at(0,0) {
\AX$p \fCenter p $
\RL{\fns$\wdia_R$}
\UI$\WDIA p  \fCenter \wdia p $
\UI$p \fCenter \BBOX \wdia p $
\LL{\fns$W_L$}
\UI$p \AAND q \fCenter \BBOX \wdia p$
\LL{\fns$\aand_L$}
\UI$p \aand q \fCenter \BBOX \wdia p$
\LL{\fns$\wbox_L$}
\UI$\wbox (p \aand q) \fCenter \WBOX \BBOX \wdia p$
\AX$q \fCenter q$
\RL{\fns$\wdia_R$}
\UI$\WDIA q \fCenter \wdia q$
\UI$q \fCenter \BBOX \wdia q$
\LL{\fns$W_L$}
\UI$q \AAND p \fCenter \BBOX \wdia q$
\LL{\fns$E_L$}
\UI$p \AAND q \fCenter \BBOX \wdia q$
\LL{\fns$\aand_L$}
\UI$p \aand q \fCenter \BBOX \wdia q$
\LL{\fns$\wbox_L$}
\UI$\wbox (p \aand q) \fCenter \WBOX \BBOX \wdia q$
\RL{\footnotesize$R_6$}
\BI$\WDIA \wbox ( p \aand q) \fCenter \WBOX \wdia p \AOR \WBOX \wdia q$
\UI$\WDIA \wbox ( p \aand q) \ADLARR \WBOX \wdia q \fCenter \WBOX \wdia p$
\RL{\fns$\wbox_R$}
\UI$\WDIA \wbox ( p \aand q) \ADLARR \WBOX \wdia q \fCenter \wbox \wdia p$
\UI$\WDIA \wbox ( p \aand q) \fCenter \wbox \wdia p \AOR \WBOX \wdia q$
\UI$ \wbox\wdia p\ADRARR  \WDIA \wbox ( p \aand q) \fCenter \WBOX \wdia q$
\RL{\fns$\wbox_R$}
\UI$ \wdia p\ADRARR  \WDIA \wbox ( p \aand q) \fCenter \wbox \wdia q$
\UI$\WDIA \wbox ( p \aand q) \fCenter \wbox \wdia p \AOR \wbox \wdia q$
\RL{\fns$\aor_R$}
\UI$\WDIA \wbox ( p \aand q) \fCenter \wbox \wdia p \aor \wbox \wdia q$
\LL{\fns$\wdia_L$}
\UI$\wdia \wbox ( p \aand q) \fCenter \wbox \wdia p \aor \wbox \wdia q$
\DP
};

\node[rotate = -90] at (-2.75, -1.7) {$\underbrace{\hspace{4.4 cm}}$};
\node at (-3.35,-1.7) {(\ref{inv anlytic})};
\node[rotate = 90] at (4,0.5) {$\underbrace{\hspace{0.6cm}}$};
\node at (4.5,0.5) {(\ref{eq:structural rule (definite) analytic-inductive})};
\end{tikzpicture}
 \\
\end{tabular}
 }
\end{center}

\begin{center}
{\fns
\begin{tabular}{@{}c@{}}
\normalsize{\cfDDLE-derivation of $p\ararr (q \ararr r) \fCenter ((p \ararr q) \ararr (\wbox p \ararr  r)) \mor \wdia r$:}
\rule[-1.8mm]{0mm}{0mm} \\
\hline
\hspace{-2.65cm}
\begin{tikzpicture}
\node at(0,0) {
\small
\AXC{$p\fCenter p$}
\AXC{$p \fCenter p$}
\AXC{$q\fCenter q$}
\AXC{$r \fCenter r$}
\LL{\fns$\ararr_L$}
\BIC{$q \ararr r \fCenter q \ARARR r$}
\LL{\fns$\ararr_L$}
\BIC{$p\ararr (q \ararr r) \fCenter p \ARARR(q \ARARR r)$}
\UIC{$p \AAND(p\ararr (q \ararr r) ) \fCenter q \ARARR r$}
\UIC{$p \fCenter (q \ARARR r) \ALARR (p\ararr (q \ararr r) )$}
\LL{\fns$\wbox_L$}
\UIC{$\wbox p \fCenter \WBOX ((q \ARARR r) \ALARR (p\ararr (q \ararr r) ))$}
\UIC{$\BDIA \wbox p \fCenter ((q \ARARR r) \ALARR (p\ararr (q \ararr r) ))$}
\UIC{$\BDIA \wbox p  \AAND(p\ararr (q \ararr r) )\fCenter q \ARARR r $}
\UIC{$ q\AAND (\BDIA \wbox p  \AAND(p\ararr (q \ararr r) )) \fCenter r $}
\UIC{$ q\fCenter r\ALARR (\BDIA \wbox p  \AAND(p\ararr (q \ararr r) ))  $}
\LL{\fns$\ararr_L$}
\BIC{$p \ararr q \fCenter p \ARARR (r\ALARR (\BDIA \wbox p  \AAND(p\ararr (q \ararr r) )))$}
\UIC{$p \AAND (p \ararr q) \fCenter r\ALARR (\BDIA \wbox p  \AAND(p\ararr (q \ararr r) ))$}
\UIC{$p  \fCenter (r\ALARR (\BDIA \wbox p  \AAND(p\ararr (q \ararr r) ))) \ALARR (p \ararr q)$}
\LL{\fns$\wbox_L$}
\UIC{$ \wbox p  \fCenter \WBOX (r\ALARR (\BDIA \wbox p  \AAND(p\ararr (q \ararr r) ))) \ALARR (p \ararr q)$}
\UIC{$ \BDIA \wbox p  \fCenter r\ALARR ((\BDIA \wbox p  \AAND(p\ararr (q \ararr r) ))) \ALARR (p \ararr q)$}
\UIC{$ \BDIA \wbox p \AAND (p \ararr q)  \fCenter r \ALARR ((\BDIA \wbox p  \AAND(p\ararr (q \ararr r) ))) $}
\UIC{$ (\BDIA \wbox p \AAND (p \ararr q))\AAND(\BDIA \wbox p  \AAND(p\ararr (q \ararr r) ))   \fCenter r$}
\UIC{$ \WDIA ((\BDIA \wbox p \AAND (p \ararr q))\AAND(\BDIA \wbox p  \AAND(p\ararr (q \ararr r) )))  \fCenter \wdia r$}
\UIC{$ (\BDIA \wbox p \AAND (p \ararr q))\AAND(\BDIA \wbox p  \AAND(p\ararr (q \ararr r) ))  \fCenter  \BBOX \wdia r$}
\UIC{$ p\ararr (q \ararr r)   \fCenter \BDIA \wbox p\ARARR (\BDIA \wbox p \AAND (p \ararr q) \ARARR\BBOX \wdia r)$}
\AXC{$p\fCenter p$}
\AXC{$p \fCenter p$}
\AXC{$q\fCenter q$}
\AXC{$r \fCenter r$}
\LL{\fns$\ararr_L$}
\BIC{$q \ararr r \fCenter q \ARARR r$}
\LL{\fns$\ararr_L$}
\BIC{$p\ararr (q \ararr r) \fCenter p \ARARR(q \ARARR r)$}
\UIC{$p \AAND(p\ararr (q \ararr r) ) \fCenter q \ARARR r$}
\UIC{$p \fCenter (q \ARARR r) \ALARR (p\ararr (q \ararr r) )$}
\LL{\fns$\wbox_L$}
\UIC{$\wbox p \fCenter \WBOX ((q \ARARR r) \ALARR (p\ararr (q \ararr r) ))$}
\UIC{$\BDIA \wbox p \fCenter ((q \ARARR r) \ALARR (p\ararr (q \ararr r) ))$}
\UIC{$\BDIA \wbox p  \AAND(p\ararr (q \ararr r) )\fCenter q \ARARR r $}
\UIC{$ q\AAND (\BDIA \wbox p  \AAND(p\ararr (q \ararr r) )) \fCenter r $}
\UIC{$ q\fCenter r\ALARR (\BDIA \wbox p  \AAND(p\ararr (q \ararr r) ))  $}
\LL{\fns$\ararr_L$}
\BIC{$p \ararr q \fCenter p \ARARR (r\ALARR (\BDIA \wbox p  \AAND(p\ararr (q \ararr r) )))$}
\UIC{$p \AAND (p \ararr q) \fCenter r\ALARR (\BDIA \wbox p  \AAND(p\ararr (q \ararr r) ))$}
\UIC{$p  \fCenter (r\ALARR (\BDIA \wbox p  \AAND(p\ararr (q \ararr r) ))) \ALARR (p \ararr q)$}
\LL{\fns$\wbox_L$}
\UIC{$ \wbox p  \fCenter \WBOX (r\ALARR (\BDIA \wbox p  \AAND(p\ararr (q \ararr r) ))) \ALARR (p \ararr q)$}
\UIC{$ \BDIA \wbox p  \fCenter r\ALARR ((\BDIA \wbox p  \AAND(p\ararr (q \ararr r) ))) \ALARR (p \ararr q)$}
\UIC{$ \BDIA \wbox p \AAND (p \ararr q)  \fCenter r \ALARR ((\BDIA \wbox p  \AAND(p\ararr (q \ararr r) ))) $}
\UIC{$ (\BDIA \wbox p \AAND (p \ararr q))\AAND(\BDIA \wbox p  \AAND(p\ararr (q \ararr r) )))  \fCenter r$}
\UIC{$ \BDIA \wbox p  \AAND(p\ararr (q \ararr r) )   \fCenter (\BDIA \wbox p \AAND (p \ararr q))\ARARR r$}
\UIC{$ p\ararr (q \ararr r)    \fCenter \BDIA \wbox p \ARARR((\BDIA \wbox p \AAND (p \ararr q))\ARARR r)$}
\RL{\fns$R_5$}
\BIC{$p\ararr (q \ararr r) \fCenter ((p \ararr q)\ARARR(\wbox p \ARARR  r)) \MOR \wdia r$}
\UIC{$(p\ararr (q \ararr r)) \ \hat{\slash}_{\mor}\ \wdia r \fCenter (p \ararr q) \ARARR (\wbox p \ARARR r)$}
\UIC{$(p \ararr q) \AAND ((p\ararr (q \ararr r)) \ \hat{\slash}_{\mor}\ \wdia r) \fCenter \wbox p \ARARR r$}
\RL{\fns$\ararr_R$}
\UIC{$(p \ararr q) \AAND ((p\ararr (q \ararr r)) \ \hat{\slash}_{\mor}\ \wdia r) \fCenter \wbox p \ararr r$}
\UIC{$(p\ararr (q \ararr r)) \ \hat{\slash}_{\mor}\ \wdia r \fCenter (p \ararr q) \ARARR (\wbox p \ararr r)$}
\RL{\fns$\ararr_R$}
\UIC{$(p\ararr (q \ararr r)) \ \hat{\slash}_{\mor}\ \wdia r \fCenter (p \ararr q) \ararr (\wbox p \ararr r)$}
\UIC{$p\ararr (q \ararr r) \fCenter ((p \ararr q) \ararr (\wbox p \ararr  r)) \MOR \wdia r$}
\RL{\fns$\mor_R$}
\UIC{$p\ararr (q \ararr r) \fCenter ((p \ararr q) \ararr (\wbox p \ararr  r)) \mor \wdia r$}
\DP
 };
 \node[rotate = -90] at (-1.5, -4.72) {$\underbrace{\hspace{3.7 cm}}$};
\node at (-2.15,-4.72) {(\ref{inv anlytic})};
\node[rotate = 90] at (9.35,-2.8) {$\underbrace{\hspace{0.6cm}}$};
\node at (9.85,-2.8) {(\ref{eq:structural rule (definite) analytic-inductive})};
\end{tikzpicture}
\end{tabular}
 }
\end{center}
\end{example}

\section{Conclusions}
\label{sec:conclusions}

\paragraph{Main contribution}
In this article we showed that, for any {\em properly displayable} (D)LE-logic $\mathsf{L}$ (i.e.~a (D)LE-logic axiomatized by {\em analytic inductive} axioms, cf.~Definition \ref{def:type5}), the proper display calculus for $\mathsf{L}$---i.e.~the calculus obtained by adding the analytic structural rules corresponding to the axioms of $\mathsf{L}$ to the basic calculus $\mathrm{D.LE}$ (resp.~$\mathrm{D.DLE}$)---derives all the theorems of $\mathsf{L}$. This is what we refer to as the {\em syntactic completeness} of the proper display calculus for $\mathsf{L}$ with respect to $\mathsf{L}$. We achieved this result by providing an {\em effective procedure} for generating a derivation---which is not only {\em cut-free} but also in {\em pre-normal form} (cf.~Definitions \ref{def:canonical form nonDist} and \ref{def:canonical form Dist})---of any analytic inductive axiom in any (D)LE-signature in the proper display calculus $\mathrm{D.LE}$ (resp.~$\mathrm{D.DLE}$) augmented with the analytic structural rule(s) corresponding to the given axiom. 

\paragraph{Scope} Since (D)LE-logics encompass a wide family of well known logics (modal, intuitionistic, substructural), and since analytic inductive axioms provide a formulation of the notion of analyticity based on the syntactic shape of formulas/sequents, the results of the present paper directly apply to all logical settings for which analytic (proper display) calculi have been defined, such as those of \cite{GMPTZ,Belnap,Wa98,Wan02,Kracht,CiRa14}. Moreover, in the present paper we have worked in a single-type environment, mainly for ease of exposition. However,  all the results mentioned above  straightforwardly apply also to properly displayable {\em multi-type calculi}, which have been recently introduced to extend the scope and benefits of proper display calculi also to a wide range of logics that for various reasons do not fall into the scope of the analytic inductive definition. These logics crop up in various areas of the literature and include well known logics such as linear logic \cite{GrecPal17}, dynamic epistemic logic \cite{Multitype}, semi De Morgan logic \cite{GreLiaMosPal17,GreLianMosPal2020}, bilattice logic \cite{GreLiaPalRiv19}, inquisitive logic \cite{inquisitive}, non normal modal logics \cite{CheGrePalTzi19}, the logics of classes of rough algebras \cite{GreLiaManPal19,GrecJipManPalTzi19}. Interestingly, the multi-type framework can be also usefully deployed to introduce logics specifically designed to describe and reason about the interaction of entities of different types, as done e.g.~in  \cite{LoRC,PDL}. 


\paragraph{The syntax-semantics interface on analytic calculi} The main insight  developed in the research line to which the present paper pertains is that there is a close connection between {\em semantic} results pertaining to correspondence theory and the {\em syntactic} theory of analytic calculi. This close connection, which has been observed and also exploited by several authors in various proof-theoretic settings (cf.~e.g.~\cite{Kracht,negri2005proof,ciabattoni2008axioms}), gave rise in \cite{GMPTZ} to the notion of analytic inductive inequalities as the uniform and independent identification, across signatures, of the syntactic shape (semantically motivated by the order-theoretic properties of the algebraic interpretation of the logical connectives) which guarantees the desiderata of analyticity. In this context, the core of the ``syntax-semantic interface'' is the algorithm ALBA, which serves to compute {\em both} the first-order correspondent of analytic inductive axioms {\em and} their corresponding analytic structural rules. In this paper, we saw the analytic structural rules computed by ALBA at work as the key cogs of the machinery of proper display calculi to derive the axioms that had generated them. This result can be understood as the purely syntactic counterpart of the proof that ALBA preserves semantic equivalence on complete algebras (cf.~\cite{ConPal12,ConPalZha19,ConPal20}), which has been used in  \cite{GMPTZ} to motivate the semantic equivalence of any given analytic inductive axiom with its corresponding ALBA-generated structural rules. This observation paves the way to various questions, among which, whether  information about the derivation of a given analytic inductive axiom can be extracted directly from its successful ALBA-run, or conversely, whether information about (optimal) ALBA-runs of analytic inductive  axioms can be extracted from its derivation in pre-normal form, or whether the recent independent topological characterization of analytic inductive inequalities established in \cite{DeRPal20} can be exploited for proof-theoretic purposes.


\begin{thebibliography}{50}

\bibitem{Belnap}
N.~Belnap.
\newblock Display logic.
\newblock {\em Journal of Philosophical Logic}, 11:375--417, 1982.

\bibitem{LoRC}
M.~B\'{\i}lkov\'a, G.~Greco, A.~Palmigiano, A.~Tzimoulis, and N.~Wijnberg.
\newblock The logic of resources and capabilities.
\newblock {\em The Review of Symbolic Logic}, 11(2):371--410, 2018.

\bibitem{CheGrePalTzi19}
J.~Chen, G.~Greco, A.~Palmigiano, and A.~Tzimoulis.
\newblock Non normal logics: semantic analysis and proof theory.
\newblock In R.~Iemhoff, M.~Moortgat, and R.~de~Queiroz, editors, {\em
  Proceedings of the 26th Workshop on Logic, Language, Information and
  Computation (WoLLIC 2019)}, volume 11541 of {\em LNCS}, pages 99--118.
  Springer, 2019.

\bibitem{ciabattoni2008axioms}
A.~Ciabattoni, N.~Galatos, and K.~Terui.
\newblock From axioms to analytic rules in nonclassical logics.
\newblock In {\em Logic in Computer Science}, volume~8, pages 229--240, 2008.

\bibitem{ciabattoni2012algebraic}
A.~Ciabattoni, N.~Galatos, and K.~Terui.
\newblock Algebraic proof theory for substructural logics: cut-elimination and
  completions.
\newblock {\em Annals of Pure and Applied Logic}, 163(3):266--290, 2012.

\bibitem{CiaLanRam19}
A.~Ciabattoni, T.~Lang, and R.~Ramanayake.
\newblock Bounded sequent calculi for non-classical logics via hypersequents.
\newblock In S.~Cerrito and A.~Popescu, editors, {\em International Conference
  on Automated Reasoning with Analytic Tableaux and Related Methods}, number
  11714 in LNAI, pages 94--110. Springer International Publishing, 2019.

\bibitem{CiRa14}
A.~Ciabattoni and R.~Ramanayake.
\newblock Power and limits of structural display rules.
\newblock {\em ACM Transactions on Computational Logic}, 17(3):1--39, 2016.

\bibitem{CFPS}
W.~Conradie, Y.~Fomatati, A.~Palmigiano, and S.~Sourabh.
\newblock Algorithmic correspondence for intuitionistic modal mu-calculus.
\newblock {\em Theoretical Computer Science}, 564:30--62, 2015.

\bibitem{CoGhPa13}
W.~Conradie, S.~Ghilardi, and A.~Palmigiano.
\newblock {U}nified correspondence.
\newblock In A.~Baltag and S.~Smets, editors, {\em Johan van Benthem on Logic
  and Information Dynamics}, volume~5 of {\em Outstanding Contributions to
  Logic}, pages 933--975. Springer International Publishing, 2014.

\bibitem{ConPal12}
W.~Conradie and A.~Palmigiano.
\newblock Algorithmic correspondence and canonicity for distributive modal
  logic.
\newblock {\em Annals of Pure and Applied Logic}, 163(3):338--376, 2012.

\bibitem{CoPa11}
W.~Conradie and A.~Palmigiano.
\newblock Algorithmic correspondence and canonicity for non-distributive
  logics.
\newblock {\em Annals of Pure and Applied Logic}, 170(9):923--974, 2019.

\bibitem{conradie2019algorithmic}
W.~Conradie and A.~Palmigiano.
\newblock Algorithmic correspondence and canonicity for non-distributive
  logics.
\newblock {\em Annals of Pure and Applied Logic}, 170(9):923--974, 2019.

\bibitem{lmcs:6694}
W.~Conradie and A.~Palmigiano.
\newblock {Constructive Canonicity of Inductive Inequalities}.
\newblock {\em {Logical Methods in Computer Science}}, {Volume 16, Issue 3},
  Aug. 2020.

\bibitem{ConPal20}
W.~Conradie and A.~Palmigiano.
\newblock Constructive canonicity of inductive inequalities.
\newblock {\em Logical Methods in Computer Science}, 16(3):1--39, Submitted.
  ArXiv preprint 1603.08341.

\bibitem{CPT-Goldblatt}
W.~Conradie, A.~Palmigiano, and A.~Tzimoulis.
\newblock {Goldblatt-Thomason for {LE}-logics}.
\newblock {\em Submitted}, arXiv:1809.08225, 2020.

\bibitem{ConPalZha19}
W.~Conradie, A.~Palmigiano, and Z.~Zhao.
\newblock Sahlqvist via translation.
\newblock {\em Logical Methods in Computer Science}, 15:1--15, 2019.

\bibitem{DeRPal20}
L.~De~Rudder and A.~Palmigiano.
\newblock Slanted canonicity of analytic inductive inequalities.
\newblock {\em Submitted}, arXiv:2003.12355, 2020.

\bibitem{PDL}
S.~Frittella, G.~Greco, A.~Kurz, and A.~Palmigiano.
\newblock Multi-type display calculus for propositional dynamic logic.
\newblock {\em Journal of Logic and Computation}, 26(6):2067--2104, 2016.

\bibitem{Multitrends}
S.~Frittella, G.~Greco, A.~Kurz, A.~Palmigiano, and V.~Sikimi\'{c}.
\newblock Multi-type sequent calculi.
\newblock In M.~Z. A.~A. Indrzejczak and J.~Kaczmarek, editors, {\em
  Proceedings of Trends in Logic XIII}, pages 81--93. \L{}odz University Press,
  2014.

\bibitem{Multitype}
S.~Frittella, G.~Greco, A.~Kurz, A.~Palmigiano, and V.~Sikimi\'{c}.
\newblock Multi-type display calculus for dynamic epistemic logic.
\newblock {\em Journal of Logic and Computation}, 26(6):2017--2065, 2016.

\bibitem{inquisitive}
S.~Frittella, G.~Greco, A.~Palmigiano, and F.~Yang.
\newblock A multi-type calculus for inquisitive logic.
\newblock In J.~V"{a}"{a}n"{a}nen, A.~Hirvonen, and R.~de~Queiroz, editors,
  {\em 23rd International Workshop on Logic, Language, Information, and
  Computation (WoLLIC 2016)}, volume 9803 of {\em LNCS}, pages 215--233, 2016.

\bibitem{GehNagVen05}
M.~Gehrke, H.~Nagahashi, and Y.~Venema.
\newblock {A} {S}ahlqvist theorem for distributive modal logic.
\newblock {\em Annals of Pure and Applied Logic}, 131(1-3):65--102, 2005.

\bibitem{greco2018algebraic}
G.~Greco, P.~Jipsen, F.~Liang, A.~Palmigiano, and A.~Tzimoulis.
\newblock Algebraic proof theory for {LE}-logics.
\newblock {\em submitted, arXiv preprint arXiv:1808.04642}, 2018.

\bibitem{GrecJipManPalTzi19}
G.~Greco, P.~Jipsen, K.~Manoorkar, A.~Palmigiano, and A.~Tzimoulis.
\newblock Logics for rough concept analysis.
\newblock In M.~Khan and M.~A., editors, {\em Indian Conference on Logic and
  its Applications (ICLA 2019)}, volume 11600 of {\em LNCS}. Springer, Berlin,
  Heidelberg, 2019.

\bibitem{GreLiaManPal19}
G.~Greco, F.~Liang, K.~Manoorkar, and A.~Palmigiano.
\newblock Proper multi-type display calculi for rough algebras.
\newblock In B.~Accattoli and C.~Olarte, editors, {\em proceedings of the 13th
  Workshop on Logical and Semantic Frameworks with Applications (LSFA 2018)},
  number 344 in ENTCS, pages 101--118. Elsevier, 2019.

\bibitem{GreLiaMosPal17}
G.~Greco, F.~Liang, M.~A. Moshier, and A.~Palmigiano.
\newblock Multi-type display calculus for semi {D}e {M}organ logic.
\newblock In J.~Kennedy and R.~de~Queiroz, editors, {\em Proceedings of the
  24th Workshop on Logic, Language, Information and Computation (WoLLIC 2017)},
  volume 10388 of {\em LNCS}, pages 199--215, 2017.

\bibitem{GreLianMosPal2020}
G.~Greco, F.~Liang, M.~A. Moshier, and A.~Palmigiano.
\newblock Semi {D}e {M}organ logic properly displayed.
\newblock {\em Studia Logica}, 109(1):1--45, 2021.

\bibitem{GreLiaPalRiv19}
G.~Greco, F.~Liang, A.~Palmigiano, and U.~Rivieccio.
\newblock Bilattice logic properly displayed.
\newblock {\em Fuzzy Sets and Systems}, 363:138--155, 2019.

\bibitem{GMPTZ}
G.~Greco, M.~Ma, A.~Palmigiano, A.~Tzimoulis, and Z.~Zhao.
\newblock {Unified correspondence as a proof-theoretic tool}.
\newblock {\em Journal of Logic and Computation}, 28(7):1367--1442, 2018.

\bibitem{GrecPal17}
G.~Greco and A.~Palmigiano.
\newblock Lattice logic properly displayed.
\newblock In J.~Kennedy and R.~de~Queiroz, editors, {\em Proceedings of the
  24th Workshop on Logic, Language, Information and Computation (WoLLIC 2017)},
  volume 10388 of {\em LNCS}, pages 153--169. Springer, 2017.

\bibitem{GreRicMooTzi20}
G.~Greco, D.~V. Richard, M.~Moortgat, and A.~Tzimoulis.
\newblock Lambek-{G}rishin calculus: focusing, display and full polarization.
\newblock Submitted.

\bibitem{Kracht}
M.~Kracht.
\newblock Power and weakness of the modal display calculus.
\newblock In {\em Proof theory of modal logic}, volume~2 of {\em Applied Logic
  Series}, pages 93--121. Kluwer, 1996.

\bibitem{lahav2013frame}
O.~Lahav.
\newblock From frame properties to hypersequent rules in modal logics.
\newblock In {\em Proceedings of the 2013 28th Annual ACM/IEEE Symposium on
  Logic in Computer Science}, pages 408--417. IEEE Computer Society, 2013.

\bibitem{lellmann2014axioms}
B.~Lellmann.
\newblock Axioms vs hypersequent rules with context restrictions: theory and
  applications.
\newblock In S.~Demri, D.~Kapur, and C.~Weidenbach, editors, {\em Automated
  Reasoning - 7th International Joint Conference (IJCAR 2014)}, volume 8562 of
  {\em LNCS}, pages 307--321. Springer, 2014.

\bibitem{Neg99}
S.~Negri.
\newblock Sequent calculus proof theory of intuitionistic apartness and order
  relations.
\newblock {\em Archive for Mathematical Logic}, 38(8):521--547, 1999.

\bibitem{Neg03}
S.~Negri.
\newblock Contraction-free sequent calculi for geometric theories, with an
  application to {B}arr's theorem.
\newblock {\em Archive for Mathematical Logic}, 42:389--401, 2003.

\bibitem{negri2005proof}
S.~Negri.
\newblock Proof analysis in modal logic.
\newblock {\em Journal of Philosophical Logic}, 34(5-6):507--544, 2005.

\bibitem{NegVonPla98}
S.~Negri and J.~Von~Plato.
\newblock Cut elimination in the presence of axioms.
\newblock {\em The Bullettin of Symbolic Logic}, 4(4):418--435, 1998.

\bibitem{Oka02}
M.~Okada.
\newblock A uniform semantic proof for cut-elimination and completeness of
  various first and higher order logics.
\newblock {\em Theoretical Computer Science}, 281(1):471--498, 2002.
\newblock Selected Papers in honour of Maurice Nivat.

\bibitem{sofronie2000duality1}
V.~Sofronie-Stokkermans.
\newblock Duality and canonical extensions of bounded distributive lattices
  with operators, and applications to the semantics of non-classical logics
  {I}.
\newblock {\em Studia Logica}, 64(1):93--132, 2000.

\bibitem{sofronie2000duality2}
V.~Sofronie-Stokkermans.
\newblock Duality and canonical extensions of bounded distributive lattices
  with operators, and applications to the semantics of non-classical logics
  {II}.
\newblock {\em Studia Logica}, 64(2):151--172, 2000.

\bibitem{vanbenthem2005}
J.~van Benthem.
\newblock Minimal predicates, fixed-points, and definability.
\newblock {\em Journal of {S}ymbolic {L}ogic}, 70:696--712, 2005.

\bibitem{Wa98}
H.~Wansing.
\newblock {\em Displaying modal logic}.
\newblock Kluwer, 1998.

\bibitem{Wan02}
H.~Wansing.
\newblock Sequent systems for modal logics.
\newblock {\em Handbook of Philosophical Logic}, 8:61--145, 2002.

\end{thebibliography}

	
\end{document}